\newcommand{\ket}[1]{\left\vert #1 \right\rangle}
\newcommand{\bra}[1]{\left\langle #1 \right\vert}
\newcommand{\ketbra}[2]{\vert#1\rangle\!\langle#2\vert}
\newcommand{\cP}[0]{\mathcal{P}}
\newcommand{\cH}[0]{\mathcal{H}}
\newcommand{\cA}[0]{\mathcal{A}}
\newcommand{\cL}[0]{\mathcal{L}}
\newcommand{\cI}[0]{\mathcal{I}}
\newcommand{\cZ}[0]{\mathcal{Z}}
\newcommand{\cX}[0]{\mathcal{X}}
\newcommand{\cN}[0]{\mathcal{N}}
\newcommand{\bC}[0]{\mathbb{C}}
\newcommand{\bR}[0]{\mathbb{R}}
\newcommand{\NP}[0]{\mbox{\bf NP}}
\newcommand{\QMA}[0]{\mbox{\bf QMA}}
\newcommand{\QCMA}[0]{\mbox{\bf QCMA}}
\newcommand{\Complete}[0]{\mbox{\bf complete}}
\newtheorem{definition}{Definition}
\newtheorem{lemma}[definition]{Lemma}
\newtheorem{fact}[definition]{Fact}
\newtheorem{remark}[definition]{Remark}
\newtheorem{theorem}[definition]{Theorem}
\newtheorem{corollary}[definition]{Corollary}
\newenvironment{proof}{{\bf Proof:}}{\hfill\rule{2mm}{2mm}}
\begin{document}
\title{Commuting Local Hamiltonian Problem on 2D beyond qubits}


\author[1]{Sandy Irani~\thanks{irani@ics.uci.edu}}
\affil[1]{Computer Science Department, University of California, Irvine, CA, USA}

\author[2]{Jiaqing Jiang~\thanks{jiaqingjiang95@gmail.com}}
\affil[2]{Computing and Mathematical Sciences, California Institute of Technology, Pasadena, CA, USA}

\maketitle
\thispagestyle{empty}


\begin{abstract}

We study the complexity of local Hamiltonians in which the terms pairwise commute. 
Commuting local Hamiltonians (CLHs) provide a way to study the role of
non-commutativity in the complexity of quantum systems and touch on many fundamental aspects of quantum computing and many-body systems, such as the quantum PCP conjecture and the area law.
Despite intense research activity since Bravyi and Vyalyi introduced the CLH problem  two
decades ago \cite{bravyi2003commutative}, its complexity remains largely unresolved; it is only known to lie in NP for a few special cases. 
Much of the recent research has focused on the physically motivated 2D case, where particles are located on vertices of a 2D grid and each term 
acts non-trivially only on the particles on a single square (or plaquette) in the lattice. In particular, Schuch~\cite{schuch2011complexity} showed that
the CLH problem on 2D with qubits is in NP.  Aharonov, Kenneth, and Vigdorovich~\cite{aharonov2018complexity}
then gave a constructive version of this result, showing an explicit algorithm to construct a ground state.
Resolving the complexity of the 2D CLH problem with higher dimensional particles has been elusive.
We prove two results for the CLH problem in 2D:
\begin{itemize}
	\item We give a non-constructive proof that the CLH problem in 2D with qutrits is in $\NP$. 
As far as we know, this is the first result for the commuting local Hamiltonian problem on 2D  beyond qubits. Our key lemma works for general qudits and might give new insights for tackling the general case. 
	\item We consider the factorized case, also studied in \cite{bravyi2003commutative}, where each term is a tensor product of single-particle Hermitian operators. We show that a factorized CLH in 2D, even on particles of arbitrary finite dimension,  is equivalent to a direct sum of qubit stabilizer Hamiltonians. This implies that the factorized 2D CLH problem is in $\NP$. This class of CLHs contains the Toric code as an example.

\end{itemize}
\end{abstract}

\pagebreak 
\thispagestyle{empty}

\tableofcontents
\thispagestyle{empty}

\pagebreak
\setcounter{page}{1}

\section{Introduction}

Understanding the properties of ground states of local Hamiltonians is a central problem in condensed matter physics.
Kitaev \cite{kitaev2003fault} famously formulated this problem as a decision problem, amenable to
analysis from the perspective of computational complexity, by
defining the local Hamiltonian problem (LHP), which asks whether the ground energy of a local Hamiltonian is below one threshold or greater than another. 
The LHP can be interpreted as the quantum generalization of the Boolean Satisfiability problem (SAT), where for SAT, all the terms $\{h_i\}_i$ are diagonal in the computational basis. 
Kitaev showed that the LHP is  $\QMA$-$\Complete$~\cite{kempe2006complexity}, a quantum analog of the  
 Cook–Levin theorem showing that SAT is $\NP$-complete. 
 Formally, a $k$-local Hamiltonian $H=\sum_i h_i$
is a Hermitian operator on $n$ qudits, where each term $h_i$ only acts on $k$ qudits. Given two parameters $a,b$ with $b-a\geq\frac{1}{poly(n)}$, the LHP is to determine whether the ground energy of $H$, namely the minimum eigenvalue, is smaller than $a$ or greater than $b$.

It is widely believed that $\QMA\neq \NP$, which would imply that the LHP is strictly harder than the SAT.
A natural question then to ask is what properties make quantum SAT (LHP) harder than classical SAT? Alternatively, what additional constraints can make LHP easier than $\QMA$-$\Complete$? 
An intermediate model that sits in between classical and quantum Hamiltonians is the Commuting Local Hamiltonian problem (CLHP),
in which the terms of the local Hamiltonian pairwise commute.
Commuting local Hamiltonians 
form an important sub-class of Hamiltonians of physical interest
and were first studied from the complexity perspective by Bravyi and Vyalyi ~\cite{bravyi2003commutative}. 
Compared to the general LHP,
the idea that CLHs should be more classical in nature stems from the
intuition~\cite{folland1997uncertainty} in quantum physics, which
suggests that it is the non-commutativity that makes the quantum world different from classical.
Moreover, since all the terms of a CLH
can be simultaneously diagonalized by a single basis, every eigenstate of the full
Hamiltonian can be specified up to degeneracies by the corresponding eigenvalue for each term.
The fact that the eigenstates have a classical specification
suggests that the structure of the eigenbasis is more classical in nature. 
However, this fact is not enough to prove CLHP is in $\NP$, since the eigenstates themselves can potentially be highly entangled, as is true for the famous example
of Kitaev's toric code \cite{toric}. Thus, a quantum witness for CLHP may still be beyond a classical description.

Commuting Hamiltonians provide a lens to study many fundamental aspects of quantum computing and many-body systems.
For example, the stabilizer framework~\cite{gottesman1997stabilizer} is the basis for most error-correcting codes, and stabilizer codes can be seen as the
ground states of commuting Hamiltonians.
Commuting Hamiltonians are commonly used as a test ground for attacking difficult problems
such as the quantum-PCP conjecture \cite{AE11, hastings2012trivial, Hastings2013}, NLTS conjecture~\cite{anshu2022nlts},    Gibbs states preparation~\cite{kastoryano2016quantum} and fast thermalization~\cite{bardet2023rapid,capel2020modified}. In particular,
commuting Hamiltonians could potentially provide insight into the area law and its connection to the efficient expressibility
of ground states. 
The Area Law states that for ground states of gapped Hamiltonians on a finite-dimensional lattice,   the entanglement entropy between two regions of ground states scales with the boundary between the regions as opposed to their volume.
The area law is known to hold for 1D Hamiltonians and is known for 2D only under the assumption that the 
Hamiltonian is {\it uniformly gapped} \cite{2Darea}.
It is widely hoped that proving the area law will lead to insight into whether such ground states can be efficiently expressed
or constructed by quantum circuits. As a subclass of LHP,  it is well known that the area law holds for CLHs, and yet, even in 2D, we do not
know whether the ground states of CLHs can be efficiently represented or constructed.

   
\subsection{Previous Results}\label{sec:prev}

Despite two decades of study, the complexity of CLHP still remains open, with a few special cases known to be in $\NP$
(\cite{bravyi2003commutative, aharonov2013commuting, aharonov2011complexity, schuch2011complexity, aharonov2018complexity, hastings2012matrix}).  
  Bravyi and Vyalyi 
 initiated this line of work, showing that
    qudit $2$-local CLHP is in $\NP$ ~\cite{bravyi2003commutative}.  
    Their proof uses a decomposition lemma based on the theory of finite-dimensional $C^*$-algebra representations.  All subsequent work on the CLHP has made use of this framework. 
    
  Aharonov and Eldar~\cite{aharonov2011complexity} extended the results to the $3$-local case for qubits and qutrits. Specifically, they proved that 3-local qubit-CLHP is in $\NP$. They also proved that  3-local qutrit CLHP is in $\NP$ on the Nearly Euclidean interaction graphs. 
  All the above results are proved by showing that there is a \textit{trivial
 ground state}, i.e. the ground state can be prepared by a constant depth quantum circuit. This constant depth circuit is the $\NP$ witness and the energy of this trivial ground state can be checked in classical polynomial time by a light-cone argument. 
  However, for 4-local CLHP, even if the interaction graph is a 2D square lattice, there are systems like the Toric code which has no trivial ground states. 
     
 In this work, we mainly focus on 4-local CLHP on a 2D square lattice with qudits (abbreviated as qudit-CLHP-2D). Specifically, consider a 2D square lattice as in Fig.~\ref{fig:introfig}(a) with a qudit $q$   on each vertex
  and on each plaquette $p$, there is a Hermitian term acting on the qudits on its four vertices. With some abuse of notations, we also use $p$ to denote the Hermitian term on the plaquette $p$. The qudit-CLHP-2D is given an $n$-qudit Hamiltonian
 $H=\sum_p p$ where $\{p\}_p$ are commuting, two parameters $a,b$ where  $b-a\geq 1/poly(n)$, decide whether the minimum eigenvalue of $H$ is smaller than $a$ or greater than $b$.
  There is an alternate 2D setting in which qudits are placed on the edges and there are Hermitian terms on ``plaquettes" and ``stars". The two settings, i.e. qudits on vertices or qudits on edges, are equivalent when the underlying graph is a 2D square lattice. ( See Appendix \ref{appendix:edge_vertex}.)
 
 \begin{figure}[htb]
    \centering
	\begin{subfigure}[b]{0.33\textwidth}
		\begin{tikzpicture}
		\draw[step=1cm,black,thick] (0,0) grid (3,3);
		\filldraw [white] (1,1) circle [radius=5pt]
		(2,1) circle [radius=5pt]
		(1,2) circle [radius=5pt]
		(2,2) circle [radius=5pt];
		\draw(1,1) node{$q_2$};
		\draw(2,1) node{$q_3$};
		\draw(1,2) node{$q_1$};
		\draw(2,2) node{$q_4$};
		\draw(1.5,1.5) node{\large $p$};
		\end{tikzpicture}	\centering\caption{}
	\end{subfigure}	\hfill
	\begin{subfigure}[b]{0.33\textwidth}
		\begin{tikzpicture}
		\draw[step=1cm,black,thick] (0,0) grid (3,3);	
		\filldraw [white] 
		(1,2) circle [radius=5pt];
		\draw(1,2) node{$q$};
		\draw(0.5,1.5) node{\large $p_1'$};
		\draw(1.5,1.5) node{\large $p_2$};
		\draw(0.5,2.5) node{\large $p_1$};
		\draw(1.5,2.5) node{\large $p_2'$};
	\end{tikzpicture}  \centering\caption{}
	\end{subfigure}\hfill
		\begin{subfigure}[b]{0.33\textwidth}
		\begin{tikzpicture}
		\draw[step=1cm,black,thick] (0,0) grid (3,3);	
		\draw [thick] 
		(0.8,1.2) circle [radius=2pt]
		(0.8,1.8) circle [radius=2pt]
		(1.2,1.2) circle [radius=2pt]
		(1.2,1.8) circle [radius=2pt]
		(2.2,1.2) circle [radius=2pt]
		(2.2,1.8) circle [radius=2pt]
		(1.8,1.2) circle [radius=2pt]
		(1.8,1.8) circle [radius=2pt];
		\draw (0.2,1.2) -- (0.7,1.2)
		(0.2,1.8) -- (0.7,1.8)
		(0.8,1.3) -- (0.8,1.7)
		(1.2,1.3) -- (1.2,1.7)
		(1.3,1.2) -- (1.7,1.2)
		(1.3,1.8) -- (1.7,1.8)
		(1.8,1.3) -- (1.8,1.7)
		(2.3,1.2) -- (2.8,1.2)
		(2.2,1.3) -- (2.2,1.7)
		(2.3,1.8) -- (2.8,1.8);
		\filldraw [white] (1,1) circle [radius=5pt]
		(2,1) circle [radius=5pt]
		(1,2) circle [radius=5pt]
		(2,2) circle [radius=5pt];
		\draw(1,1) node{\small $q_2$};
		\draw(2,1) node{\small $q_3$};
		\draw(1,2) node{\small $q_1$};
		\draw(2,2) node{\small $q_4$};
		\draw(1.5,1.5) node{\small  $p_b$};
		\draw(0.5,1.5) node{\small  $p_a$};
		\draw(2.5,1.5) node{\small  $p_c$};
	\end{tikzpicture}  \centering\caption{}
	\end{subfigure}
    	\caption{(a) Definition of the qudit-CLHP-2D. (b) The four terms which involve $q$ are $p_1,p_2,p_1',p_2'$. (c) An example of 1D structure in Schuch's~\cite{schuch2011complexity} proof. In the figure the symbol  $\circ$ represents the plaquette term acts non-trivially on the qudit. Specifically here among the 9 plaquettes in (c), only terms $p_a,p_b,p_c$ acts non-trivially on at least one of $q_1,q_2,q_3,q_4$. }\label{fig:introfig}
\end{figure}
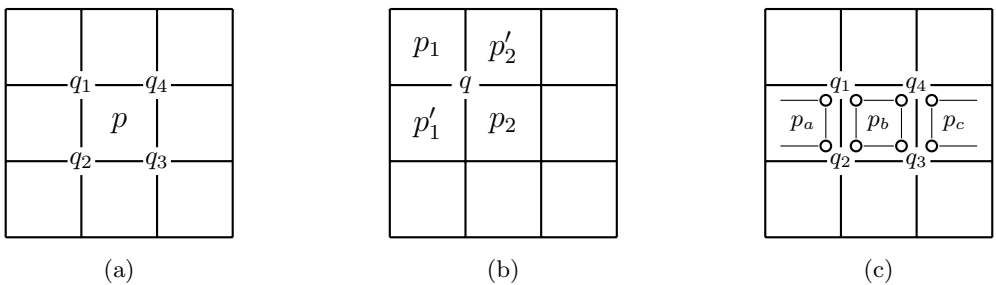
  
    Along this line, Schuch~\cite{schuch2011complexity} first proved that the 
qubit-CLHP-2D is in $\NP$.  Schuch's proof provides a witness showing that a low-energy
state exists without giving an explicit description of the state. Indeed his proof leaves open the question of whether an explicit description of the state even exists. 
 Aharonov, Kenneth, and Vigdorovich~\cite{aharonov2018complexity} later gave a constructive proof, showing that after some transformation, the qubit-CLHP-2D is equivalent to the Toric code permitting boundaries.  
  We say a proof is \textit{constructive} if 
    The $\NP$ prover  shows that the ground energy is below $a$
by providing a circuit for preparing the ground state. 
 A \textit{boundary} roughly means a qubit $q$ where one of the four terms involving this qubit, i.e. $p_1,p_2,p_1',p_2'$ in Fig.~\ref{fig:introfig}(b), 
  acts trivially on $q$. 
  A Hamiltonian is equivalent to the
 Toric code permitting boundaries if after choosing an appropriate basis for each qubit, terms $\{p\}_p$ in the Hamiltonian act similarly as $X$ or $Z$ on the non-boundary qubits. Both proofs ~\cite{schuch2011complexity,aharonov2018complexity} for the qubit-CLHP-2D heavily depend on the restrictions to the qubits, since the induced algebra   of a term $p$ on a qubit $q$ 
 can only have a very limited structure.  The limited structure for the qubits case does not hold for higher dimensional particles, not even for qutrits.

 Hastings~\cite{hastings2012matrix} proved a subclass of the qudit-CLHP-2D is in $\NP$ with some restrictive assumptions.  Roughly speaking, he grouped all qudits on the same vertical line as a supersite, then viewed the 2D lattice as a 1D line,  which reduced the qudit-CLHP-2D  back to the ``qudit'' 2-local  case~\cite{bravyi2003commutative}. However, the main problem is that those supersites are not really qudit of constant dimension. In fact, the dimension of the supersite is sub-exponential in the number of qudits. Thus Hastings further assumes that several technical conditions hold, like certain operators in the proof can be efficiently represented by matrix product operators of bounded dimension.

 Besides the qudit-CLHP-2D,  Bravyi and Vyalyi~\cite{bravyi2003commutative} also addressed the special case of the commuting Hamiltonian problem 
 where all the terms are \textit{factorized} (CHP-factorized).
 That is $H=\sum_i h_i$,
 and each term $h_i$ is a tensor product of single-qudit Hermitian operators.
 For example,  the Toric code is an instance of CHP-factorized, since
 each term is $X^{\otimes 4}$ or $Z^{\otimes 4}$. In general, in CHP-factorized each $h_i$ is not necessarily local, and there are no constraints on the underlying interaction graph.  
  Bravyi and Vyalyi give a non-constructive proof to show that the qubit-CHP-factorized is in $\NP$.    
  It is an open question whether their proof can be generalized to higher dimensional particles (qudit-CHP-factorized) or whether their proof for qubits can be made constructive.


  
 All of the above results prove that subclasses of CLHP are in $\NP$. On the other hand, Gosset, Mehta, and Vidick~\cite{gosset2017qcma} give a result that indicates CLHP might be harder than $\NP$, or even as hard as the general LHP. They
 show that the ground space connectivity problem of commuting local Hamiltonian is $\QCMA$-complete, which is as hard as the ground space connectivity problem for the general local Hamiltonian. 

 \subsection{Main results and proof overview}

\subsubsection{Main results}
In this work, we give two new results on the qudit-CLHP-2D. The family of Hamiltonians considered in both results contains the Toric code
as a special case. Theorem \ref{thm:qutrit_NP} extends the results for the qubit-CLHP-2D~\cite{schuch2011complexity,aharonov2018complexity} to qutrits.

\begin{theorem}\label{thm:qutrit_NP}
     The qutrit-CLHP-2D is in $\NP$.
\end{theorem}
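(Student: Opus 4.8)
The plan is to build the $\NP$ witness around a choice of eigenvalue $\lambda_p$ for each plaquette term $p$, together with a polynomial amount of combinatorial ``decomposition data'' certifying that the common eigenspace $\bigcap_p V_p^{\lambda_p}$ is nonempty. Since the plaquette terms commute, the ground energy equals $\min\sum_p\lambda_p$ over jointly realizable choices, and $\lambda_p\in\mathrm{spec}(p)$ is trivial to check because $p$ touches only four qutrits; so the entire difficulty is to certify that the commuting projectors $\{\Pi_p\}$ onto the spaces $V_p^{\lambda_p}$ are frustration-free. The analytic engine is the Bravyi--Vyalyi $C^*$-algebra structure lemma \cite{bravyi2003commutative} applied qudit by qudit: for a qutrit $q$ lying in plaquettes $p_1,p_2,p_1',p_2'$ (Fig.~\ref{fig:introfig}(b)), commutativity forces a decomposition $\cH_q\cong\bigoplus_s\bigotimes_j\cH_{q,s}^{(j)}$ in which each tensor factor is ``owned'' by one term touching $q$; since $\dim\cH_q=3$, in every sector $s$ at most one factor is nontrivial, of dimension $2$ or $3$.

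First I would organize the reduction as a branching process. Whenever the algebra induced by the terms around a qutrit $q$ splits as a nontrivial direct sum, the witness records which summand $s$ to project onto, restricting $\cH_q$ and only simplifying the neighboring terms; and whenever a factor of $\cH_q$ is owned by one term but acted on trivially by the ``opposite'' term, those two terms act on disjoint factors and can be decoupled. The key lemma, stated and proved for general qudits, asserts that iterating these moves terminates --- a potential function such as the total local dimension, or the number of qudit--plaquette incidences carrying a nontrivial action, strictly decreases at each step --- and leaves a reduced instance that is a disjoint union of pieces of three restricted types: (i) purely diagonal qudits, whose spaces split into one-dimensional sectors common to all their terms; (ii) qubit-like tensor factors arranged along one-dimensional chains; and (iii) qubit-like factors in a plaquette/star pattern. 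Establishing this lemma --- in particular showing that the block algebras $\bC\oplus M_2(\bC)$ appearing on qutrits cannot persist in any more complicated pattern --- is the technical heart of the proof, and general qudits are the natural setting for the induction.

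Given the reduced form, finishing for qutrits is a short case analysis, possible because the $C^*$-subalgebras of $M_3(\bC)$ are few --- up to the tensor refinement, $\bC I$, the diagonal algebra $\bC\oplus\bC\oplus\bC$, $\bC\oplus M_2(\bC)$, and $M_3(\bC)$. On the diagonal pieces the surviving constraints are classical and the witness supplies a satisfying assignment, checked plaquette by plaquette. On the one-dimensional chains, frustration-freeness is a one-dimensional commuting Hamiltonian question, hence in $\NP$ (e.g.\ by reduction to the $2$-local case \cite{bravyi2003commutative} after grouping constantly many qutrits into a bounded-dimension supersite). The genuinely two-dimensional leftover, after the basis change dictated by the decomposition, is a qubit-CLHP-2D instance of the kind handled by Schuch \cite{schuch2011complexity} and Aharonov--Kenneth--Vigdorovich \cite{aharonov2018complexity} (Toric code permitting boundaries), whose results we invoke. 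In every branch the witness is the tuple $\{\lambda_p\}$, the recorded sequence of sector choices, and a certificate for the residual classical, $1$D, or qubit-$2$D problem --- all polynomial in size and verifiable locally or by a light-cone argument.

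The main obstacle is the key lemma: controlling how the block structures $\bC\oplus M_2(\bC)$ at neighboring qutrits interact. This phenomenon has no qubit analogue --- on a shared qubit a plaquette term acts fully, trivially, or diagonally, which is exactly why the proofs of \cite{schuch2011complexity,aharonov2018complexity} do not extend --- so one must argue directly that a qutrit carrying a genuine $M_2$ block either has that block opposite a trivially-acting term (and is split off) or is pushed by commutativity with its neighbors into a configuration that still admits a decomposition, and that this process cannot cycle. Getting the potential-function and termination bookkeeping right, and ruling out a ``frustrated'' qutrit pattern with no classical description, is the crux; it is also the reason the argument stops at qutrits, since for $d\ge 4$ the subalgebra lattice of $M_d(\bC)$ is much richer and the termination argument would have to be substantially strengthened.
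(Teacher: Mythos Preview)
Your outline has the right shape---reduce to commuting projectors, apply the structure lemma at each qutrit, run a self-reduction, verify the residual instance---and you correctly identify the $(1,2)$ block $\bC\oplus M_2(\bC)$ as the new obstruction. But your branching process has a genuine gap at the step you yourself flag as the ``main obstacle.'' Your move (a), projecting onto a summand when the induced algebra splits, is the standard \emph{separable} reduction and requires \emph{all four} plaquette terms around $q$ to respect the decomposition. Now suppose $\cA_{p_1}^q$ induces the $(1,2)$ decomposition $\cH^q=\cH_1\oplus\cH_2$. Then $p_2$ respects it (it shares only $q$ with $p_1$), but $p_1'$ and $p_2'$ each share \emph{two} qudits with $p_1$, so $[p_1,p_1']=0$ does not force $\cA_{p_1'}^q$ to commute with $\cA_{p_1}^q$, and in general neither $p_1'$ nor $p_2'$ keeps $\cH_1\oplus\cH_2$ invariant. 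Then $q$ is not separable, there is no tensor factor to strip, and both of your moves stall. Your sketch asserts that commutativity with neighbors ``pushes'' such a qutrit into a decomposable configuration, but supplies no mechanism; a potential-function argument cannot substitute for one.

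The paper's resolution is a reduction move you do not have. It defines a qudit to be \emph{semi-separable} if all terms \emph{but one} respect a nontrivial decomposition of $\cH^q$, and proves a ``decrease dimension and rounding'' lemma: one may still restrict to a summand $\cH^q_j$, replacing the single offending term $h_0$ not by $\Pi_j h_0\Pi_j$ (which is no longer a projector) but by the projector onto its $1$-eigenspace; a trace-positivity argument shows this non-constructive rounding preserves yes/no instances. On the 2D lattice at most two terms can fail to respect the decomposition induced by $\cA_{p_1}^q$, so this relaxation is exactly what is needed. With semi-separable qutrits eliminated, the paper proves (by showing that the bad $(1,2)\times(1,2)$ and $(1,2)\times(3)$ patterns force a semi-separable qutrit somewhere, possibly at a \emph{neighbor} rather than $q$ itself) that only $(1,1,1)\times(\cdot)$ and $(3)\times(3)$ remain; these feed directly into Schuch's removable/one-dimensional argument, with no detour through a genuine qubit-CLHP-2D instance as in your type~(iii).
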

 As far as we know, Theorem~\ref{thm:qutrit_NP} is the first result for CLHP on 2D lattice beyond qubits.
 As noted, the results for the qubit-CLHP-2D~\cite{schuch2011complexity,aharonov2018complexity} heavily depend on the limited dimension of qubits --- the induced algebras on qubits have a very limited structure --- which does not hold for qutrits. 
 Our key idea to circumvent this problem introduces a  technique to decrease the dimension of qudits. Specifically, denote the qudit-CLHP-2D instance as $H=\sum_p p$ and the Hilbert space of a qudit $q$ as $\cH^q$.  
 Under certain conditions, we observe that it suffices to consider a new instance of
 qudit-CLHP by (1)  Restricting $\cH^q$  to a subspace of smaller dimension and (2)  Constructing a new Hamiltonian by
 projecting all $p$ to the smaller subspace, and then rounding all
  non-1 eigenvalues of the projected terms to 0.
  Moreover, the new instance preserves the correct answer in that ``no" instances are converted to ``no" instances, and ``yes" instances are converted to ``yes" instances.
 Thus, we show that our ``decrease dimension and rounding" method (Lemma \ref{lem:decrease}) can be interpreted as a non-constructive self-reduction for the qudit-CLHP. Here self-reduction means we reduce the original qudit-CLHP to a new qudit-CLHP where some qudits have strictly smaller dimensions.
We emphasize that the key lemma, Lemma \ref{lem:decrease}, works for the qudit-CLHP rather than only for qutrits,  even without 2D geometry. This lemma might be of independent interest and bring new insights to tackle the general CLHP. We will explain this in more detail in the proof overview.

 We next consider the special case of qudit-CLHP-2D
 where all the terms are factorized (qudit-CLHP-2D-factorized).
Our second result, namely Theorem \ref{thm:fa}, is a constructive proof showing that qudit-CLHP-2D-factorized is in $\NP$.
Although we do not give the details here, our proof could be considerably simplified to provide a non-constructive
version based on the ideas of \cite{bravyi2003commutative} and \cite{schuch2011complexity}.
Here we give a stronger constructive proof that characterizes the structure of the ground space. 

\begin{theorem}[Informal version of Theorem \ref{thm:stronger}]\label{thm:fa} 
The Hamiltonian in the qudit-CLHP-2D-factorized is equivalent to a direct sum of qubit stabilizer Hamiltonian. In particular, a
factorized 2D commuting local Hamiltonian always has
 a ground state which is equivalent to qubit stabilizer state.	This implies
    qudit-CLHP-2D-factorized  is in $\NP$.  \end{theorem}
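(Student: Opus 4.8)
The plan is to find, on each qudit, a local change of basis after which the factorized commuting structure forces every plaquette term to act, inside a suitable invariant subspace, as a real scalar times a Pauli operator on at most two qubits; the whole Hamiltonian then splits as a direct sum of commuting‑Pauli (stabilizer) Hamiltonians, whose ground spaces are stabilizer code spaces. Throughout, write a plaquette term as $p=\bigotimes_{q\in p}A^{(p)}_q$ with $A^{(p)}_q$ Hermitian on $\cH^q$; discarding any $p$ with a vanishing tensor factor, we may assume every $A^{(p)}_q\neq 0$.

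First I would extract the local commutation data. Two distinct plaquettes share either a single vertex (diagonal neighbours) or a lattice edge (two vertices), and in both cases the leftover tensor factors of $p$ and of $p'$ are supported on pairwise disjoint qudits, hence multiply to a nonzero operator that can be cancelled from $[p,p']=0$. For diagonal neighbours sharing $q$ this gives $[A^{(p)}_q,A^{(p')}_q]=0$ outright; for an edge $\{u,v\}$ it gives $A^{(p)}_uA^{(p')}_u\otimes A^{(p)}_vA^{(p')}_v=A^{(p')}_uA^{(p)}_u\otimes A^{(p')}_vA^{(p)}_v$. Using uniqueness of tensor factorizations up to a scalar together with Hermiticity, I would show that on each of $u$ and $v$ the two single‑qudit operators commute or anticommute: if $A^{(p)}A^{(p')}=\omega\,A^{(p')}A^{(p)}$ for a phase $\omega$, then a short trace / positive‑semidefiniteness argument forces $\omega\in\{\pm1\}$ unless both products vanish. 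The degenerate cases in which a product $A^{(p)}_qA^{(p')}_q$ is zero must be tracked, but they only produce commuting pairs.

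Next, fix a qudit $q$. At most four plaquette terms touch it, their single‑qudit restrictions pairwise commute or anticommute, and the anticommutation graph sits inside the $4$‑cycle formed by the four incident plaquettes linked by the lattice edges through $q$ (the two diagonal pairs always commute), so it is bipartite with both sides of size at most two. The real $*$‑algebra these Hermitian operators generate is a twisted group algebra of a subgroup of $\mathbb{Z}_2^{\,4}$; decomposing it into full matrix blocks via the structure theory of finite‑dimensional $*$‑algebras — the same framework used by Bravyi and Vyalyi — yields an orthogonal, $\{A^{(p)}_q\}$‑invariant decomposition $\cH^q=\bigoplus_s\cH^q_s$ with $\cH^q_s\cong(\bC^2)^{\otimes k_{q,s}}\otimes\bC^{m_{q,s}}$, $k_{q,s}\le 2$, on which each $A^{(p)}_q$ acts as $c\,(P\otimes I)$ with $c\in\bR$ and $P$ a Pauli on those $k_{q,s}$ qubits. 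I expect this per‑qudit structural step, made uniform over the degenerate kernel/zero‑product configurations and with the two‑qubit bound verified throughout, to be the main obstacle; the rest is bookkeeping.

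Finally I would globalize. The tensor product over all qudits of these decompositions refines $\cH=\bigotimes_q\cH^q$ into sectors $\cH_{\vec s}=\bigotimes_q\cH^q_{s_q}$, each invariant under every plaquette term, and inside such a sector $p$ restricts to $\big(\prod_{q\in p}c^{(p)}_{q,s_q}\big)\big(\bigotimes_{q\in p}P^{(p)}_{q,s_q}\big)\otimes I$, a real multiple of a Pauli product on the $K_{\vec s}=\sum_q k_{q,s_q}=poly(n)$ qubits of that sector, tensored with the identity on the leftover "classical" factor; commutation of plaquette terms descends to commutation of these Paulis. Hence, conjugating $H$ by the local unitary $\bigotimes_q U_q$ that realizes the per‑qudit block decompositions, $H$ becomes $\bigoplus_{\vec s}(H_{\vec s}\otimes I)$ with each $H_{\vec s}$ a sum of commuting Hermitian Paulis with real coefficients; its ground space (equivalently, the ground space of an honest stabilizer Hamiltonian obtained by choosing the energy‑minimizing signs) is a stabilizer code space and in particular contains a stabilizer state, which gives the claimed equivalence. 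For $\NP$ membership and constructivity, the prover names the optimal sector $s_q$ on each qudit (at most $\dim\cH^q$ choices, hence $poly(n)$ bits overall), supplies stabilizer generators for the ground space of the resulting qubit stabilizer Hamiltonian on the $poly(n)$ qubits of the chosen global sector, and gives the poly‑size Clifford circuit preparing a corresponding stabilizer state; the verifier recomputes each restricted term, checks it is the claimed Pauli, checks consistency of the stabilizer, and evaluates the energy — all in classical polynomial time by the stabilizer formalism. Composing that Clifford circuit with the fixed local isometries $\cH^q_{s_q}\hookrightarrow\cH^q$ yields an explicit ground‑state preparation circuit, so the proof is constructive.
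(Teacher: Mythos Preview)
There is a genuine gap. You assert that for two edge-adjacent plaquettes $p,p'$ sharing vertices $u,v$, the single-qudit factors satisfy $A^{(p)}_uA^{(p')}_u=\pm A^{(p')}_uA^{(p)}_u$ and likewise on $v$. This holds only when the global product $pp'\neq 0$. If $A^{(p)}_uA^{(p')}_u=0$ (so indeed that pair commutes on $u$), the relation $[p,p']=0$ imposes \emph{no constraint whatsoever} on $A^{(p)}_v$ versus $A^{(p')}_v$: they need neither commute nor anticommute. This is exactly the ``singular'' case of Definition~\ref{def:singular} and Corollary~\ref{cor:pairox}, and it is the main obstacle to the whole result. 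Your sentence ``the degenerate cases\ldots only produce commuting pairs'' is therefore false on the vertex $v$, so the premise of your step~2 --- that the four factors on each qudit pairwise commute or anticommute, hence generate a twisted $\mathbb{Z}_2$ group algebra --- is unjustified, and the conclusion that each factor becomes a scalar times a Pauli in each block does not follow.

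The paper closes this gap by proving (Lemma~\ref{lem:all_regular}, via the case analysis of Fig.~\ref{fig:pattern}) that on the 2D lattice any singular pair forces some qudit to be \emph{separable}; one then recursively restricts to invariant subspaces of separable qudits (Theorem~\ref{thm:stronger}) until none remain, and only in those leaves does the commute/anticommute dichotomy hold so that Lemma~\ref{lem:stabilizer} applies. Your one-shot, per-qudit structure-lemma decomposition does not achieve this: inside a full-matrix block the individual Hermitian generators are arbitrary, not automatically Pauli-like, and restricting one qudit can render a different qudit separable in a way your simultaneous local decomposition never sees. The missing ingredient is precisely the 2D combinatorial argument that singular commutation cannot survive once all separable qudits have been recursively eliminated.
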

We first briefly explain terminologies in Theorem \ref{thm:fa}. 
We say a commuting Hamiltonian $H=\sum_i h_{i}$ on space $\cH_*:=\otimes_q \cH^q_*$ is equivalent to a qubit stabilizer Hamiltonian, if (1) For each $\cH^q_*$, by choosing an appropriate basis,  $\cH^q_*$ is factorized as a tensor of Hilbert spaces of dimension $2$. Thus each $\cH^q_*$ can be interpreted as several qubits. We allow $dim(\cH^q_*)=1$ which corresponds to $0$ qubit. (2) Each term $h_{i}$ acts as a Pauli operator up to phases, with respect to the basis of those ``qubits".  
We say a subspace $\cH_*\subseteq \cH:=\otimes_q \cH_q$ is \textit{simple}, if $\cH_*$  is a tensor product of subspaces of each qudit, i.e. $\cH_*=\otimes_q \cH^q_*$. 
We say a commuting Hamiltonian $H$ on $n$-qudit space $\cH=\otimes_q\cH^q$ is equivalent to a direct sum of qubit stabilizer Hamiltonian, if the $\cH$ is a direct sum of simple subspaces $\{\cH_*\}_*$, such that  $\forall i, h_i$ keeps each  $\cH_*$ invariant, and $H$
is equivalent to qubit stabilizer Hamiltonian on  $\cH_*$. 

Although one might conjecture that factorized commuting Hamiltonians (CHP-factorized) are equivalent to a direct sum of stabilizer Hamiltonians even when not restricted to 2D, this is still an open question even for the qubit case~\cite{bravyi2003commutative}.   
To illustrate the difficulty, consider two factorized terms 
acting on two qudits $q$ and $q'$: $h:=h_q\otimes h_{q'}$ and  $\hat{h}:=\hat{h}_q\otimes \hat{h}_{q'}$. 
If $h_q\hat{h}_q \neq 0$ and $h_{q'}\hat{h}_{q'} \neq 0$, then it
must be the case that the factors in each individual qudit must commute or anti-commute
which gives rise to a stabilizer-like structure.
By contrast, if $h_q\hat{h}_q=0$, then $[h,\hat{h}]=0$ for any choice of $h_{q'},\hat{h}_{q'}$.
This means that $h_{q'}$ and $\hat{h}_{q'}$ can have an arbitrary relationship to each other
and the  two commuting terms $h,\hat{h}$ may look very different from stabilizers.
 This second possibility suggests, alternatively,
 that factorized CLHs in 2D
 could have a different topological order from stabilizer Hamiltonians. More precisely, one may conjecture that there is a factorized Hamiltonian, such that no ground state can be prepared by applying a constant depth quantum circuit to a stabilizer state. 
 We show  a negative answer to this conjecture, 
 by proving that the qudit-CLHP-2D-factorized is equivalent to a direct sum of qubit stabilizer Hamiltonian.

\subsubsection{Overview for Theorem \ref{thm:qutrit_NP}.}\label{sec:overview1}

We start by reducing the more general CLH problem to a slightly restricted case where the commuting terms are projections and the energy lower bound $a$ is equal to $0$. We will argue that if the more restricted version is in $\NP$ then the more general CLHP is also in $\NP$.
Consider an instance of the more general problem with Hermitian terms $\{h_i\}_i$ and bounds $a$ and $b$, where $b -  a \ge 1/\mbox{poly}(n)$.
The $\NP$ prover can provide a vector describing the energy eigenvalue $\lambda_i$ for each individual term $h_i$ such that $\sum_{i} \lambda_i  \le a$. 
Then the verifier can replace each term $h_i$ with $\hat{h}_i = I - \Pi_{i, \lambda_i}$, where $\Pi_{i, \lambda_i}$ is the projection onto the eigenspace of $h_i$ corresponding to eigenvalue $\lambda_i$.
The new instance has a state $\ket{\psi}$ where $\hat{h}_i \ket{\psi} = 0$ for all $i$ if and only if
$h_i \ket{\psi} = \lambda_i \ket{\psi}$ for all $i$. Since the $\hat{h}_i$ are all commuting projectors, the 
eigenvalues of the Hamiltonian are non-negative integers.
Thus, the verifier can set the new $b$ to be equal to $1$, resulting in a promise gap of $1$.
Therefore, when describing the proof of Theorem \ref{thm:qutrit_NP}, we shall assume the terms are projections and that the question is whether there is a $\ket{\psi}$ where $\hat{h}_i \ket{\psi} = 0$ for all $i$
(i.e. a \textit{frustration-free} ground state). 
Note that this reduction does not work for the factorized case because even if the $h_i$ is factorized, the resulting $\hat{h}_i$ is not necessarily factorized.

We start by introducing the framework of induced algebras. The basic ideas are sketched here and
specified in more detail in Section \ref{sec:C}. 
Denote $\cL(\cH)$ as the set of all operators on a Hilbert space $\cH$. A $C^*$-algebra
is any algebra $\cA\subseteq \cL(\cH)$ which is also closed under the $\dagger$ operations and includes the identity.	Consider a Hermitian term $p$ acting $\cH\otimes \cH^c$. The operator $p$ can be decomposed
as
$$p=\sum_{i,j}  h_{ij}^{\cH} \otimes |i\rangle\langle j|^{\cH^c}.$$
The induced algebra of $p$ on space $\cH$, denoted as $\cA^\cH_p$, is the $C^*$-algebra generated by $\{I_{\cH}\}\cup  \{h_{ij}^{\cH}\}_{ij}$. Note that the particular decomposition of $p$ is not critical other
than the fact that the $|i\rangle\langle j|$ terms acting on $\cH^c$ are linearly independent.
The key technique introduced by \cite{bravyi2003commutative} is the Structure Lemma which decouples two commuting terms in their overlapping space:
consider two terms $p,\hat{p}$  sharing only one qudit $q$. If $[p,\hat{p}]=0$, then the induced algebras, $\cA^{\cH^q}_p$ and $\cA^{\cH^q}_{\hat{p}}$, must commute, meaning that every operator in $\cA^{\cH^q}_p$ commutes with
every operator in $\cA^{\cH^q}_{\hat{p}}$.  Furthermore, $p,p'$ can be decoupled in $\cH^q$, in that is there exists a direct sum decomposition $\cH^q=\bigoplus_i \cH^q_i$, where
\begin{itemize}
    \item each $\cH^q_i$ has a factorized structure:   $\cH^q_i= \cH^{q1}_i\otimes  \cH^{q2}_i$
    \item and $p$ only acts non-trivially on $\cH^{q1}_i$ and $\hat{p}$  only acts non-trivially on $\cH^{q2}_i$.
\end{itemize}

Both proofs showing  that the qubit-CLHP-2D is in $\NP$ \cite{schuch2011complexity, aharonov2018complexity}
depend heavily on the properties of qubits. In particular,
if $q$ is a qubit, then there are only two ways to have a direct sum decomposition of $\cH^q$, namely
as the direct sum of two $1$-dimensional spaces or as a single $2$-dimensional space.
Note that we must also consider the case in which an induced algebra is \textit{trivial},
meaning that $\cA^{\cH^q}_p = \{cI_{\cH}\}_c$ which implies that the operator $p$ acts trivially
on qubit $q$. 
Using the Structure Lemma the following  statement is true: 

 \begin{fact}\label{fact1}
     Any two commuting non-trivial induced algebras on a qubit must be diagonalizable in the same basis.
 \end{fact}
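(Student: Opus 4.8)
The plan is to first classify all unital $C^*$-subalgebras of $\cL(\cH^q)$ for a single qubit $q$, since every induced algebra on $q$ is such an algebra. Because $\dim\cL(\cH^q)=4$, the finite–dimensional $C^*$-algebra theory underlying the Structure Lemma of~\cite{bravyi2003commutative} forces any unital $C^*$-subalgebra $\cA\subseteq\cL(\cH^q)$ into exactly one of three shapes: (i) the trivial algebra $\{cI\}_c$; (ii) a maximal abelian subalgebra $\mathrm{span}\{\ketbra{v}{v},\ketbra{v^\perp}{v^\perp}\}$ for some orthonormal basis $\{\ket v,\ket{v^\perp}\}$ of $\cH^q$, i.e.\ the algebra of operators diagonal in that basis; or (iii) all of $\cL(\cH^q)$. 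Equivalently one can read the same trichotomy directly off the Structure Lemma: a direct sum decomposition $\cH^q=\bigoplus_i\cH^{q1}_i\otimes\cH^{q2}_i$ of a two–dimensional space can only have a single block with $(\dim\cH^{q1}_i,\dim\cH^{q2}_i)\in\{(1,2),(2,1)\}$ or two blocks each of type $(1,1)$.

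Next I would take two commuting non-trivial induced algebras $\cA$ and $\cA'$ on $q$ and rule out case (iii): if $\cA=\cL(\cH^q)$, then every element of $\cA'$ commutes with all of $\cL(\cH^q)$ and hence lies in the center $\{cI\}_c$, making $\cA'$ trivial and contradicting the hypothesis; symmetrically $\cA'\neq\cL(\cH^q)$. Since by hypothesis neither algebra is trivial either, both must be of type (ii): $\cA=\mathrm{span}\{\ketbra{v}{v},\ketbra{v^\perp}{v^\perp}\}$ and $\cA'=\mathrm{span}\{\ketbra{w}{w},\ketbra{w^\perp}{w^\perp}\}$ for unit vectors $\ket v,\ket w\in\cH^q$. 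The commutation hypothesis gives in particular $[\ketbra{v}{v},\ketbra{w}{w}]=0$, and from $\ketbra{v}{v}\,\ketbra{w}{w}=\braket{v}{w}\,\ketbra{v}{w}$ and $\ketbra{w}{w}\,\ketbra{v}{v}=\braket{w}{v}\,\ketbra{w}{v}$ one sees that two rank-one projections on a two–dimensional space commute precisely when $|\braket{v}{w}|\in\{0,1\}$, i.e.\ they are equal or orthogonal. In both cases the orthonormal bases $\{\ket v,\ket{v^\perp}\}$ and $\{\ket w,\ket{w^\perp}\}$ agree up to reordering and phases, so $\cA$ and $\cA'$ are simultaneously diagonal in this common basis, which is the assertion.

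I do not expect a genuine obstacle here; the proof is a short case analysis once the qubit $C^*$-algebra trichotomy is available. The only delicate point is the elimination of case (iii), where the non-triviality of the \emph{partner} algebra is exactly what is invoked, and this is also the crux of why the statement is special to qubits. Already for a qutrit, $\cL(\cH^q)$ admits unital $C^*$-subalgebras such as $M_2(\bC)\oplus\bC$ (acting blockwise on $\bC^2\oplus\bC$) that are non-trivial, non-abelian, and still have a non-trivial commutant $\bC I_2\oplus\bC I_1$; two such commuting non-trivial induced algebras need not be co-diagonalizable, which is precisely the difficulty Theorem~\ref{thm:qutrit_NP} must confront.
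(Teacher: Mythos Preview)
Your proof is correct and follows essentially the same approach the paper takes. The paper does not give a formal proof of Fact~\ref{fact1} in the overview, but its later Lemma~\ref{lem:howcom2} is exactly the Structure-Lemma dimension count you perform: a unital $C^*$-subalgebra of $\cL(\bC^2)$ is either trivial, a maximal abelian algebra (the $(1,1)$-way), or all of $\cL(\bC^2)$ (the $(2)$-way), and non-triviality of both partners rules out the last case; your explicit verification that two commuting rank-one projections on $\bC^2$ have $|\braket{v}{w}|\in\{0,1\}$ spells out a step the paper leaves implicit.
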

  Note that Fact \ref{fact1} is not true for qutrits. One may understand this statement intuitively by only basic linear algebra. In particular, 
  if $h_1,h_2,h$ are Hermitian operators on a qubit, where  $h$ is not proportional to the identity and both $h_1$ and $h_2$ commute with $h$, then all three operators can be diagonalized in the same basis. This observation is not true for qutrits, as here exist operators $h,h_1,h_2$ on a qutrit with $h$ nontrivial, such that $h$ commutes with $h_1$ and $h_2$, but the three operators cannot be diagonalized simultaneously. 

  Since the structure of our proof follows the same outline as Shuch's proof, 
we briefly explain how \cite{schuch2011complexity} used Fact \ref{fact1} to prove qubit-CLHP-2D is in $\NP$. 
Consider an arbitrary  qudit-CLHP-2D instance $H=\sum_p p$.
As argued above, we can assume $\{p\}_p$ are commuting projections, and the goal is to determine whether $\lambda(H)=0$ or $\lambda(H)\geq 1$. Note that in this setting, proving $\lambda(H)=0$ is equivalent to proving $tr(\prod_{p} (I-p))>0$.

Now consider a qubit $q$ in a 2D lattice (as shown in Fig.~\ref{fig:introfig}(b)).
We name the terms acting on $q$ as $p_1,p_2,p_1',p_2'$.
We  define the set of \textit{removable} qubits (defined implicitly in \cite{schuch2011complexity}) as follows:
\begin{itemize}
	\item $q\in R$: If the induced algebras of $p_1,p_2$  on $q$ can be diagonalized in the same basis; or this condition holds for $p_1',p_2'$.
\end{itemize}
The set $R$ is called removable since they can be effectively traced out. 
Specifically, suppose $q\in R$ and  assume $p_1,p_2$ can be diagonalized in basis $\{\ket{\phi_1},\ket{\phi_2}\}$. 
Then
$$tr\left[ \prod_{p} (I-p) \right] = \sum_{i=1,2} 
tr\left[ \ketbra{\phi_i}{\phi_i}(1 - p_1)\ketbra{\phi_i}{\phi_i} (1 - p_2) \ketbra{\phi_i}{\phi_i} \prod_{p \neq p_1, p_2} (1 - p) \right]$$
Each quantity in the sum on the right is the trace of a product of two positive semi-definite
Hermitian operators, and therefore, each quantity in the sum is non-negative.
Thus, 
 $tr(\prod_{p} (I-p))>0$ if and only if one of the two quantities in the sum is positive.
 The prover will then provide a $\ket{\phi_1}$ or $\ket{\phi_2}$ for qubit $q$ for which the trace is
 positive. Note that the proof is non-constructive because the ground state may not lie entirely
 within either the space spanned by the space spanned by $\ket{\phi_1}$ or 
 the space spanned by $\ket{\phi_2}$.
 Suppose that the witness for qubit $q$ is $\ket{\phi_1}$. The verifier must verify that
 $$ tr\left[ \ketbra{\phi_1}{\phi_1}(1 - p_1)\ketbra{\phi_1}{\phi_1} (1 - p_2) \ketbra{\phi_1}{\phi_1} \prod_{p \neq p_1, p_2} (1 - p) \right] > 0.$$
 The other two terms that might act non-trivially on qubit $q$ are $p_1'$ and $p_2'$.
By Fact \ref{fact1} we know either one of the induced algebras of $p_1', p_2'$ on $q$ acts trivially on $q$, or they can be diagonalized in the same basis.
By considering each case separately, it can be argued that the qubit $q$ can be traced out of all the 
terms. This process can be applied simultaneously for all the removable qubits. The remaining Hamiltonian will only contain
terms that operate non-trivially on qubits that are not removable.

Using Fact \ref{fact1}, we know that if $q$ is not removable ($q \not\in R$), 
then one of $p_1,p_2$ and one of $p_1',p_2'$ act trivially on $q$.
Now consider a graph where each vertex represents a plaquette term $p$ and two terms are connected
if they operate non-trivially on a common qubit.
Schuch argues that if none of the qubits are removable then this graph cannot have a vertex with a degree larger than two, namely the graph is a set of disjoint chains and cycles.
The trace of each chain or cycle can be computed in classical polynomial time
by representing each term as a tensor and contracting the tensors along the chain.

 When moving to qutrits, we need to address the fact that the Structural Lemma allows for more complex decompositions of the Hilbert space of a qutrit, and in particular, Fact \ref{fact1} no longer holds.
 Our key observation to tackle the problem, is to introduce a new way to decrease the dimension of the particle  --- \textit{Decrease Dimension and Rounding} (Lemma \ref{lem:decrease}, Sec.~\ref{sec:key}), which can be applied to qudits that have a property which we call \textit{semi-separable}.  
 
 We first describe the stronger condition of a \textit{separable} qudit, which was
  introduced in \cite{aharonov2011complexity}. Consider a CLHP $H=\sum_i h_i$.
  A qudit is separable if there exists a non-trivial decomposition $\cH^q=\bigoplus_j \cH_j^q$ such that all the terms $h_i$ keep all subspaces $\cH^q_j$ invariant. Note that if there is a separable qudit and
 a solution (i.e. a frustration-free ground state) exists, then a ground state must lie entirely
 within one of the $\cH_j^q$. Thus a prover can provide the projector $\Pi_j^q$ onto the subspace of
 qudit $q$ which contains the solution. The verifier can replace each term $h_i$ with
 $\Pi_j^q \cdot h_i \cdot \Pi_j^q$ and the dimension of the problem has been reduced. 
 
 We now extend this notion and define a qudit to be
\textit{semi-separable} if we allow at most one term to \textbf{not} keep the decomposition invariant. 
Our key observation is that, for CLHP, even for a semi-separable qudit $q$, an $\NP$ prover can similarly decrease the dimension of the qudit $q$,  in a non-constructive way. 
Specifically, the $\NP$ prover will choose a subspace $\cH_j^q$ and restrict all the terms in this subspace.
Since there is one term (call it $h_0$) that is inconsistent with the decomposition, such restriction can not be done naturally. Instead, we project $h_0$ on the subspace $\cH_j^q$ and then round all the not-1-eigenvalue to 0. By doing this we claim that we again get a  new CLHP instance with a smaller dimension in qudit $q$. More importantly, we prove that
 the original CLHP has a frustration-free ground state
 iff there exists a $j$ such that
  the new CLHP $H|_j$ also has a frustration-free ground state. 
  The reduction is non-constructive because the ground state in the new instance
  may not be the same as the ground state in the original instance. 
  The ``semi-separable" technique is powerful especially for CLHP-2D where $H=\sum_p p$, since on 2D lattice as in Fig.~\ref{fig:introfig}(b), for any qudit $q$,  if we consider the decomposition of $\cH^q$ induced by the induced algebra of $p_1$ on $q$, there are at most 2 terms, i.e. $p_1',p_2'$, which do not keep the decomposition invariant. This observation is also true for CLHP embedded on a planar graph. 
  
 By the above argument, to prove the qutrit-CLHP-2D is in $\NP$, w.l.o.g we can assume that there are no semi-separable qutrits. We further prove that the condition ---  the qutrit-CLHP-2D without semi-separable qutrits  --- leads to strong restrictions on the form of the Hamiltonian. 
 In particular, we show that for  the qutrit-CLHP-2D without semi-separable qutrits, 
 if we consider again the graph of plaquette terms where two terms are connected by an edge if they act non-trivially
 on a common qutrit,
 then this graph must also consist of disjoint chains or cycles.
 The trace of the $0$-energy space can be computed as before in classical polynomial time by contracting 1D chains of tensors.
 The $\NP$ witness will be the indexes of subspaces chosen when removing all semi-separable qudits, and the 
subspaces for the removable qutrits.

\subsubsection{Overview of Theorem \ref{thm:fa}}

Recall that Theorem  \ref{thm:fa} is a constructive proof for the qubit-CLHP-2D-factorized, where 
factorized means each term is a tensor product of single-qudit Hermitian operators.
 As we mentioned in the main results section,
 finding  a constructive proof that  CHP-factorized is in $\NP$
 is made difficult because if the product of
 two terms $h$ and $\hat{h}$ is equal to $0$,
 then their terms on individual qudits can have an arbitrary relationship.
 Namely, it is possible that $h^q \hat{h}^q \neq \pm \hat{h}^q h^q$.
On the other hand, \cite{bravyi2003commutative} showed that
 if all the terms are commuting obey the condition
 that $h^q \hat{h}^q = \pm \hat{h}^q h^q$ for each qudit,  then the Hamiltonian will be related to a qubit stabilizer Hamiltonian.

 The key part to proving the Theorem \ref{thm:fa}, is to remove the possibility that $h^q \hat{h}^q \neq \pm \hat{h}^q h^q$ for some $q$,
 without changing the ground space, which will imply a constructive proof by showing
 a correspondence with stabilizer Hamiltonians.  In general, this removal is hard to achieve. Even for the qubit-CHP-factorized, it is still an open question whether there exists a constructive proof. 
 However
 surprisingly, we can give a constructive proof for qudit-CHP-factorized, when the underlying interaction graph is 2D, i.e. qudit-CLHP-2D-factorized. Specifically, by using proof of contradiction, firstly we prove that  qudit-CLHP-2D-factorized, 
 if there are no separable qudits, then all terms must commute in a regular way.  
  With a slight clarification, we show that the qudit-CLHP-2D-factorized without separable qudits is equivalent to qubit stabilizer Hamiltonian. 
  For the more general case where there are separable qudits, we then notice that there exists a partition of the $n$-qudit space into simple subspaces, such that the restricted Hamiltonian on each subspace has no separable qudits. This partition is achieved by the following: when there is a separable qudit $q$ w.r.t decomposition $\cH^q = \oplus_j \cH^q_j$, we partition the whole space according to this decomposition. We recursively perform this partition until for each subspace,  the restricted Hamiltonian has no separable qudits.  



\subsection{Structure of the manuscript}

The manuscript is structured as follows. In Sec.~\ref{sec:prelim} we give notations and definitions which are used throughout this manuscript. In Sec.~\ref{sec:C} we review the necessary definitions and techniques of $C^*$-algebra and the Structure Lemma required for proving that the qutrit-CLHP-2D is in $\NP$. In Sec.~\ref{sec:qutrit-2D} and Sec.~\ref{sec:qudit-2D-factorized}, we give proofs for the qutrit-CLHP-2D and the qudit-CLHP-2D-factorized respectively. 

The manuscript is written in a way that several proofs can be read separately. 
In summary, for readers interested in Lemma \ref{lem:decrease} (The Decrease Dimension and Rounding Lemma), the suggested order is Sec.~\ref{sec:prelim}  and Sec.~\ref{sec:key}.  For readers interested in the qutrit-CLHP-2D, the suggested order is Sec~\ref{sec:prelim}, Sec.~\ref{sec:C}, Sec.~\ref{sec:qutrit-2D}. For readers 
 interested in the qudit-CLHP-2D-factorized, the suggested order is Sec~\ref{sec:prelim}, Sec.~\ref{sec:qudit-2D-factorized}, and then Sec.~\ref{sec:C} if necessary. 
\subsection{Conclusion and future work}

In this manuscript, we give two new results of the qudit-CLHP-2D. First, we proved that qutrit-CLHP-2D is in $\NP$,  by introducing a non-constructive way of self-reduction for the qudit-CLHP, when there are semi-separable qudits. This self-reduction (proven in Lemma \ref{lem:decrease}) works for qudit and might be of independent interest. Second, we prove that qudit-CLHP-2D-factorized is in $\NP$, by showing that the Hamiltonian is equivalent to a direct sum of qubit stabilizer Hamiltonian.

One direct question is whether our proof for qutrit-CLHP-2D can be made to be constructive. That is whether one can prepare the ground state by polynomial-size quantum circuits. Aharonov, Kenneth and Vigdorovich~\cite{aharonov2018complexity} 
proved that the qubit-CLHP-2D is equivalent to the Toric code permitting boundary. It is natural to ask whether qutrit-CLHP-2D, or general qudit-CLHP-2D, can have different ground space properties from the stabilizer Hamiltonians.  
Another question is whether our constructive proof for the qudit-CLHP-2D-factorized can be modified to prepare the ground states of the qubit-CHP (without 2D geometry). Recall that the qubit-CHP is in $\NP$ by a non-constructive method~\cite{bravyi2003commutative}.

A further question is 
 to extend the frontier of the complexity of CLHP. In particular, we conjecture that Lemma \ref{lem:decrease} can be used in more general settings. Recall that Lemma \ref{lem:decrease} works for qudit rather than only for qutrits, and it implies w.l.o.g we can assume that there are no semi-separable qudits. It is interesting to see whether one can prove that the qudit-CLHP-2D$\in\NP$ by combining Lemma \ref{lem:decrease} and other methods like \cite{aharonov2018complexity}. 
Another promising setting is considering 3-local qutrit-CLHP, without any geometry constraints. Recall that \cite{aharonov2011complexity} proved for 3-local qubit-CLHP is in $\NP$, by showing that after removing all separable qubits, the resulting Hamiltonian can be viewed as a 2-local qudit-CLHP. It is possible that for 3-local qutrit-CLHP, if we remove all semi-separable qutrits, the Hamiltonian is again of a 2-local structure, which will imply  3-local qutrit-CLHP is in $\NP$.

Most known results are trying to show that CLHP is in $\NP$. On the other side, it will be very interesting to provide any evidence that CLHP might be harder than $\NP$.

\section{Preliminaries}\label{sec:prelim}

\subsection{Notation}
Given a Hermitian operator $H$, we use $\lambda(H)$ to denote its ground energy, i.e. minimum eigenvalue. 
For two operators, $h$ and $h'$, we use $[h,h']$ to denote its commutator $hh'-h'h$. 
In particular, $[h,h']=0$ means $h,h'$ are commuting. Two sets of operators, $S$ and $\hat{S}$, commute if $[h,\hat{h}]=0,\forall h\in S,\hat{h}\in\hat{S}.$
 For a set of Hermitian operators $\{h_i\}_i$, we use $\ker\{h_i\}_i$ to denote its common $0$-eigenspace, i.e. $\ker\{h_i\}_i:=\{\ket{\psi}\, |\, h_i\ket{\psi}=0,\forall i\}$. We say $\ker\{h_i\}_i$ is non-trivial iff $\ker\{h_i\}_i\neq \{0\}.$\footnote{This $0$ means the zero vector. With some abuse of notations, we use $0$ both for real number zero, and zero vector.}  For ease of illustration, we also denote a Hermitian $H=\sum_i h_i$ as a set $\{h_i\}_i$. We say a Hermitian operator $\Pi$ is a projection if $\Pi^2=\Pi$. When $\{h_i\}_i$ are commuting projections, we have $\lambda(H)=0$ iff $\ker{\{h_i\}_i}$ is non-trivial. 
 
 Let $\cH$ be a finite-dimensional Hilbert space. We use $\cL(\cH)$ to denote the set of linear operators on $\cH$. For Hermitian $h$, we use $h\succeq 0$ to denote $h$ is positive semidefinite, that is all of its eigenvalues are non-negative. 
     We use $I$ to denote the identity matrix. Let $h$ be a Hermitian operator on Hilbert space $\cX=\cH\otimes \cZ$, we say $h$ keeps the decomposition $\cH=\bigoplus_i \cH_i$ invariant if $h$ keeps the subspace $\cH_i\otimes \cZ$ invariant, $\forall i$. We say the decomposition $\cH=\bigoplus_{i=1}^m \cH_i$ is non-trivial if $m\geq 2$.
     
     In the following,
  we use  $q$ to denote a qudit, and 
  $\cH^q$ to denote the Hilbert space of the qudit $q$. Consider an operator $h$ acting  
  on $n$ qudits. We say that 
  $h$ acts trivially on a qudit $q$ if $h$ acts as identity on  $\cH^q$.
   When $h$ acts non-trivially on  only $k$ of the $n$ qudits, we will interchangeably view $h$ as an operator on $k$ qudits or a global operator on $n$ qudits. The meaning will be clear in the context. We use $tr_S()$ for tracing out the qudits in $S$, and use $tr()$ for tracing out all the qudits. We use $S^c$ to denote the set of qudits outside $S$.
   
\subsection{Formal Problem Definitions}

\noindent\textbf{Commuting $k$-Local Hamiltonian}
We say a Hermitian operator $H$ on $n$ qudits is a commuting $k$-local Hamiltonian, if 
 $H=\sum_{i=1}^{m} h_i$ for $m=poly(n)$, where
 each $h_i$ only acts non-trivially on $k$ qudits, and  $h_ih_j=h_jh_i,\forall i,j$.
 We allow different qudits to have different dimensions. In particular, for qutrit $k$-local commuting local Hamiltonian, we allow the dimension of each qudit to be either $1,2$ or $3$. 
\vspace{0.5em}

\noindent\textbf{2D and Factorized Variants}
Consider a 2D square lattice as in Fig.~\ref{fig:introfig}(a),  on each vertex there is a qudit $q$, and on each plaquette $p$ there is a Hermitian term acting on the qudits on its four vertices. With some abuse of notations, we also use $p$ to denote the Hermitian term on the plaquette $p$. We say a commuting (4-local) Hamiltonian is on 2D if there is an underlying 2D square lattice and plaquettee terms defined as above such that $H=\sum_p p$ and all $\{p\}_p$ are pairwise commuting.


We further say a commuting (4-local) Hamiltonian on 2D is factorized, if each $p$ is factorized on its vertices, that is $p=p^{q_1}\otimes p^{q_2}\otimes p^{q_3}\otimes p^{q_4}$ for Hermitian terms $p^{q_i}$ acting on qudit $q_i$, as shown in Fig.~\ref{fig:introfig}(a). We call $p^{q_i}$ factors.
 For the Toric code, $p\in\{X^{\otimes 4},Z^{\otimes 4}\}$. \vspace{0.5em}

\noindent\textbf{Commuting $k$-Local Hamiltonian problem}
Given a  family
 of commuting $k$-local Hamiltonian $H=\sum_i h_i$ on $n$ qudits and  parameters $a,b\in \bR$ with $b-a\geq 1/poly(n)$.  The commuting $k$-local Hamiltonian problem 
w.r.t $(H,a,b)$
is a promise problem that decides whether $\lambda(H)\leq a$ or $\lambda(H)\geq b$.  
We denote this problem as qudit-CLHP$(H,a,b)$, abbreviated as qudit-CLHP when $H,a,b$ are clear in the context. Similarly, we abbreviate the 2D and 2D-factorized variants as qudit-CLHP-2D and qudit-CLHP-2D-factorized respectively.

We define a special case of the qudit-CLHP called qudit-CLHP-projection 
where each term $h_i$ is a projection,  $b=1$, and $a = 0$.
Note that since $\{h_i\}_i$ are commuting projections, we know $\lambda(H)$ must be a non-negative integer. Thus in the No instance we use $\lambda(H)\geq 1$ rather than $\lambda(H)\geq 1/poly(n)$. We define qudit-CLHP-2D-projection similarly.

\subsection{More Definitions}

Consider 
 a commuting $k$-local Hamiltonian $H=\sum_i h_i$ on  $n$ qudits with Hermitian terms $\{h_i\}_i$. Although $h_i$ acts non-trivially only on $k$ qudits, in this section we view it as an operator on $n$ qudits. 
 We will name the $n$ qudits as $q_1,q_2,...,q_n$. When we refer to an arbitrary qudit, we name it as $q$.

\begin{definition}[Separable qudit]
A qudit $q$ is separable w.r.t Hermitian terms $\{h_i\}_i$ if there exists a non-trivial decomposition of its Hilbert space $\cH^q=\bigoplus_{j=1}^m \cH^q_j$ s.t all $h_{i}$ keep the decomposition invariant. Here non-trivial means $m\geq 2$. We use $\Pi_j$ to denote the projection onto $\cH^q_j$. 
\end{definition}

The definition of separable is first introduced by \cite{aharonov2011complexity}. Roughly speaking, it says all the Hermitian terms $\{h_i\}_i$ are block-diagonalized in the same way. We introduce the notion of semi-separable qudit, which will play a key role in the proof of  the qutrit-CLHP-2D. 

\begin{definition}[Semi-separable qudit]
A qudit $q$ is semi-separable w.r.t Hermitian terms $\{h_i\}_i$  if there exists a non-trivial decomposition of its Hilbert space $\cH^q=\bigoplus_{j=1}^m \cH^q_j$ s.t all but  one $h_i$  
keeps the decomposition invariant. Here non-trivial means $m\geq 2$. We use $\Pi_j$ as the projection onto $\cH^q_j$.  By convention when referring to a specific qudit, we will denote the term which does not keep the decomposition invariant as $h_0$.
\end{definition}

Semi-separable qudit is a relaxation of separable qudits, in the sense that we allow one term to be not block-diagonalized w.r.t the decomposition $\cH^q=\bigoplus_{j=1}^m \cH^q_j$.
Note that by the definition of semi-separable, $h_i$ is Hermitian and we have $[h_i,\Pi_j]=0, \forall i\neq 0, \forall j$.
   We will repeatedly use this fact. It is also important to keep in mind that $[h_0,\Pi_j]$ might not be equal to $0$, since we allow $h_0$ not keeping $\cH^q_j$ invariant.

\section{Review of $C^*$-algebras and the Structure Lemma}\label{sec:C}

This section is a  review of $C^*$-algebra and the Structure Lemma~\cite{bravyi2003commutative}, which is the key tool to analyze the structures in the commuting local Hamiltonians.   A more detailed proof on those techniques can be seen in Sec.~7.3 of ~\cite{gharibian2015quantum}.
The following notations and lemmas are rephrased from~\cite{aharonov2018complexity} and Sec.~7.3 of ~\cite{gharibian2015quantum}. 

\subsection{Basics of $C^*$-algebras}

\begin{definition}[$C^*$-algebra] Let $\cH$ be a finite dimensional Hilbert space, a $C^*$-algebra
is any algebra $\cA\subseteq \cL(\cH)$ which is closed under the $\dagger$ operations and includes the identity.	We say that two $C^*$-algebras, $\cA$ and $\cA'$, commute if $[a,a']=0,\forall a\in\cA,a'\in\cA'.$
\end{definition}

\begin{definition}[Trivial operator and algebra]
	Let $\cH$ be a finite-dimensional Hilbert space. We say an operator $h\in\cL(\cH)$ acting trivially on $\cH$ if $h=cI_{\cH}$ for some constant $c$. We say a $C^*$-algebra on $\cA\subseteq\cL(\cH)$ is trivial if every operators in $\cA$ is trivial, i.e. $\cA=\{cI_{\cH}\}_c$. If $\cH=\cH_1\otimes \cH_2$, we say $h$ acts trivially on $\cH_1$ if $h= cI_{\cH_1}\otimes h_2$ for $h_2\in\cL(\cH_2)$.
\end{definition}

\begin{definition}[Center of $C^*$-algebra]
	The center of a $C^*$-algebra $\cA$ is defined as 
	the set of operators in $\cA$ which commutes with $\cA$, that is
	\begin{align}
		\cZ(\cA) := \{a\in \cA | [a,a']=0,\forall a'\in \cA\}.
	\end{align}	
\end{definition}

Then we introduce the induced algebra, which connects a Hermitian operator and a $C^*$-algebra. 

\begin{definition}[Induced algebra]\label{def:induced}
 Let $h$ be a Hermitian operator acting on Hilbert space $\cH\otimes\cH'$. Consider the  decomposition 
\begin{align}
	h =\sum_{i,j} h_{ij}^{\cH} \otimes |i\rangle\langle j|^{\cH'} 
\end{align}
where $\{\ket{i}\}_i$ is an orthogonal basis of $\cH'$.
The induced algebra of $h$ on $\cH$, denoted as $\cA_h^\cH$, is defined as the $C^*$-algebra generated by $\{h_{ij}^{\cH}\}_{ij}\cup\{I_{\cH'}\}$. We abbreviate $\cA^{\cH}_h$ as $\cA_h$ when $\cH$ is clear in the context. We  
abbreviate $\cA^{\cH_q}_h$ as  $\cA^{q}_h$ for qudit $q$.
\end{definition}

The induced algebra is independent of the chosen decomposition for Hermitian $h$.

\begin{lemma}[Claim B.3 of~\cite{aharonov2018complexity}]\label{lem:same}
In Definition \ref{def:induced}	consider two decompositions 
\begin{align}
	&h =\sum_{ij} h_{ij}^{\cH} \otimes g_{ij}^{\cH'}=\sum_{ij} \hat{h}_{ij}^{\cH} \otimes \hat{g}_{ij}^{\cH'}.
\end{align}
where the sets $\{g_{ij}^{\cH'}\}_{ij},\{\hat{g}_{ij}^{\cH'}\}_{ij}$ are linearly independent respectively. Then the $C^*$-algebra generated by $\{h_{ij}^{\cH}\}_{ij}$ is the same as the one generated by $\{\hat{h}_{ij}^{\cH'}\}_{ij}$.
\end{lemma}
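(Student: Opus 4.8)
The plan is to show that the $C^*$-algebra does not depend on which linearly-independent family $\{g_{ij}^{\cH'}\}$ we expand $h$ against, by passing through a \emph{canonical} choice of decomposition and arguing that every other valid decomposition generates the same algebra as that canonical one. First I would fix, once and for all, a fixed orthonormal basis $\{\ket{i}\}$ of $\cH'$ and write $h = \sum_{i,j} h_{ij}^{\cH}\otimes \ketbra{i}{j}^{\cH'}$; call $\cA_0$ the $C^*$-algebra generated by $\{h_{ij}^\cH\}_{ij}\cup\{I_\cH\}$. The goal is then to show that for any other decomposition $h = \sum_{kl} \hat h_{kl}^{\cH}\otimes \hat g_{kl}^{\cH'}$ with $\{\hat g_{kl}^{\cH'}\}$ linearly independent in $\cL(\cH')$, the algebra generated by $\{\hat h_{kl}^\cH\}$ equals $\cA_0$.

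The key linear-algebra step is the following. The set $\{\ketbra{i}{j}^{\cH'}\}_{ij}$ is a basis of $\cL(\cH')$, so each $\hat g_{kl}^{\cH'}$ can be written $\hat g_{kl}^{\cH'} = \sum_{ij} M_{kl,ij}\,\ketbra{i}{j}^{\cH'}$ for a matrix $M$. Substituting and matching coefficients against the basis $\{\ketbra{i}{j}^{\cH'}\}$ (using that tensor decompositions against a fixed basis of the second factor are unique) gives $h_{ij}^\cH = \sum_{kl} M_{kl,ij}\,\hat h_{kl}^\cH$; hence every $h_{ij}^\cH$ lies in the algebra generated by the $\hat h_{kl}^\cH$, so $\cA_0\subseteq \langle \hat h_{kl}^\cH\rangle$. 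For the reverse inclusion I need that each $\hat h_{kl}^\cH$ is a linear combination of the $h_{ij}^\cH$: this is exactly where linear independence of $\{\hat g_{kl}^{\cH'}\}$ is used — extend $\{\hat g_{kl}^{\cH'}\}$ to a basis of $\cL(\cH')$, let $\{\hat g^{*}_{kl}\}$ be the corresponding dual basis under the trace inner product $\langle A,B\rangle = \mathrm{tr}(A^\dagger B)$ on $\cL(\cH')$, and observe that applying $\mathrm{id}_\cH\otimes\langle \hat g^{*}_{kl},\cdot\rangle$ to $h = \sum \hat h_{k'l'}^\cH\otimes \hat g_{k'l'}^{\cH'}$ recovers $\hat h_{kl}^\cH$ as a fixed linear functional applied to $h$; doing the same computation with the $\ketbra{i}{j}$-decomposition expresses $\hat h_{kl}^\cH$ as a linear combination of the $h_{ij}^\cH$. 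Thus $\langle \hat h_{kl}^\cH\rangle \subseteq \cA_0$, and the two algebras coincide. Since both are generated as $C^*$-algebras (closure under products, linear combinations, $\dagger$, and containing $I_\cH$), equality of the generating sets' spans plus the common identity gives equality of the $C^*$-algebras.

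The main obstacle — really the only subtle point — is making the "dual basis" extraction clean: one must be careful that $\{\hat g_{kl}^{\cH'}\}$, while linearly independent, need not span $\cL(\cH')$, so the dual-basis functionals are only defined after extending to a full basis, and I should check that the resulting linear map $\cL(\cH\otimes\cH')\to\cL(\cH)$ sending $h\mapsto \hat h_{kl}^\cH$ is well-defined independent of that extension (it is, because $h$ lies in $\cH\otimes\mathrm{span}\{\hat g_{k'l'}^{\cH'}\}$, on which the functional is forced). Everything else is routine bookkeeping with the fixed basis $\{\ketbra{i}{j}^{\cH'}\}$ and the uniqueness of coordinates in it. Alternatively, one can avoid dual bases entirely by noting both decompositions show $h$ lies in $\cH \otimes V$ for $V := \mathrm{span}\{\hat g_{kl}^{\cH'}\} = \mathrm{span}\{\ketbra{i}{j}^{\cH'} : (i,j)\ \text{appearing}\}$ after restricting to the relevant subspace, and that the "slice" operators $\hat h_{kl}^\cH$ and $h_{ij}^\cH$ are just the images of $h$ under the two coordinate systems on the same finite-dimensional space $\cH\otimes V$, hence span the same subspace of $\cL(\cH)$ — which is the content needed.
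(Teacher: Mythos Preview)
Your proposal is correct. The paper does not actually supply its own proof of this lemma: it is stated as ``Claim B.3 of~\cite{aharonov2018complexity}'' and simply cited, with no proof environment following it. Your argument --- fixing the canonical decomposition against the matrix-unit basis $\{\ketbra{i}{j}^{\cH'}\}$, then using linear independence of $\{\hat g_{kl}^{\cH'}\}$ together with dual functionals (or equivalently partial traces against the dual basis) to show the two families of $\cH$-operators have the same linear span, hence generate the same $C^*$-algebra --- is the standard approach and is essentially what the cited reference does. The subtlety you flag about extending $\{\hat g_{kl}^{\cH'}\}$ to a full basis of $\cL(\cH')$ is handled correctly; in fact it is slightly over-cautious, since any choice of dual functionals satisfying $\langle \hat g^*_{kl},\hat g_{k'l'}\rangle=\delta_{kk'}\delta_{ll'}$ already suffices to extract $\hat h_{kl}^\cH$ as a linear combination of the $h_{ij}^\cH$, regardless of how the extension is made.
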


By Lemma \ref{lem:same} we know the induced algebra of $h$ on $\cH$, i.e. $\cA^\cH_h$ in Definition \ref{def:induced}, is independent of the decomposition we choose, thus $\cA^\cH_h$  is well-defined.
Note that if there is a decomposition $\cH=\bigoplus_i\cH_i$ such that $
\cA_h^{\cH}$ keeps $\cH_i$ invariant, $\forall i$, then it follows that $h$ keeps $\cH_i$ invariant, $\forall i$.

\subsection{The Structure Lemma}

The Structure Lemma \cite{takesaki2003theory} says that every finite-dimensional $C^*$-algebra is a direct sum of algebras of all operators on a Hilbert space. 
See Sec.~7.3 of ~\cite{gharibian2015quantum} for an accessible proof of the Structure Lemma.
 The following statement is taken from \cite{aharonov2018complexity}.

\begin{lemma}[The Structure Lemma: classification of finite dimensional $C^*$-algebras~]\label{lem:C}
 Let $\cA \subseteq L(\cH)$ be a $C^*$-algebra where $\cH$ is finite dimensional. There exists a direct sum decomposition:
$
\cH =\bigoplus_i\cH_i
$
and a tensor product structure
$
\cH_i = \cH_i^1\otimes \cH_i^2
$ 
such that
$$
\cA = \bigoplus_i \cL(\cH_i^1) \otimes \cI(\cH_i^2)
$$
Furthermore,  
the center of $\cA$ is spanned by $\{\Pi_i\}_i$, where $\Pi_i$ is the projection onto the subspace $\cH_i$. 
\end{lemma}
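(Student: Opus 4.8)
The plan is to prove the classification of finite-dimensional $C^*$-algebras by induction on $\dim(\cH)$, using the projections in the center $\cZ(\cA)$ to peel off irreducible blocks one at a time. First I would handle the base case and the reduction: if $\cZ(\cA)$ contains a nontrivial projection $\Pi$ (i.e. $\Pi \neq 0, I$), then because $\Pi$ commutes with all of $\cA$, every $a \in \cA$ keeps both $\Pi \cH$ and $(I-\Pi)\cH$ invariant, so $\cA = \Pi \cA \Pi \oplus (I-\Pi)\cA(I-\Pi)$ splits as a direct sum of $C^*$-algebras on strictly smaller Hilbert spaces; apply the inductive hypothesis to each summand. So the crux is the case where $\cZ(\cA)$ is trivial, i.e. $\cZ(\cA) = \{cI_\cH\}_c$; here I must show $\cA = \cL(\cH^1) \otimes \cI(\cH^2)$ for some tensor factorization $\cH = \cH^1 \otimes \cH^2$.

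For the central-simple case, I would argue as follows. Since $\cA$ is closed under $\dagger$ and acts on a finite-dimensional space, it is semisimple, so $\cH$ decomposes into irreducible $\cA$-submodules. Any two irreducible $\cA$-submodules are either orthogonal or $\cA$-isomorphic; the triviality of the center forces all the irreducible summands to be mutually isomorphic (otherwise the projection onto the span of one isomorphism class would be a nontrivial central element). Grouping the $r$ copies of a single irreducible module $W$, I get $\cH \cong W \otimes \bC^r$ as an $\cA$-module, with $\cA$ acting as $a \otimes I$. Then by a double-commutant / Schur argument, $\cA$ is exactly $\cL(W) \otimes I$: Schur's lemma gives that the commutant of $\cA$ is $I \otimes \cL(\bC^r)$, and since $\cA$ is a $\dagger$-closed algebra it equals its own double commutant, which is $\cL(W) \otimes I$. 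Setting $\cH^1 := W$ and $\cH^2 := \bC^r$ completes this case.

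Finally, the "furthermore" clause follows by bookkeeping: in the decomposition $\cA = \bigoplus_i \cL(\cH_i^1) \otimes \cI(\cH_i^2)$, an element $a = \bigoplus_i a_i \otimes I$ is central iff each $a_i$ commutes with all of $\cL(\cH_i^1)$, which by Schur forces $a_i = c_i I_{\cH_i^1}$; hence $\cZ(\cA) = \{\bigoplus_i c_i \Pi_i\}$, i.e. it is spanned by the block projections $\{\Pi_i\}_i$.

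I expect the main obstacle to be the central-simple case — specifically, making rigorous the passage from "$\cA$ is semisimple with all irreducible summands isomorphic" to the clean tensor decomposition $\cH = \cH^1 \otimes \cH^2$ with $\cA = \cL(\cH^1) \otimes \cI(\cH^2)$. The subtlety is that one must invoke the von Neumann double commutant theorem (or prove the relevant finite-dimensional version directly via Schur's lemma and an explicit basis for the multiplicity space), and one must check that the $\dagger$-closure of $\cA$ is what makes $\cA'' = \cA$ hold. The other steps — the central peeling induction and the center computation — are essentially routine once this core structural fact is in place. Since the excerpt explicitly points to Sec.~7.3 of \cite{gharibian2015quantum} for a detailed proof, I would present the argument at the level of sketch above and cite that reference for the remaining technical verifications.
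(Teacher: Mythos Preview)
The paper does not prove this lemma at all: it states the result with a citation to \cite{takesaki2003theory} and points the reader to Sec.~7.3 of \cite{gharibian2015quantum} for an accessible proof, then moves on. So there is no in-paper argument to compare against.

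Your sketch is the standard proof and is correct: peel off central projections to reduce to the case $\cZ(\cA)=\bC I$, then in that case use semisimplicity plus Schur/double-commutant to get $\cA=\cL(\cH^1)\otimes \cI(\cH^2)$, and finally read off the center from the block decomposition. This is essentially what the cited reference \cite{gharibian2015quantum} does, and you already note that you would cite it for the technical verifications. There is no gap; your proposal is appropriate given that the paper itself treats this as a black-box classical result.
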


Given a $C^*$-algebra $\cA$, we denote the decomposition $\cH=\bigoplus_i \cH_i$ in Lemma \ref{lem:C} as \textit{the decomposition induced by $\cA$}.
 Note that
here we do not argue whether the decomposition in Lemma \ref{lem:C} is unique or not. However, for clarity when we mention the decomposition induced by $\cA$, we always refer to the same canonical decomposition. For example, we can set the canonical decomposition to be the one obtained by the proof in Sec.~7.3 of ~\cite{gharibian2015quantum}.
 In the following we give some definitions of decompositions, and a further remark on Lemma \ref{lem:C}.

\begin{definition}[Trivial, Better decomposition]
    Consider the decomposition of a finite-dimensional space $\cH=\bigoplus_{i=1}^m\cH_i$. 
    We say the decomposition is trivial if $m=1$.
    We say one decomposition is 
   better\footnote{Here we measure ``better" only in terms of $m$. We do not require any relationship between the subspaces of the better decomposition $\cH=\bigoplus_{i=1}^m\cH_i$ and the worse $\cH=\bigoplus_{i=1}^{m'}\cH'_i$ for $m>m'$. {Note that even for two commuting algebras $\cA,\hat{\cA}\subseteq L(\cH)$, the two decompositions of $\cH$ induced by $\cA,\hat{\cA}$} might \textbf{not} be finer than each other. That's why we use ``better" rather than "finer" here. We define in this way just to ease notations and make our proof more precise.}
   than another if it has a bigger $m$.
\end{definition}

 Lemma \ref{lem:C} implies all operators in $\cA$ keep the decomposition $\cH=\bigoplus_i \cH_i$ invariant.  It is worth noting that the decomposition induced by $\cA$ might \textbf{not} be the best decomposition that $\cA$ keeps invariant. In particular,
 consider the $C^*$-algebra $\cA$  generated by $I$, i.e. $\{cI\}_{c\in\bC}$.  The decomposition of $\cH$ induced by $\cA$ is trivial, i.e. $\cH=\cH_1$, but $\cA$ keeps any decomposition of $\cH$ invariant. Using Lemma \ref{lem:C}, we can analyze how two induced algebras can commute with each other.

\begin{corollary}[The Structure Lemma]\label{cor:structure}	Let $\cA_h$ be a $C^*$-algebra acting on a finite dimensional $\cH$. Let $\cH=\bigoplus_i \cH_i$, $\cH_i=\cH_i^1\otimes \cH_i^2$, is the decomposition induced by $\cA_h$ by Lemma \ref{lem:C}. Consider another $C^*$-algebra $\cA_{h'}$ on $\cH$ which commutes with $\cA_{h}$, we have
$$
\begin{aligned}
	& \cA_h = \bigoplus_i \cL(\cH_i^1) \otimes \cI(\cH_i^2)\\
	& \cA_{h'} \subseteq \bigoplus_i \cI(\cH_i^1) \otimes \cL(\cH_i^2)
\end{aligned}
$$
Especially, all operators in $\cA_h,\cA_{h'}$ keep the decomposition $\cH=\bigoplus_i \cH_i$ invariant.
\end{corollary}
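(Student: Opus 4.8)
\textbf{Proof proposal for Corollary \ref{cor:structure}.}
The plan is to leverage Lemma \ref{lem:C} twice: once to obtain the block structure and center of $\cA_h$, and once (implicitly, via the standard commutant computation inside a single block) to pin down where $\cA_{h'}$ must live. First I would invoke Lemma \ref{lem:C} for $\cA_h$ to fix the canonical decomposition $\cH=\bigoplus_i\cH_i$ with $\cH_i=\cH_i^1\otimes\cH_i^2$ and $\cA_h=\bigoplus_i\cL(\cH_i^1)\otimes\cI(\cH_i^2)$, and to record that the center $\cZ(\cA_h)$ is spanned by the projections $\{\Pi_i\}_i$ onto the $\cH_i$. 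From the block-diagonal form it is immediate that every operator in $\cA_h$ keeps each $\cH_i$ invariant.

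Next I would show every operator in $\cA_{h'}$ keeps each $\cH_i$ invariant. Since each $\Pi_i$ lies in $\cA_h$ and $\cA_{h'}$ commutes with $\cA_h$, any $a'\in\cA_{h'}$ satisfies $[a',\Pi_i]=0$ for all $i$, which is exactly the statement that $a'$ preserves $\cH_i=\mathrm{range}(\Pi_i)$. Hence $a'$ decomposes as $a'=\bigoplus_i a'_i$ with $a'_i\in\cL(\cH_i)$. This already gives the ``furthermore'' clause of the corollary.

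Then I would localize to a single block. Fix $i$ and an arbitrary $A\in\cL(\cH_i^1)$. Because $\cA_h=\bigoplus_i\cL(\cH_i^1)\otimes\cI(\cH_i^2)$ is a genuine direct sum (so $0$ can be placed in every block other than $i$), the operator which equals $A\otimes I_{\cH_i^2}$ on $\cH_i$ and $0$ elsewhere belongs to $\cA_h$. Commutation with $a'$ forces $[a'_i,\,A\otimes I_{\cH_i^2}]=0$. As this holds for all $A\in\cL(\cH_i^1)$, the operator $a'_i$ lies in the commutant of $\cL(\cH_i^1)\otimes I_{\cH_i^2}$ inside $\cL(\cH_i^1\otimes\cH_i^2)$, which is $\cI(\cH_i^1)\otimes\cL(\cH_i^2)$. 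Summing over $i$ yields $\cA_{h'}\subseteq\bigoplus_i\cI(\cH_i^1)\otimes\cL(\cH_i^2)$, as claimed.

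The only non-bookkeeping ingredient, and the step I expect to require the most care, is the commutant identity: the commutant of $\cL(\cH^1)\otimes I$ inside $\cL(\cH^1\otimes\cH^2)$ equals $I\otimes\cL(\cH^2)$. This is a standard fact (a special case of the double commutant theorem / Schur's lemma, and is established in Sec.~7.3 of \cite{gharibian2015quantum}); if a self-contained argument is wanted, I would expand an operator $B\in\cL(\cH^1\otimes\cH^2)$ in a basis as $B=\sum_{e,f}\ketbra{e}{f}^{\cH^1}\otimes B_{ef}^{\cH^2}$, impose $[B,\ketbra{g}{h}^{\cH^1}\otimes I]=0$ for all $g,h$, and read off that $B_{ef}=\delta_{ef}B_0$ for a single $B_0\in\cL(\cH^2)$, i.e. $B=I\otimes B_0$.
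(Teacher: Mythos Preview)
Your proof is correct and follows essentially the same route as the paper: invoke Lemma~\ref{lem:C} to get the decomposition and the central projections $\Pi_i\in\cA_h$, use commutation with $\Pi_i$ to block-diagonalize $\cA_{h'}$, and then apply the commutant identity $\bigl(\cL(\cH_i^1)\otimes I\bigr)'=I\otimes\cL(\cH_i^2)$ inside each block. The paper's proof is simply a terser version of yours, stating the last step as ``only $cI$ can commute with all operators in a Hilbert space.''
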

\begin{proof}
	Firstly by Lemma \ref{lem:C} we can get the decomposition of $\cH$ induced by $\cA_h$. Further let $\Pi_i$ be the projection onto $\cH_i$, by   Lemma \ref{lem:C} we know $\Pi_i\in\cZ(\cA_h)\subseteq \cA_h$.   Since $\cA_{h'}$ commutes with $\cA_{h}$, thus $\cA_{h'}$ commutes with $\Pi_i$, thus $\cA_{h'}$  keeps $\cH_i$ invariant. Since only $cI$ can commute with all operators in a Hilbert space, i.e. $\cL(H_i^1)$, thus we finish the proof.
\end{proof}

In the following, we give a sufficient condition that implies the decomposition of space induced by the $C^*$-algebra is non-trivial.

\begin{lemma}[Non-trivial decomposition]\label{lem:nontridec}
	Let $\cA$ be a $C^*$-algebra on a finite dimensional $\cH$. Denote the  decomposition  induced by $\cA$ in Lemma \ref{lem:C} be $\cH=\bigoplus_i \cH_i$. Consider another $C^*$-algebra $\cA'$ on $\cH$ which commutes with $\cA$. 
	If $\exists h\neq 0 \in \cA, h'\neq 0\in \cA'$ such that $hh'=0$. Then the decomposition  $\cH=\bigoplus_i H_i$ is non-trivial.
\end{lemma}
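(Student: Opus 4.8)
The plan is to show the contrapositive via the contrapositive of the relevant implication of the Structure Lemma: if the decomposition $\cH = \bigoplus_i \cH_i$ induced by $\cA$ were trivial (i.e.\ $m=1$), then $\cA = \cL(\cH_1^1)\otimes \cI(\cH_1^2)$ is a single "full matrix algebra tensored with identity" block, and by Corollary \ref{cor:structure} the commuting algebra $\cA'$ satisfies $\cA' \subseteq \cI(\cH_1^1)\otimes \cL(\cH_1^2)$. I would then derive a contradiction with the existence of nonzero $h\in\cA$, $h'\in\cA'$ with $hh'=0$, by arguing that in this tensor-product situation no nonzero element of $\cA$ can annihilate a nonzero element of $\cA'$.

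The key step is the following observation: write $h = a\otimes I_{\cH_1^2}$ with $0\ne a\in\cL(\cH_1^1)$ and $h' = I_{\cH_1^1}\otimes b$ with $0\ne b\in\cL(\cH_1^2)$. Then $hh' = a\otimes b$, and $a\otimes b = 0$ forces $a=0$ or $b=0$ (a tensor product of operators vanishes iff one factor vanishes — this is the routine linear-algebra fact I would invoke without belaboring). This contradicts $h\ne 0$ and $h'\ne 0$. Hence the decomposition cannot be trivial, i.e.\ $m\ge 2$, which is exactly the claim.

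One point that needs care: an element of $\cA$ is a priori only a \emph{linear combination} of products of generators, so I should note that $\cA = \cL(\cH_1^1)\otimes\cI(\cH_1^2)$ (with $m=1$) already tells us directly that \emph{every} element of $\cA$ has the form $a\otimes I$, and similarly every element of $\cA'$ has the form $I\otimes b$; no separate argument about generators is needed once we invoke Lemma \ref{lem:C} and Corollary \ref{cor:structure}. So the structure of the write-up is: (1) assume for contradiction the induced decomposition is trivial; (2) apply Corollary \ref{cor:structure} to get the tensor forms of $\cA$ and $\cA'$; (3) write $h = a\otimes I$, $h' = I\otimes b$ with $a,b\ne 0$; (4) compute $hh' = a\otimes b \ne 0$, contradicting $hh'=0$.

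The main obstacle, such as it is, is purely bookkeeping: making sure that the "trivial decomposition" case is correctly identified with $m=1$ and that the canonical decomposition referenced in Lemma \ref{lem:C} is the one being used (the excerpt already fixes a canonical choice, so this is fine), and being careful that $\cH_1^2$ could be one-dimensional — but even then $b\ne 0$ is a nonzero scalar, $a\otimes b \ne 0$, and the argument goes through unchanged. There is no genuine analytic or algebraic difficulty here; the content is entirely in the Structure Lemma, which we are permitted to assume.
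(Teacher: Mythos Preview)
Your proposal is correct and follows essentially the same argument as the paper: assume the decomposition is trivial, invoke Corollary~\ref{cor:structure} to write $h\in\cL(\cH_1^1)\otimes\cI_{\cH_1^2}$ and $h'\in\cI_{\cH_1^1}\otimes\cL(\cH_1^2)$, and observe that a tensor product of nonzero operators is nonzero, contradicting $hh'=0$. Your write-up is in fact slightly more explicit than the paper's in spelling out $h=a\otimes I$, $h'=I\otimes b$ and noting the edge case where a tensor factor is one-dimensional.
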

\begin{proof}
	With contradiction suppose the decomposition is 	trivial, i.e.
	 $$
	 \begin{aligned}
	 	\cH=\cH_1=\cH^1_1\otimes \cH^2_1
	 \end{aligned}
	 $$
	 By Corollary \ref{cor:structure} we have 
		$$
		\begin{aligned}
			&h\in  \cA=\cL(\cH^1_1)\otimes \cI_{\cH^2_1},\\
			& h' \in\cA'\subseteq \cI_{\cH^1_1}\otimes \cL(\cH^2_1).
		\end{aligned}
		$$
		Since $h\neq 0,h'\neq 0$, we have $hh'\neq 0$ which leads to a contradiction.
\end{proof}

\subsection{Partitions Inducted by Commuting Operators}

 The following definitions will be used throughout Sec.~\ref{sec:qudit-2D-factorized}.

\begin{definition}\label{def:comway_basic}
Let $h,h'$ be two Hermitian terms acting on $\cX\otimes\cH\otimes\cZ$ where $dim(\cH)=d$.  Suppose that  
 $h$ acts trivially on $\cZ$,  $h'$ acts trivially on $\cX$,  $[h,h']=0$, and at least one of $h,h'$ acts non-trivially on $\cH$.
  Let the decomposition $\cH=\bigoplus_i\cH_i$ be the better one induced by $\cA_h^\cH$ or $\cA_{h'}^\cH$. 
 We say that $h,h'$ commute in $(d_1,...,d_m)$-way on $\cH$ if $dim(\cH_i)=d_i$.
\end{definition}

Note that by Corollary \ref{cor:structure},
 $h,h'$ have a tensor-product structure on $\cH_i$. Since the dimension of any Hilbert space must be an integer, two terms on a qutrit $q$ of dimension $3$ can only commute in the following ways.

\begin{lemma}\label{lem:howcom}
Let $h,h'$ be two Hermitian terms acting on $\cX\otimes\cH\otimes\cZ$ where $dim(\cH)=3$.  If 
 $h$ acts trivially on $\cZ$,  $h'$ acts trivially on $\cX$,  $[h,h']=0$, and at least one of $h,h'$ acts non-trivially on $\cH$.
 Let the decomposition $\cH=\bigoplus_i\cH_i$ be the better one induced by  $\cA_h^\cH$ or $\cA_{h'}^\cH$.  
then $h,h'$ must commute on $\cH$ via one of the following ways
\begin{itemize}
	\item $(1,1,1)$-way: $\cH=\cH_1\bigoplus\cH_2\bigoplus \cH_3$, where $dim(\cH_i)=1,\forall i$.
	\item $(1,2)$-way: $\cH=\cH_1\bigoplus\cH_2$, $dim(\cH_1)=1,dim(\cH_2)=2$. 
	\item $(3)$-way: $\cH=\cH_1$, $dim(\cH_1)=3$. One of $h,h'$ acts trivially on $\cH$, and for another the induced algebra on $\cH$ is the full algebra $\cL(\cH)$.
\end{itemize}
\end{lemma}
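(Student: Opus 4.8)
The plan is to apply the Structure Lemma (Lemma~\ref{lem:C}) and its corollary to the $C^*$-algebras induced on $\cH$ by $h$ and $h'$, and then simply enumerate the integer partitions of $3$. First I would let $\cA := \cA_h^\cH$ and $\cA' := \cA_{h'}^\cH$. The hypotheses --- $h$ acts trivially on $\cZ$, $h'$ acts trivially on $\cX$, $[h,h']=0$ --- ensure, via the standard argument behind Corollary~\ref{cor:structure} (the induced algebras of two commuting terms sharing only the space $\cH$ must commute), that $\cA$ and $\cA'$ commute as $C^*$-algebras on $\cH$. So Corollary~\ref{cor:structure} applies to whichever of $\cA,\cA'$ we take as the ``$\cA_h$'' in that statement. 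Let $\cH = \bigoplus_i \cH_i$, $\cH_i = \cH_i^1 \otimes \cH_i^2$, be the decomposition induced by $\cA$ (and symmetrically if we had used $\cA'$); the claimed ``better one'' is then well-defined since it is just whichever of the two induced decompositions has more summands.

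Next I would do the case analysis on $\dim(\cH)=3$. Writing the decomposition induced by (say) $\cA$ as $\cH=\bigoplus_{i=1}^m \cH_i$ with $\dim \cH_i = d_i \ge 1$ and $\sum_i d_i = 3$, the only possibilities are $(1,1,1)$, $(1,2)$, and $(3)$. The $(1,1,1)$ and $(1,2)$ cases are immediate from Corollary~\ref{cor:structure}: in each block $\cH_i = \cH_i^1 \otimes \cH_i^2$ the operator $h$ restricts to $\cL(\cH_i^1)\otimes\cI(\cH_i^2)$ and $h'$ to $\cI(\cH_i^1)\otimes\cL(\cH_i^2)$, and since a block of dimension $1$ or $2$ has only the trivial tensor factorizations, nothing further need be said. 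For the $(3)$ case --- a single block $\cH = \cH_1^1 \otimes \cH_1^2$ --- I must argue the extra structural claim: one of $h,h'$ acts trivially on $\cH$ and the other has induced algebra equal to the full $\cL(\cH)$. Here I would note that if the decomposition induced by $\cA$ is trivial ($m=1$) and we are in the ``better one'' regime, then the decomposition induced by $\cA'$ is also trivial (otherwise $\cA'$'s decomposition, having $\ge 2$ summands, would be the strictly better one and we would not be in the $(3)$ case). Then $\cH = \cH_1^1 \otimes \cH_1^2$ with $\cA = \cL(\cH_1^1)\otimes\cI(\cH_1^2)$ and, by Corollary~\ref{cor:structure}, $\cA' \subseteq \cI(\cH_1^1)\otimes\cL(\cH_1^2)$. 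Since $\dim\cH=3$ is prime, one of $\dim\cH_1^1,\dim\cH_1^2$ equals $1$; in the former case $\cA$ is trivial (so $h$ acts trivially on $\cH$, and by hypothesis $h'$ must act non-trivially, forcing $\cA'$ to be non-trivial; but by symmetry the same dichotomy applied to $\cA'$ forces $\dim\cH_1^1 = 1$, hence $\cA'$'s non-trivial factor is all of $\cL(\cH_1^2) = \cL(\cH)$, i.e.\ the full algebra), and symmetrically in the latter case. Either way exactly one acts trivially and the other's induced algebra is the full algebra.

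The only mild subtlety --- and the step I expect to need the most care --- is making precise the ``better one induced by $\cA_h^\cH$ or $\cA_{h'}^\cH$'' language and checking that the three listed outcomes are exhaustive and mutually consistent under it. Concretely, I would argue: let $m$ and $m'$ be the number of summands in the decompositions induced by $\cA$ and $\cA'$ respectively, and let $m_{\max} = \max(m,m')$; then $(m_{\max}, \text{dimension profile})$ is one of the three cases since any decomposition of a $3$-dimensional space into $\ge 1$ nonzero summands has profile among $\{(3),(1,2),(1,1,1)\}$, and the profile is determined by $m_{\max}$ except that I should confirm $m_{\max}=2$ cannot give $(1,2)$ for one algebra but something incompatible for the other --- but Corollary~\ref{cor:structure} guarantees both algebras respect \emph{the same} block decomposition $\cH=\bigoplus_i\cH_i$ (the one whose projections lie in $\cZ(\cA)$), so there is no inconsistency. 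Finally, for the extra sentence in the $(3)$-way bullet, I would invoke the hypothesis ``at least one of $h,h'$ acts non-trivially on $\cH$'' to rule out the degenerate sub-case where \emph{both} act trivially (which would anyway be subsumed under $(1,1,1)$ after re-choosing the better decomposition, but it is cleanest to exclude it outright). This completes the enumeration.
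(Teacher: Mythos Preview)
Your proposal is correct and follows essentially the same route as the paper: apply Corollary~\ref{cor:structure} to the commuting induced algebras, enumerate the integer partitions of $3$, and in the $(3)$-way case use primality of $3$ to force each induced algebra to be either trivial or full, then invoke the ``at least one non-trivial'' hypothesis together with Corollary~\ref{cor:structure} to conclude exactly one is full and the other trivial. The paper's version of the $(3)$-way argument is slightly more streamlined (it argues directly that each of $\cA_h^\cH,\cA_{h'}^\cH$ is $\{cI\}_c$ or $\cL(\cH)$, picks the non-trivial one, and forces the other to be trivial), whereas your phrasing via ``the same dichotomy applied to $\cA'$'' is a bit roundabout, but the content is identical.
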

\begin{proof}
	By corollary \ref{cor:structure},  we get a decomposition $\cH=\bigoplus_i \cH_i$  where $\cH_i=\cH^1_i\otimes \cH^{2}_i$. 
 Since the dimension of a subspace must be an integer we get the above 3 possible ways. Further for the $(3)$-way, 
 by assumption the decomposition induced by both $\cA_h^\cH,\cA_{h'}^\cH$ are $\cH=\cH_1$. Since $dim(\cH_1)=dim(\cH)=3$ is a prime, which means it can only be a tensor product of a one-dimensional Hilbert space and a three-dimensional Hilbert space. Thus for both $\cA_h^\cH,\cA_{h'}^\cH$, they equal to either $\{cI\}_c$ or $\cL(\cH)$. Since at least one of $h,h'$ acts non-trivially on $\cH$, we know one of the induced algebra are $\cL(\cH)$, w.l.o.g suppose $\cA_h^\cH=\cL(\cH)$. Again by corollary \ref{cor:structure},
$\cA_{h'}^\cH$ should be $\{cI\}_c$, thus $h'$ acts trivially on $\cH$.
\end{proof}

Similar arguments for qubits are widely used in the proof of the qubit-CLHP-2D is in $\NP$~\cite{schuch2011complexity}\cite{aharonov2018complexity}. We summarize it as below:
\begin{lemma}\label{lem:howcom2}
	If we change $dim(\cH)$ to be $2$ in the statement of Lemma \ref{lem:howcom}, then $h,h'$ must commute on $\cH$ via one of the following ways
	\begin{itemize}
		\item $(1,1)$-way if $\cH=\cH_1\bigoplus \cH_2$ where $dim(\cH_1)=dim(\cH_2)=1$.
		\item $(2)$-way if  $\cH=\cH_1$ where $dim(\cH_1)=2$.  One of $h,h'$ acts trivially on $\cH$, and for another the induced algebra on $\cH$ is the full algebra $\cL(\cH)$.
	\end{itemize}
\end{lemma}

Note that Lemma \ref{lem:howcom} and Lemma
\ref{lem:howcom2} only involve 2 commuting terms $h,h'$, and their overlapping space is only $\cH$. 
Those techniques do not directly apply to 2D Hamiltonians, where some of the terms overlap on 2 qudits.

\section{Qutrit Commuting Local Hamiltonian on 2D}\label{sec:qutrit-2D}

In this section, we will prove that the qutrit-CLHP-2D is in $\NP$. This proof is non-constructive.  Note that if the qutrit-CLHP-2D-projection is in $\NP$, then the qutrit-CLHP-2D is in $\NP$. The proof of this statement is in Appendix.~\ref{appendix:equi_proj}.  Thus 
in this section, w.l.o.g  we assume that all the terms $p$ are projections and prove that the qutrit-CLHP-2D-projection is in $\NP$.

The proof sketch is as follows. In Sec.~\ref{sec:key} we prove that we can further assume that there are no semi-separable qudits. In Sec.~\ref{sec:qutrit2D} we prove that for the qutrit-CLHP-2D without semi-separable qudits, 
    there are strong restrictions on the form of Hamiltonian. Finally,  in Sec.~\ref{sec:schuch} we prove that with such restrictions, we can use Schuch's method~\cite{schuch2011complexity} again.

\subsection{Self-reduction for CLHP with semi-separable qudits}\label{sec:key}

Lemmas in this section work for qudit-CLHP-projection\footnote{W.l.o.g we can assume that all terms are projections by Lemma \ref{lem:qudit_proj_redu}.}, that is we do not assume that each particle is a qutrit, or the Hamiltonian is on 2D.  Recall that qudit-CLHP-projection is: Consider a qudit-CLHP  $H=\sum_i h_i$ where $\{h_i\}_i$ are $k$-local projections for some constant $k$,  $[h_i,h_j]=0$ for $i\neq j$.
The question is to determine whether $\lambda(H)=0$ or $\lambda(H)\geq 1$. 
Note that $\lambda(H)=0$ iff all the commuting projections $\{h_i\}_i$ have a common $0$-eigenvector, i.e. $ker\{h_i\}_i$ is non-trivial.  When $\lambda(H)=0$, the common $0$-eigenvectors of $\{h_i\}_i$ are the ground states of $H$.  We also denote the Hamiltonian $H$ as $\{h_i\}_i$.

The key lemma in this section, Lemma \ref{lem:decrease}, is to prove that when there is a semi-separable qudit, the prover can perform a non-constructive self-reduction for the qudit-CLHP-projection. Here self-reduction means reducing the qudit-CLHP-projection to another qudit-CLHP-projection, where the Hilbert space of the qudit has a smaller dimension. Before going into the formal proofs, in the following, we intuitively explain how Lemma \ref{lem:decrease} works. Specifically, temporarily
 assume that $\lambda(H)=0$ thus we are in the Yes instance and try to prove $\lambda(H)=0$. We begin with the example when there
 is a separable qudit, then generalize this idea to the case of semi-separable qudit, and after that we give the formal proofs.

When there is a separable qudit $q$, the prover can easily perform a constructive self-reduction.
Suppose $q$ is a separable qudit,
by definition, there exists a non-trivial decomposition $\cH^q=\bigoplus_j \cH_j^q$ such that all the terms $\{h_i\}_i$ keep the decomposition invariant. Then there must be a subspace $\cH_{j_0}$ which contains a common $0$-eigenstate of $\{h_i\}_i$. Denote the projector onto $\cH_{j_0}^q$ as $\Pi_{j_0}$. The prover can give the decomposition $\cH^q=\bigoplus_j \cH_j^q$ and the index $j_0$. The verifier checks that $q$ is a separable qudit, then
restricts the space of $q$ from $\cH^q$ to $\cH_{j_0}^q$, and restricts all terms $\{h_i\}_i$ to $\{ h_i^{<j_0>}:=\Pi_{j_0}h_i\Pi_{j_0}\}_i$, and ask the prover to prove that $\{h_i^{<j_0>}\}_i$ has a common $0$-eigenstate. By definition of separable,  the decomposition is non-trivial, thus the new instance $\{h_i^{<j_0>}\}_i$ is strictly simpler in the sense that we strictly decrease the dimension of the qudit $q$. Note that this method is constructive --- the common $0$-eigenstate of the new instance $\{h_i^{<j_0>}\}_i$ is also the common $0$-eigenstate of the original instance $\{h_i\}_i$.

Our key observation is, for qudit-CLHP-projection, even for semi-separable qudit, the $\NP$ prover is able to perform a similar self-reduction, via a non-constructive way. If we follow the intuition of the separable qudit case, one might try restricting $\cH^q\rightarrow \cH^q_j$, and transform every term to be $\Pi_j h_i\Pi_j$. The problem is that since $h_0$ does not keep $\cH^q_j$ invariant, and does not commute with $\Pi_j$, the $\Pi_j h_0\Pi_j$ is no longer a projection. One may also doubt whether the resulting Hamiltonian is commuting. A more serious problem is that, unlike the case for separable qudit, since $h_0$ does not keep $\cH^q_j$ invariant, it is not clear how to connect the ground states of the original Hamiltonian to the ground states of the new Hamiltonian. 
We circumvent the problems by slightly changing the construction --- rounding $\Pi_j h_0\Pi_j$  to its $1$-eigenspace.

\begin{definition}[Reduced Hamiltonian]\label{def:reduce}
	Consider a semi-separable qudit $q$ w.r.t commuting projections $\{h_i\}_i$ and a non-trivial decomposition $\cH^q=\bigoplus_{j=1}^m \cH^q_j$, where $\Pi_j$ is the projection onto $\cH^q_j$. For any $j$, we define its $j$-th reduced Hamiltonian to be $ \{h_i^{(j)}\}_i$, or written as $H^{(j)}:=\sum_i h_i^{(j)}$, where 
	\begin{itemize}
		\item $h_i^{(j)}=\Pi_j h_i \Pi_j$, for $i\geq 1$.
		\item $h_0^{(j)}$ is the projection onto the $1$-eigenspace of $\Pi_j h_0 \Pi_j$. Assign $h_0^{(j)}$ to be $0$ when the $1$-eigenspace  is empty. It is equivalent to interpret $h^{(j)}_0$ is obtained by rounding all the strictly-smaller-than-1-eigenvalues of $\Pi_j h_0 \Pi_j$ to 0.
		\item We restrict the space of $q$ from $\cH^q$ to $\cH^q_j$. Note that all terms $h_i^{(j)}$ including $h_0^{(j)}$ keeps $\cH^q_j$ invariant, thus this restriction of space is well-defined. In summary, the original Hamiltonian $H$ acts on $\cH^q\otimes\left(\otimes_{q'\neq q} \cH^{q'}\right)$, the $j$-th reduced Hamiltonian $H^{(j)}$ acts on $\cH^q_j\otimes\left(\otimes_{q'\neq q} \cH^{q'}\right)$.
	\end{itemize}
\end{definition}
Note that the construction of reduced Hamiltonian is consistent with our previous intuition for the separable qudit  --- If $q$ is separable and $h_0$ also keeps $\cH^q_j$ invariant, then $\Pi_j h_0 \Pi_j$ is a projection thus $h_0^{(j)}=\Pi_j h_0 \Pi_j$. 
It is worth noting that the reduced Hamiltonian   keeps the ``geometry" of the original Hamiltonian:

  \begin{lemma}\label{lem:local}
  In Def.~\ref{def:reduce},
 if $h_i$ acts trivially on qudit $q'$ w.r.t space $\cH^{q'}$, then  $h_i^{(j)}$ acts trivially on qudit $q'$ w.r.t space $\cH^{q'}$ if $q'\neq q$ or $\cH^{q'}_j$ if $q'=q$ . Especially, 
  	If $H=\sum_i h_i$ is $k$-local (or on 2D), so does the $j$-th reduced Hamiltonian $H^{(j)}=\sum_i h_i^{(j)}$. 
  \end{lemma}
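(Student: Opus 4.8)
The plan is to show that forming the $j$-th reduced Hamiltonian cannot introduce new non-trivial action on any qudit $q'$ — i.e. locality (and the 2D plaquette structure) is preserved — and this reduces to a straightforward computation with the definition of $h_i^{(j)}$. First I would dispose of the terms $h_i$ with $i \ge 1$: if such an $h_i$ acts trivially on $q'$, write $h_i = I_{\cH^{q'}} \otimes g$ for some operator $g$ on the complementary space (when $q' \ne q$, this is literally the hypothesis; when $q'=q$, $h_i$ keeps the decomposition $\cH^q = \bigoplus_j \cH^q_j$ invariant since $q$ is only semi-separable and $i \ne 0$, so $h_i\Pi_j = \Pi_j h_i$). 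Then $h_i^{(j)} = \Pi_j h_i \Pi_j$. If $q' \ne q$, then $\Pi_j$ acts only on $\cH^q$ and hence commutes with the $\cH^{q'}$-tensor-factor, so $h_i^{(j)} = I_{\cH^{q'}} \otimes (\text{something})$, i.e. it still acts trivially on $q'$. If $q' = q$, then since $h_i$ keeps $\cH^q_j$ invariant we have $h_i^{(j)} = \Pi_j h_i \Pi_j = h_i \Pi_j$, which equals $h_i$ restricted to $\cH^q_j$; and $h_i$ acting trivially on $\cH^q$ means $h_i = cI_{\cH^q}$, so $h_i^{(j)} = cI_{\cH^q_j}$, trivial on $\cH^q_j$ as claimed.

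Next I would handle $h_0$. If $h_0$ acts trivially on $q'$ and $q' \ne q$, again $h_0 = I_{\cH^{q'}} \otimes g$, so $\Pi_j h_0 \Pi_j = I_{\cH^{q'}} \otimes (\Pi_j g \Pi_j)$ (here $\Pi_j$ commutes with the $\cH^{q'}$ factor), and the $1$-eigenspace projector of $I \otimes (\Pi_j g \Pi_j)$ is $I_{\cH^{q'}} \otimes (\text{$1$-eigenspace projector of } \Pi_j g \Pi_j)$ — so $h_0^{(j)}$ still acts trivially on $q'$. The case $q' = q$ is impossible for $h_0$ under the genuine semi-separability convention ($h_0$ is the one term not respecting the decomposition, so in particular it does not act trivially on $\cH^q$), so there is nothing to check there; if one wants to be safe one can note that if $h_0$ did act trivially on $\cH^q$ then it would keep the decomposition invariant and could be relabeled. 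The key elementary fact being used throughout is: if $A = I \otimes B$ then $\Pi A \Pi = I \otimes (\Pi' B \Pi')$ whenever $\Pi = \Pi' \otimes I$ commutes with the first factor, and the spectral projectors of $I \otimes C$ are $I \otimes (\text{spectral projectors of } C)$.

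Finally, the ``Especially'' clause follows immediately: $k$-locality of $h_i$ means $h_i$ acts trivially on all but $k$ qudits, and the first part shows $h_i^{(j)}$ acts trivially on each of those same qudits (with $\cH^{q'}$ replaced by $\cH^q_j$ for $q'=q$), so $h_i^{(j)}$ is also $k$-local; for the 2D case, each plaquette term $p$ acts trivially on every qudit not on its plaquette, hence so does $p^{(j)}$, so $H^{(j)} = \sum_p p^{(j)}$ is again a sum of plaquette terms on the same lattice. I do not expect any real obstacle here — the only point requiring a little care is the bookkeeping distinguishing $q' = q$ from $q' \ne q$ and making sure that in the $q'=q$ case we invoke that $h_i$ ($i \ne 0$) keeps the decomposition invariant (a consequence of semi-separability already noted in the preliminaries), so that restriction to $\cH^q_j$ is literally restriction of an invariant operator. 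The argument is short and purely linear-algebraic.
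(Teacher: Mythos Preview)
Your proposal is correct and follows essentially the same approach as the paper: both arguments reduce to the elementary observations that conjugating $I_{\cH^{q'}}\otimes g$ by $\Pi_j$ (which acts only on $\cH^q$) preserves the $I_{\cH^{q'}}$ factor, and that the $1$-eigenspace projector of $I\otimes C$ is $I\otimes(\text{1-eigenspace projector of }C)$. The paper's proof is terser (it treats $h_0$ and declares the $i\neq 0$ case ``similar''), and it handles the $q'=q$ case for $h_0$ directly rather than dismissing it via the naming convention as you do, but there is no substantive difference.
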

\begin{proof}
We prove  that if $h_0$ acts trivially on qudit $q'$, then so does 
 $h_0^{(j)}$, the proof for $i\neq 0$ is similar. By assumption,
 $h_0$ acts trivially on $q'$, thus $h_0=I_{\cH^{q'}}\otimes h$ for some projection $h$.
Recall that $q$ is the semi-separable qudit in Def.~\ref{def:reduce}. If $q'=q$, then
\begin{align}
	\Pi_j h_0\Pi_j &=\Pi_j\otimes h\\
					&= I_{\cH^{q'}_j}\otimes h.
\end{align}
 is a projection and acts trivially on $\cH_j^{q'}$.
  If $q'\neq q$,  $\Pi_j h_0\Pi_j= I_{\cH^{q'}} \otimes \Pi_j h\Pi_j$, the 1-eigenspace of $\Pi_j h_0\Pi_j$ is also of form $I_{q'}\otimes ...$ thus  acts trivially on $q'$.
\end{proof}
   
 Besides, the terms in the reduced Hamiltonian are commuting projections.
\begin{lemma}\label{lem:reduce}
	If in Def.~\ref{def:reduce}, $\{h_i\}_i$ are commuting projections, then for any $j$, the $j$-th reduced Hamiltonian $\{h^{(j)}_i\}_i$ are  commuting projections. 
\end{lemma}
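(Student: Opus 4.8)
The claim is that the reduced terms $\{h_i^{(j)}\}_i$ are again commuting projections. I will split this into two parts: (1) each $h_i^{(j)}$ is a projection, and (2) $[h_i^{(j)}, h_k^{(j)}] = 0$ for all $i,k$. The term $h_0^{(j)}$ is a projection by construction (it is defined to be the projector onto the $1$-eigenspace of $\Pi_j h_0 \Pi_j$, or $0$), so for part (1) the only thing to check is that $h_i^{(j)} = \Pi_j h_i \Pi_j$ is a projection for $i \ge 1$. Since $q$ is semi-separable with $h_i$ ($i \ne 0$) keeping the decomposition invariant, we have $[h_i, \Pi_j] = 0$, hence $\Pi_j h_i \Pi_j = h_i \Pi_j = \Pi_j h_i$, and then $(\Pi_j h_i \Pi_j)^2 = \Pi_j h_i^2 \Pi_j = \Pi_j h_i \Pi_j$ using $h_i^2 = h_i$ and $\Pi_j^2 = \Pi_j$. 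It is also clearly Hermitian.

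\textbf{Commutation.} For part (2) I consider cases according to whether the indices equal $0$. If $i, k \ge 1$: since $\Pi_j$ commutes with both $h_i$ and $h_k$, we get $h_i^{(j)} h_k^{(j)} = \Pi_j h_i \Pi_j h_k \Pi_j = \Pi_j h_i h_k \Pi_j = \Pi_j h_k h_i \Pi_j = h_k^{(j)} h_i^{(j)}$, using $[h_i,h_k]=0$. The remaining case is $i = 0$, $k \ge 1$ (the case of both being $0$ is vacuous since there is only one $h_0$). Here the key point is that $h_k^{(j)} = \Pi_j h_k \Pi_j$ restricted to $\cH_j^q$ is really just $h_k$ acting on that subspace (tensored appropriately), so it commutes with \emph{any} operator built from $\Pi_j h_0 \Pi_j$ provided $h_k$ commutes with $\Pi_j h_0 \Pi_j$. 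Concretely: $h_k$ commutes with $\Pi_j$ and with $h_0$, so $h_k$ commutes with $\Pi_j h_0 \Pi_j$. Since $h_0^{(j)}$ is a spectral projection of $\Pi_j h_0 \Pi_j$ (the $1$-eigenspace projector), and any operator commuting with a Hermitian operator commutes with all its spectral projections, $h_k$ commutes with $h_0^{(j)}$. Therefore $h_k^{(j)} h_0^{(j)} = \Pi_j h_k \Pi_j h_0^{(j)} = \Pi_j h_k h_0^{(j)} = \Pi_j h_0^{(j)} h_k = h_0^{(j)} \Pi_j h_k = h_0^{(j)} \Pi_j h_k \Pi_j = h_0^{(j)} h_k^{(j)}$, where I repeatedly use that $h_0^{(j)}$ already lives inside $\cH_j^q$ so $\Pi_j h_0^{(j)} = h_0^{(j)} \Pi_j = h_0^{(j)}$, and that $[h_k, \Pi_j] = 0$.

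\textbf{Main obstacle.} The only subtle step is the claim that $h_k$ (for $k \ge 1$) commutes with the spectral projection $h_0^{(j)}$ of $\Pi_j h_0 \Pi_j$; this requires first establishing $[h_k, \Pi_j h_0 \Pi_j] = 0$, which follows because $h_k$ commutes with each of $\Pi_j$, $h_0$, $\Pi_j$ separately (the first and third by semi-separability applied to the term $h_k$, the middle by the commuting-Hamiltonian hypothesis), and then invoking the standard fact that if $A$ is Hermitian and $[B,A]=0$ then $B$ commutes with every spectral projector of $A$. I would state this standard fact explicitly. Everything else is bookkeeping with $\Pi_j$. I also need to handle the degenerate sub-case where the $1$-eigenspace of $\Pi_j h_0 \Pi_j$ is empty so $h_0^{(j)} = 0$; then all commutators involving $h_0^{(j)}$ are trivially zero and $0$ is a projection, so nothing changes.
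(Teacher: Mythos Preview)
Your proposal is correct and follows essentially the same approach as the paper's proof: both first verify that each $h_i^{(j)}$ is a projection (trivial for $i=0$, using $[h_i,\Pi_j]=0$ for $i\ge 1$), then establish commutation by showing that $h_k$ (or equivalently $h_k^{(j)}$) commutes with $\Pi_j h_0 \Pi_j$ and hence with its spectral projector $h_0^{(j)}$. The only cosmetic difference is that the paper first derives the identity $(\Pi_j h_i \Pi_j)(\Pi_j h_{i'} \Pi_j) = (\Pi_j h_{i'} \Pi_j)(\Pi_j h_i \Pi_j)$ for all $i\ne 0$ and all $i'$ in one chain of equalities, whereas you argue directly that $h_k$ commutes with each factor of $\Pi_j h_0 \Pi_j$; both routes are equally valid.
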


\begin{proof}
Notice that, by the definition of semi-separable, we have $[h_i,\Pi_j]=0, \forall i\neq 0$. It is also important to keep in mind that $[h_0,\Pi_j]$ might not equal $0$.

Firstly we can check that all terms $\{h^{(j)}_i\}_i$ are projections. Notice that 	$h_0^{(j)}$ is a projection by definition. For  $i\neq 0$, since $h_i$ is a projection, and $[h_i,\Pi_j]=0$, we know $h^{(j)}_i:=\Pi_j h_i \Pi_j$ is a projection\footnote{The most direct way to understand this is imagining $\Pi_j,h_i$ are diagonal $0,1$ matrix, since they are commuting they can be diagonalized simultaneously.}.
	In summary, all the terms are projections.

	Then we prove that all terms are commuting. Notice that for any $i\neq 0$, for any $i'$, where $i'$ can be $0$, we have

	\begin{align}
		(\Pi_j h_i \Pi_j ) (\Pi_j h_{i'} \Pi_j ) &= (\Pi_j  \Pi_j ) (\Pi_j h_i h_{i'} \Pi_j ) \label{eq:1}\\
		& = (\Pi_j  \Pi_j ) (\Pi_j  h_{i'} h_i\Pi_j \Pi_j ) \label{eq:2}\\
		&  = (\Pi_j  \Pi_j ) (\Pi_j  h_{i'} \Pi_j h_i \Pi_j ) \label{eq:3}\\
		&  = (\Pi_j   h_{i'} \Pi_j)(\Pi_j  h_i \Pi_j )	 \label{eq:com}
	\end{align}

	where Eq.~(\ref{eq:1}) is from $[h_i,\Pi_j]=0$, Eq.~(\ref{eq:2}) is from $[h_i,h_{i'}]=0$ and $\Pi_j^2=\Pi_j$, Eq.~(\ref{eq:3}) is from $[h_i,\Pi_j]=0$. Note that we never assume that  $h_{i'}$ commutes with  $\Pi_j$.
	
	From the Eq.~(\ref{eq:com}), we know $[h_{i}^{(j)},h_{i'}^{(j)}]=0$ if $i\neq0,i'\neq 0$. Besides, we know for $i\neq 0$, $[h_{i}^{(j)},\Pi_jh_0\Pi_j]=0$.  Thus $h_{i}^{(j)}$ keeps the  $1$-eigenspace\footnote{We emphasize it is the space spanned by all $1$-eigenvector, rather than one of the eigenvector.} of $\Pi_jh_0\Pi_j$ invariant. Since $h_{i}^{(j)}$ is Hermitian, this implies $h_i^{(j)}$ commutes with the projection onto this 1-eigenspace of $\Pi_jh_0\Pi_j$, thus $[h_{i}^{(j)},h_{0}^{(j)}]=0$. In summary, all terms are commuting.
	\end{proof}
	
In summary, we have
\begin{corollary}[of Lemma \ref{lem:local} and Lemma \ref{lem:reduce}]
	If $\{ h_i\}_i$ are $k$-local (or on 2D) qudit commuting projections, then so does the $j$-th reduced Hamiltonian $\{h_i^{(j)}\}_i$.
	\end{corollary}

In addition, we give a cute lemma -- Lemma \ref{lem:rounding}. In the lemma description,
the right side of Eq.~(\ref{eq:round}) is just rounding all  non-zero coefficients $(1-\lambda)$ to $1$. 
This Lemma is simple itself but captures the key idea of ``rounding" used in Lemma \ref{lem:decrease}. It will explain why we can round all non-$1$ eigenvalue of $\Pi_j h_0\Pi_j$ to $0$, and only use the $1$-eigenspace. 

\begin{lemma}\label{lem:rounding}
	Let $f(j,\lambda)$ be a non-negative function. Then
	\begin{align}
			 \sum_{j}\sum_{\lambda\leq 1} (1-\lambda)f(j,\lambda)>0  \text{ iff } \sum_{j} \sum_{\lambda< 1} f(j,\lambda)>0.\label{eq:round}
	\end{align}
\end{lemma}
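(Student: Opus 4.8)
\textbf{Proof plan for Lemma~\ref{lem:rounding}.}
The claim is a purely elementary statement about non-negative functions, so the plan is to prove each direction of the ``iff'' by staring at the sign of each term in the two sums. The key point is that $f(j,\lambda)\geq 0$ everywhere, so in each sum every summand is non-negative; hence a sum is strictly positive if and only if at least one summand is strictly positive.

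For the forward direction, suppose $\sum_{j}\sum_{\lambda\leq 1}(1-\lambda)f(j,\lambda)>0$. Since every term $(1-\lambda)f(j,\lambda)$ with $\lambda\leq 1$ is non-negative, there must exist a pair $(j_0,\lambda_0)$ with $\lambda_0\leq 1$ and $(1-\lambda_0)f(j_0,\lambda_0)>0$. This forces both $1-\lambda_0>0$, i.e.\ $\lambda_0<1$, and $f(j_0,\lambda_0)>0$. Therefore the term $f(j_0,\lambda_0)$ appears in the sum $\sum_{j}\sum_{\lambda<1}f(j,\lambda)$, and since all terms of that sum are non-negative, the sum is strictly positive. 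For the reverse direction, suppose $\sum_{j}\sum_{\lambda<1}f(j,\lambda)>0$; again by non-negativity there is a pair $(j_0,\lambda_0)$ with $\lambda_0<1$ and $f(j_0,\lambda_0)>0$. Then $(1-\lambda_0)>0$, so $(1-\lambda_0)f(j_0,\lambda_0)>0$; since $\lambda_0\leq 1$ this term appears in $\sum_{j}\sum_{\lambda\leq 1}(1-\lambda)f(j,\lambda)$, and all terms there are non-negative, so that sum is strictly positive as well.

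There is essentially no obstacle here — the only thing to be slightly careful about is the boundary case $\lambda=1$, where the coefficient $1-\lambda$ vanishes, so those terms contribute nothing to the left-hand sum and are (correctly) absent from the right-hand sum; this is exactly why replacing ``$\lambda\leq 1$'' with ``$\lambda<1$'' on the right is harmless. One should also note implicitly that both sums are finite (the relevant $\lambda$ range over the finitely many eigenvalues of $\Pi_j h_0\Pi_j$, and $j$ ranges over the finitely many blocks), so ``strictly positive sum'' unambiguously means ``some strictly positive summand.''
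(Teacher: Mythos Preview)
Your proof is correct and follows essentially the same approach as the paper: both sides of the equivalence are shown to be equivalent to the statement ``there exist $j$ and $\lambda<1$ with $f(j,\lambda)>0$,'' using only the non-negativity of the summands. The paper states this in one line; you have simply written out both directions explicitly.
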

\begin{proof}
	Since $f(j,\lambda)$ is non-negative. It suffices to notice that both the left and the right inequalities are equivalent to $\exists j,\exists \lambda<1$ s.t. $f(j,\lambda)>0$.
\end{proof}

Now we are prepared to state our key lemma, which connects the original Hamiltonian to the reduced Hamiltonians.
Inspired by Schuch's idea~\cite{schuch2011complexity}, we will decompose a non-negative term into summation over many non-negative terms. However, our method here uses very different decomposition rules, 
and has key differences from his, which will be discussed in more detail after the proof.

\begin{lemma}[Decrease dimension and rounding]\label{lem:decrease}
Consider an instance of qudit-CLHP-projection on $n$ qudits, where the Hamiltonian is denoted as $H=\sum_i h_i$ for commuting projections $\{h_i\}_i$. Suppose there is a semi-separable qudit $q$ w.r.t. $\{h_i\}_i$ and non-trivial decomposition $\cH^q=\bigoplus_{j=1}^m \cH^q_j$.
For every $j$, define the $j$-th reduced Hamiltonian $H^{(j)}=\sum_i h_i^{(j)}$ as in Definition \ref{def:reduce}. 
 Then 
 \begin{align}
 	\lambda(H)=0 \text{\quad iff \quad } \exists j \text{ s.t } \lambda(H^{(j)})=0.
 \end{align}
	
\end{lemma}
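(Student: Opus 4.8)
The plan is to prove the two directions separately, and in each case to translate the statement $\lambda(H)=0$ into the statement that a certain trace of a product of positive semidefinite operators is strictly positive, using the observation (already exploited in Schuch's argument) that $\lambda(H)=0$ iff $\mathrm{tr}\big[\prod_i (I-h_i)\big]>0$ when the $h_i$ are commuting projections. The key algebraic gadget will be the resolution of identity $I_{\cH^q}=\sum_j \Pi_j$ on the semi-separable qudit $q$, which I will insert on the leftmost and rightmost positions of the product (and, since $[h_i,\Pi_j]=0$ for all $i\neq 0$, freely slide copies of $\Pi_j$ through all the other factors). Write $P_0:=I-h_0$ and $P:=\prod_{i\neq 0}(I-h_i)$; because all $i\neq 0$ terms commute with every $\Pi_j$, one gets $\mathrm{tr}[P\,P_0]=\sum_j \mathrm{tr}[\Pi_j P \Pi_j P_0 \Pi_j] = \sum_j \mathrm{tr}[P^{(j)}\,(\Pi_j P_0 \Pi_j)]$, where $P^{(j)}:=\prod_{i\neq 0}(I-h_i^{(j)})$ is exactly the product of projectors for the reduced Hamiltonian restricted to $\cH^q_j$.

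Now I would spectrally decompose $\Pi_j P_0 \Pi_j = \Pi_j(I-h_0)\Pi_j$ on $\cH^q_j\otimes(\otimes_{q'\neq q}\cH^{q'})$: its eigenvalues lie in $[0,1]$ (it is $\Pi_j$ minus the psd operator $\Pi_j h_0\Pi_j$, and also equals $\Pi_j$ minus something psd that is $\preceq \Pi_j$), so write $\Pi_j P_0 \Pi_j = \sum_{\mu} \mu\, E_\mu$ with $E_\mu$ the corresponding spectral projectors, $0\le\mu\le 1$. Note $h_0^{(j)}$ is precisely the projector onto the $\mu=1$ eigenspace, i.e.\ $h_0^{(j)}=E_1$ (and $I-h_0^{(j)}=\sum_{\mu<1}E_\mu + (\text{stuff outside }\cH^q_j)$, but inside $\cH^q_j$ it is $\sum_{\mu<1}E_\mu$). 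Then $\mathrm{tr}[P^{(j)}(\Pi_j P_0\Pi_j)] = \sum_{\mu\le 1}\mu\,\mathrm{tr}[P^{(j)}E_\mu]$. Since $P^{(j)}$ and $E_\mu$ are commuting projectors (here I use Lemma~\ref{lem:reduce}: $h_0^{(j)}$ commutes with all $h_i^{(j)}$, and more generally all spectral projectors of $\Pi_j h_0\Pi_j$ commute with the $h_i^{(j)}$, $i\neq0$), each $\mathrm{tr}[P^{(j)}E_\mu]$ is the trace of a product of two psd operators, hence $\ge 0$; set $f(j,\mu):=\mathrm{tr}[P^{(j)}E_\mu]\ge 0$. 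Applying Lemma~\ref{lem:rounding}, $\sum_j\sum_{\mu\le1}\mu\,f(j,\mu)>0$ iff $\sum_j\sum_{\mu<1}f(j,\mu)>0$, i.e.\ iff $\exists j$ with $\sum_{\mu<1}\mathrm{tr}[P^{(j)}E_\mu]>0$, i.e.\ iff $\exists j$ with $\mathrm{tr}[P^{(j)}(I-h_0^{(j)})]>0$ — which is exactly $\lambda(H^{(j)})=0$. Chaining the equalities gives $\lambda(H)=0$ iff $\exists j:\lambda(H^{(j)})=0$.

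The main obstacle — and the place the writeup will need to be careful — is bookkeeping around the operator $h_0$ and the projector $\Pi_j$, which do \emph{not} commute. I must justify that the only place $\Pi_j$ cannot be slid freely is past the single factor $(I-h_0)$, and that inserting $\sum_j\Pi_j$ at the two ends together with $\Pi_j^2=\Pi_j$ is enough to sandwich everything so that cross terms with $j\neq j'$ vanish. Concretely: $\mathrm{tr}[\prod_i(I-h_i)] = \mathrm{tr}[(\sum_{j}\Pi_j)\prod_i(I-h_i)(\sum_{j'}\Pi_{j'})]$; pushing the left $\Pi_j$ rightward through all the $(I-h_i)$, $i\neq 0$ factors (legal since $[\Pi_j,h_i]=0$) it meets $(I-h_0)$ and then $\Pi_{j'}$, and the surviving contribution forces $j=j'$ after cycling the trace, leaving $\sum_j \mathrm{tr}[\Pi_j(\prod_{i\neq0}(I-h_i))\Pi_j(I-h_0)\Pi_j]$; then $\Pi_j(\prod_{i\neq0}(I-h_i))\Pi_j = \prod_{i\neq0}\Pi_j(I-h_i)\Pi_j = \prod_{i\neq0}(I-h_i^{(j)})=P^{(j)}$ by the same commutation and idempotence, as in the proof of Lemma~\ref{lem:reduce}. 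A secondary subtlety is the spectral claim $0\preceq \Pi_j(I-h_0)\Pi_j\preceq \Pi_j$: this is immediate from $0\preceq \Pi_j h_0\Pi_j\preceq \Pi_j$ (since $0\preceq h_0\preceq I$ and conjugation by $\Pi_j$ is positive), which is what makes "round all non-$1$ eigenvalues to $0$" land in $[0,1)$ and lets Lemma~\ref{lem:rounding} apply verbatim. Everything else (that $H^{(j)}$ is again a legitimate qudit-CLHP-projection instance, hence $\lambda(H^{(j)})\in\{0,1,2,\dots\}$ and $\lambda(H^{(j)})=0$ iff the relevant trace is positive) is supplied by Lemma~\ref{lem:reduce} and Lemma~\ref{lem:local}.
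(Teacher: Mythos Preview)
Your overall strategy is exactly the paper's: rewrite $\lambda(H)=0$ as $\mathrm{tr}[\prod_i(I-h_i)]>0$, insert the resolution $\sum_j\Pi_j$, slide $\Pi_j$ through every factor except $(I-h_0)$, spectrally decompose the sandwiched $h_0$ piece, apply the rounding lemma, and identify the result with $\mathrm{tr}[\prod_i(I_{\cH_j}-h_i^{(j)})]$. The paper decomposes $\Pi_j h_0\Pi_j=\sum_\lambda \lambda\,\Pi_{j,\lambda}$ and works with coefficients $(1-\lambda)$; you decompose $\Pi_j(I-h_0)\Pi_j=\sum_\mu \mu\,E_\mu$ and work with coefficients $\mu$. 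These are the same computation under $\mu=1-\lambda$.

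However, your writeup contains two compensating errors that you should fix. First, $h_0^{(j)}$ is \emph{not} $E_1$: by Definition~\ref{def:reduce}, $h_0^{(j)}$ is the projector onto the $1$-eigenspace of $\Pi_j h_0\Pi_j$, which is the $\mu=0$ eigenspace of $\Pi_j(I-h_0)\Pi_j$ inside $\cH_j$ (not the $\mu=1$ eigenspace). Second, Lemma~\ref{lem:rounding} applied to coefficients $\mu\in[0,1]$ gives $\sum_{\mu}\mu\,f(j,\mu)>0$ iff $\sum_{\mu>0}f(j,\mu)>0$, \emph{not} $\sum_{\mu<1}$: the zero-coefficient term you drop is $\mu=0$, not $\mu=1$. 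With both corrections, $\sum_{\mu>0}E_\mu\big|_{\cH_j}=I_{\cH_j}-h_0^{(j)}$ and you land on the correct conclusion $\mathrm{tr}[P^{(j)}(I_{\cH_j}-h_0^{(j)})]>0$; your two mistakes happened to cancel, but as written the intermediate claims are false.
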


\begin{proof}
Denote the $n$-qudit space as $\cH=\otimes_q \cH^q$. Define $\cH_j =  \cH^q_j \otimes_{q'\neq q} \cH^{q'}$.
 For clarity, in this proof, we use $I$ for the identity on $\cL(\cH)$. 
 When using  $tr(h)$ we always view $h$ as an operator on $\cH$, and we project out all the qudits, i.e. $tr(h):=\sum_{i}\langle i|h|i\rangle$ where $\{\ket{i}\}_i$ is the  computational basis for $\cH$. Especially, we view $\Pi_j$ as an operator in $\cH$, while view $I_{\cH_j}$ as an operator in $\cH_j$.

Note that $\{h_i\}_i$ are commuting projections,
	 proving  $\lambda(H)=0$  is equivalent to show that
	\begin{align}
		tr\left[ \prod_i (I- h_i)\right] >0.\label{eq:global}
	\end{align}
 Since $\{h_i\}_i$ are commuting, the relative order in the above formula is unimportant. Recall that $\Pi_j$ is the projection onto $\cH_j^q$.
	By assumption $\forall i\neq 0$, $h_i$  keeps $\cH_j^q$ invariant, thus
	
	\begin{align}
		tr\left[  \prod_i (I- h_i)\right] &= tr\left[\left(I-h_0\right) \prod_{i\neq 0}  \left( \sum_j \Pi_j (I- h_i) \Pi_j  \right) \right] \label{eq:dec}\\
		&= tr\left[\left(I-h_0\right)\sum_j \prod_{i\neq 0}  \left(  \Pi_j (I- h_i) \Pi_j  \right) \right] \label{eq:ort}\\
		&= \sum_j tr\left[\left(I-h_0\right)\prod_{i\neq 0}  \left(  \Pi_j (I- h_i) \Pi_j  \right) \right] 
	\end{align}
	
	Eq~(\ref{eq:dec}) is from $\sum_j \Pi_j=I$, and for $i\neq 0$, $h_i$ is Hermitian and keeps $\cH^q_j$ invariant thus $\sum_j \Pi_j h_i\Pi_j = h_i$. Eq~(\ref{eq:ort}) is from $\{\Pi_j\}_j$ are orthogonal from each other.  Besides, since $\Pi_j^2=\Pi_j$ and $tr(M\Pi_j)=tr(\Pi_jM)$ for arbitrary $M$, we have 
	
	\begin{align}
		tr\left[  \prod_i (I- h_i)\right] &= \sum_j tr\left[\left(I-h_0\right) \Pi_j \prod_{i\neq 0}  \left(  \Pi_j (I- h_i) \Pi_j  \right) \Pi_j\right] \\
		&= \sum_j tr\left[ \Pi_j\left(I-h_0\right) \Pi_j \prod_{i\neq 0}  \left(  \Pi_j (I- h_i) \Pi_j  \right)\right]\\
				&= \sum_j tr\left[ \Pi_j\left(I-\Pi_jh_0\Pi_j\right) \Pi_j \prod_{i\neq 0}  \left(  \Pi_j (I- h_i) \Pi_j  \right)\right]\label{eq:temp}
	\end{align}
	Eq.~(\ref{eq:temp}) if from $\Pi_j^2=\Pi_j$.
	Note that $\Pi_jh_0\Pi_j$ is an $n$-qudit  Hermitian operator, thus it can be diagonalized by a unitary matrix. Consider its eigenvalue decomposition,  
	denote the eigenvalues and projections onto the corresponding eigenspace as $\lambda$, $\Pi_{j,\lambda}$. That is
	
	\begin{align}
		\Pi_j h_0 \Pi_j = \sum_{\lambda} \lambda \Pi_{j,\lambda}\label{eq:eig}
	\end{align}	
	
	 	 Note that it might be possible that $\Pi_{j,\lambda}$  acts non-trivially on some $q'\neq q$ as long as $h_0$ acts non-trivially on $q'$. 
	 	Besides,  by definition of $\Pi_{j,\lambda}$ we have
	 \begin{align}
	 	\sum_{\lambda} 	\Pi_{j,\lambda} = I. \label{eq:who} 
	 \end{align}
	 
	 Since $h_0$ is a projection, we have $\Pi_j h_0 \Pi_j\succeq 0$  and $\lambda\in [0,1]$.
	 Use Eqs~(\ref{eq:eig}),(\ref{eq:who}), we have Eq.~(\ref{eq:temp}) becomes 
	 
	 \begin{align}
	 	tr\left[  \prod_i (I- h_i)\right] &= \sum_j  tr\left[ \left(\Pi_j \left( \sum_{\lambda\leq 1}(1-\lambda)\Pi_{j,\lambda}\right)\Pi_j\right) \prod_{i\neq 0}  \left(  \Pi_j (I- h_i) \Pi_j  \right)\right]\\
	 	&=\sum_j\sum_{\lambda\leq 1}\left(1-\lambda\right)tr\left[ \left(\Pi_j\Pi_{j,\lambda}\Pi_j\right)\prod_{i\neq 0}  \left(  \Pi_j (I- h_i) \Pi_j  \right)\right]\label{eq:2222}
	 \end{align}
	 
	  Define
	 
	 \begin{align}
	 	f(j,\lambda)&:=tr\left[ \left(\Pi_j\Pi_{j,\lambda}\Pi_j\right)\prod_{i\neq 0}  \left(  \Pi_j (I- h_i) \Pi_j  \right)\right]
	 \end{align}
	 
	 Since $\{\Pi_j(I-h_i)\Pi_j\}_{i\neq 0}$ are commuting projections, we know $\prod_{i\neq 0}  \left(  \Pi_j (I- h_i) \Pi_j  \right)\succeq 0$ and is Hermitian. Note that $\Pi_j\Pi_{j,\lambda}\Pi_j\succeq 0$ and is Hermitian. Since the trace of the product of two positive semi-definite Hermitian matrices is non-negative\footnote{Let $A,B$ to be arbitrary two Hermitian matrices where $A\succeq 0,B\succeq 0$. Since $A,B$ are Hermitian, consider the eigenvalue decompositions $A=\sum_i a_i |\phi_i\rangle\langle \phi_i|, a_i\geq 0$, $B=\sum_j b_j |\psi_j\rangle\langle \psi_j|, b_j\geq 0$. Then $tr(AB)=\sum_{i,j}a_ib_j |\langle \phi_i|\psi_j\rangle|^2\geq0$. },
	 we have $f(j,\lambda)$ is non-negative,
	 \begin{align}
	 	f(j,\lambda)\geq 0, \forall j,\lambda. \label{eq:nonn} 
	 \end{align}

	  	 By Lemma \ref{lem:rounding},  $tr\left[  \prod_i (I- h_i)\right]>0$ is equivalent to rounding all the non-zero coefficients in Eq.~(\ref{eq:2222}) to 1, that is equivalent as showing that
	 
	 \begin{align}
	 	&\sum_j\sum_{\lambda<1} tr\left[ \left(\Pi_j\Pi_{j,\lambda}\Pi_j\right)\label{eq:20}
	 	\prod_{i\neq 0}  \left(  \Pi_j (I- h_i) \Pi_j  \right)\right]>0\\\Leftrightarrow &\sum_j tr\left[
	 	\left(\Pi_j
	 	 \left(\sum_{\lambda<1} \Pi_{j,\lambda} \right) \Pi_j\right)
	 	 \prod_{i\neq 0}  \left(  \Pi_j (I- h_i) \Pi_j  \right)\right]>0\\
	 	 \Leftrightarrow &\sum_j tr\left[
	 	\left(\Pi_j
	 	 \left(I- h_0^{(j)} \right) \Pi_j\right)
	 	 \prod_{i\neq 0}  \left(  \Pi_j (I- h_i) \Pi_j  \right)\right]>0\label{eq:sub}\\
	 	\Leftrightarrow &\sum_j tr\left[   \left(\Pi_j - h_0^{(j)} \right) \prod_{i\neq 0}  \left(  \Pi_j (I- h_i) \Pi_j  \right)\right]>0. \label{eq:01}\\ 
	 		\Leftrightarrow & \sum_j tr\left[   \prod_{i}  \left(  \Pi_j - h^{(j)}_i  \right)\right]>0 	\label{eq:24}
	 		 	\end{align}
	 
 Eq.~(\ref{eq:sub}) is from Eq.~(\ref{eq:who}) and the definition of $h_0^{(j)}$. Eq.~(\ref{eq:01}) is from $\Pi_j h_0^{(j)}\Pi_j=h_0^{(j)}.$  Note that
 Eqs.~(\ref{eq:20}-\ref{eq:24}) and Eq.~(\ref{eq:nonn}) imply that
\begin{align}
	 tr\left[ \prod_{i}  \left(  \Pi_j - h^{(j)}_i  \right)\right] &= \sum_{\lambda<1}f(j,\lambda)\\
	 &\geq 0.
\end{align}
Thus we further have
 \begin{align}
	 	(\ref{eq:24})
	 	\Leftrightarrow & \exists j \text{ s.t } tr\left[ \prod_{i}  \left(  \Pi_j - h^{(j)}_i  \right)\right]>0 \label{eq:25}\\
	 	\Leftrightarrow & \exists j \text{ s.t } tr^{(j)}\left[ \prod_{i}  \left(  I_{\cH_j} - h^{(j)}_i  \right)\right]>0 \label{eq:28}
	\end{align}
where in Eq.~(\ref{eq:28}), the notation $tr^{(j)}$ means now we restrict the space of qudit $q$ from $\cH^q\rightarrow \cH^q_j$, and the trace is over $\cH_j$. Note that  Eq.~(\ref{eq:28}) is well defined since 
$\prod_{i}  \left(   I_{\cH_j} - h^{(j)}_i  \right)$ keeps $\cH_j$ invariant.

	By Lemma \ref{lem:reduce} we know $\{h_i^{(j)}\}$ are commuting projections on $\cH_j=\cH^q_j\otimes\left(\otimes_{q'\neq q}\cH^q\right)$. 
	 Eq.~(\ref{eq:28})  is equivalent to say $\exists j$ such that the $j$-th reduced Hamiltonian $\{h^{(j)}_i\}_i$ has a common 0-eigenvector, where the space of $q$ is $\cH_j^q$ and the $n$-qudit space is $\cH_j$.
	\end{proof}
	
	\begin{corollary}\label{cor:nosemi}
		If qudit-CLHP-projection without semi-separable qudit is in $\NP$, then qudit-CLHP-projection is in $\NP$. 
	\end{corollary}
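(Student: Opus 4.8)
The plan is to turn Lemma~\ref{lem:decrease} into an $\NP$ verification protocol that peels off semi-separable qudits one at a time and then calls the assumed algorithm for the semi-separable-free case at the bottom of the recursion. Given an instance $H=\sum_i h_i$ of qudit-CLHP-projection, the prover will supply a sequence of reduction steps; at each step it names a qudit $q$, the index of the term $h_0$ permitted to violate invariance, a non-trivial decomposition $\cH^q=\bigoplus_{j=1}^m\cH^q_j$ presented as the list of projections $\{\Pi_j\}_j$ (with $m\ge 2$ and each $\cH^q_j\neq\{0\}$), and an index $j^{\ast}$. The verifier checks that this data witnesses semi-separability, i.e. that $[h_i,\Pi_j]=0$ for all $i\neq 0$ and all $j$ (a polynomial-time computation), then builds the $j^{\ast}$-th reduced Hamiltonian $H^{(j^{\ast})}=\sum_i h_i^{(j^{\ast})}$ exactly as in Definition~\ref{def:reduce}, which only requires diagonalizing $\Pi_{j^{\ast}}h_0\Pi_{j^{\ast}}$ and is again polynomial time. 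By Lemma~\ref{lem:reduce} and Lemma~\ref{lem:local}, $H^{(j^{\ast})}$ is again an instance of qudit-CLHP-projection with the same number of terms and the same locality, on the same qudits except that $\dim(\cH^q)$ has strictly decreased (here we use $m\ge 2$). The verifier iterates on $H^{(j^{\ast})}$; once the prover declares the reductions finished, the verifier checks that the current Hamiltonian has no semi-separable qudit and runs the assumed $\NP$ algorithm on it, accepting iff that algorithm does.

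For correctness I would appeal to Lemma~\ref{lem:decrease}: $\lambda(H)=0$ iff $\lambda(H^{(j)})=0$ for some $j$, equivalently $\lambda(H)\ge 1$ iff $\lambda(H^{(j)})\ge 1$ for every $j$. On a Yes instance an honest prover can therefore always choose a $j^{\ast}$ that keeps the reduced instance a Yes instance and, by the termination bound below, drive the recursion down to a semi-separable-free Yes instance, to which it appends the assumed $\NP$ witness. On a No instance, every reduced Hamiltonian the verifier constructs is still a No instance, and the verifier's ``no semi-separable qudit'' check at the end ensures the assumed algorithm is invoked only on a genuine semi-separable-free No instance, which it rejects together with whatever witness is appended. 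Termination and the polynomial size bound come from the potential $\Phi(H):=\sum_q\dim(\cH^q)$, a non-negative integer bounded by the input size that strictly decreases at each reduction step; hence there are at most $\mathrm{poly}(n)$ steps, the total witness (the list of decompositions and indices together with the final certificate) has polynomial size, and every verification step is polynomial time.

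The main obstacle I anticipate, beyond this bookkeeping, is the soundness-critical requirement that the assumed algorithm be invoked only when no semi-separable qudit remains --- a cheating prover must not be allowed to halt the recursion early at an instance that still possesses a semi-separable qudit. I would discharge this by having the verifier decide ``no semi-separable qudit'' itself in polynomial time: by the Structure Lemma, a qudit $q$ is semi-separable iff for some term $h_0$ the $C^*$-algebra generated by the induced algebras $\{\cA^q_{h_i}\}_{i\neq 0}$ is a proper subalgebra of $\cL(\cH^q)$ (equivalently, its commutant contains a nontrivial projection), which is a finite-dimensional, polynomial-time checkable condition over the $O(m)$ choices of $h_0$ and the $n$ qudits. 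One should also note that a dimension-$1$ qudit is trivial and can simply be deleted, so the recursion never stalls on degenerate qudits. Everything else is routine assembly on top of Lemma~\ref{lem:decrease}.
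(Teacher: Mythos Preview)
Your proposal is correct and takes essentially the same approach as the paper's proof, which simply iterates Lemma~\ref{lem:decrease} until no semi-separable qudit remains (using that each qudit's dimension is bounded, so the total number of reductions is $\mathrm{poly}(n)$). You are more careful than the paper about having the verifier explicitly confirm the absence of semi-separable qudits before invoking the assumed $\NP$ procedure---a soundness point the paper leaves implicit---but this is a natural fleshing-out rather than a different route.
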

	\begin{proof}
	Lemma \ref{lem:decrease} says when there is a semi-separable qudit, an $\NP$ prover can efficiently reduce the qudit-CLHP-projection to a new qudit-CLHP-projection by strictly decreasing the dimension of $q$. Since $dim(q)$ is a constant, an $\NP$ prover can reach to a qudit-CLHP-projection without semi-separable qudit by repeatedly performing Lemma \ref{lem:decrease} in  $poly(n)$ time.
	\end{proof}

	To help better understand Lemma \ref{lem:decrease},
	 let us discuss the differences between Lemma \ref{lem:decrease} and the method used in Schuch's paper~\cite{schuch2011complexity}. We both decompose some of the terms and get a summation of non-negative quantities, but we do such decomposition and projection in different ways. The key difference is that in our method, we guarantee all the quantities still correspond to a commuting local Hamiltonian problem. 
  Our method is more like self-reduction, and we can perform such self-reduction \textit{sequentially} until this are no semi-separable qudits.
	  On the contrary, in Schuch's method, the two projections he used for each qudit\footnote{Schuch wrote his proof in terms of the qubit, but the first decomposition part also works for qudit.}, are not commuting with each other, and the quantity does not correspond to commuting local Hamiltonian anymore, thus this decomposition technique can only be performed once rather than sequentially. 

\subsection{Restrictions on  the qutrit-CLHP-2D without semi-separable qudit}
\label{sec:qutrit2D}

From now on we will consider the 2D geometry and start our proof for the qutrit-CLHP-2D-projection is in $\NP$.	Recall that we allow qudits in the qutrit-CLHP-2D-projection to have different dimensions, i.e. either 1,2 or 3.   By Corollary \ref{cor:nosemi}, we can assume that there are no semi-separable qudits. This ``no semi-separable qudits condition", combined with the 2D geometry, will lead to strong restrictions on the form of the Hamiltonian, i.e. Lemma \ref{lem:comway}. 

We define some notations. As shown in Fig.~\ref{fig:p1234},  when considering the qutrit-CLHP-2D-projection, for a qudit $q$, 
 we use $p_1,p_2$ to denote the two plaquette projections in the diagonal direction, $p'_1,p_2'$ to denote the two plaquettes projections in the anti-diagonal direction. In the whole Sec.~\ref{sec:qutrit2D}, the relative positions of $q,p_1,p_2,p_1',p_2'$ will always obey Fig.~\ref{fig:p1234}.
 We give the following definitions.

\begin{figure}[!ht]
	\centering
 \begin{tikzpicture}
\draw[step=1cm,black,thick] (0,0) grid (3,2);
\filldraw [white] 
(1,0) circle [radius=5pt]
(2,0) circle [radius=5pt]
(1,1) circle [radius=5pt]
(1,2) circle [radius=5pt]
(2,1) circle [radius=5pt]
(2,2) circle [radius=5pt];
\draw(1,0) node{$q_4$};
\draw(2,0) node{$q_5$};
\draw(1,1) node{$q$};
\draw(2,1) node{$q_1$};
\draw(1,2) node{$q_2$};
\draw(2,2) node{$q_3$};
\draw(0.5,0.5) node{\large $p_1'$};
\draw(1.5,0.5) node{\large $p_2$};
\draw(2.5,0.5) node{\large $p_4'$};
\draw(0.5,1.5) node{\large $p_1$};
\draw(1.5,1.5) node{\large $p_2'$};
\draw(2.5,1.5) node{\large $p_3$};
\end{tikzpicture}
	\caption{Notations for  the qutrit-CLHP-2D-projection.}\label{fig:p1234}
\end{figure}
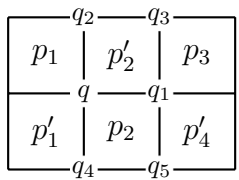

\begin{definition}\label{def:comway}
	Consider a qutrit-CLHP-2D-projection instance, for any qutrit $q$ of dimension $3$, use the notations in Fig.~\ref{fig:p1234}. For $A,B\in\{(1,1,1),(1,2),(3)\}$, we say that $p_1,p_2,p_1',p_2'$ acting on $q$ via $A\times B$-way, if $p_1,p_2$ commute in $A$-way\footnote{See Definition \ref{def:comway_basic}. More specifically, denote the set of qudits that $p_1,p_2$ acting non-trivially on as $S_1,S_2$, then in Definition \ref{def:comway_basic} $\cH:=\cH^q, \cX:=\otimes_{q'\in S_1/q}\cH^{q'},\cZ:=\otimes_{q'\in S_2/q}\cH^{q'}$.} on $\cH^q$, and $p_1',p_2'$ commute in $B$-way on $\cH^q$; or verse visa, i.e. $p_1,p_2$ commute in $B$-way on $\cH^q$, and $p_1',p_2'$ commute in $A$-way on $\cH^q$.
\end{definition}

Note that $p_1,p_2$ only overlap on one qudit -- $q$ -- thus the above sentence ``$p_1,p_2$ commute in A-way on $\cH^q$" is well-defined. Same for $p'_1,p'_2$. On the other hand, $p_1,p_1'$ overlap on two qudits, thus we cannot say  $p_1,p_1'$ commute in some way on $\cH^q$. Another clarification is, that it is possible that some of the terms are identity,  eg. $p_1=p_2=I$, then the situation does not belong to Definition \ref{def:comway}. Those cases will be considered separately and solved easily in the related proofs.

 Recall that by Corollary \ref{cor:nosemi}, we can assume that there are no semi-separable qudits. This will imply certain ways of commuting cannot exist for the qutrit-CLHP-2D-projection.

\begin{lemma}[Legal ways of commuting]\label{lem:comway}
	Consider a qutrit-CLHP-2D-projection Hamiltonian $\{p\}_p$, if there is no semi-separable qudit, then there is no qutrit $q$ with $dim(\cH^q)=3$ such that terms $p_1,p_2,p_1',p_2'$ acting on $q$ via  $(1,2)\times(3)$-way or $(1,2)\times (1,2)$-way.
\end{lemma}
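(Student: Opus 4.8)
The plan is to argue by contradiction: assuming there is a qutrit $q$ with $\dim(\cH^q)=3$ on which $p_1,p_2,p_1',p_2'$ act via $(1,2)\times(3)$-way or $(1,2)\times(1,2)$-way, I will construct a non-trivial decomposition of $\cH^q$ that all but one of the four plaquette terms keep invariant, contradicting the no-semi-separable-qudit hypothesis. The starting observation is that if $p_1,p_2$ commute in $(1,2)$-way on $\cH^q$, then by Definition \ref{def:comway_basic} and Corollary \ref{cor:structure} there is a decomposition $\cH^q=\cH^q_1\oplus\cH^q_2$ with $\dim(\cH^q_1)=1$, $\dim(\cH^q_2)=2$, such that both $p_1$ and $p_2$ keep this decomposition invariant, and moreover on the $2$-dimensional block exactly one of $p_1,p_2$ acts as a full algebra while the other acts trivially there. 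This decomposition is already invariant under $p_1$ and $p_2$; the whole point is to show that it is also invariant under at least one of $p_1',p_2'$, so that at most one term (the other primed one) fails to keep it invariant, making $q$ semi-separable.

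The key step is to exploit the overlap structure. The terms $p_1$ and $p_1'$ share two qudits (namely $q$ and $q_2$, in the labeling of Fig.~\ref{fig:p1234}); similarly $p_2,p_2'$ share $q$ and $q_4$. Because $[p_1,p_1']=0$ and $p_1$ has a rich induced algebra on its ``active'' part of $\cH^q$, I expect to push the Structure Lemma through the shared pair of qudits: the induced algebra $\cA^{\cH^q}_{p_1'}$ must commute with $\cA^{\cH^q}_{p_1}$, hence by Corollary \ref{cor:structure} it is forced into $\bigoplus_i \cI(\cH^{q,1}_i)\otimes\cL(\cH^{q,2}_i)$ relative to the decomposition induced by $\cA^{\cH^q}_{p_1}$. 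The subtlety is that $p_1$ and $p_1'$ overlap on \emph{two} qudits, so Corollary \ref{cor:structure} is not immediately applicable — I would need to treat the shared qudit $q_2$ carefully, perhaps by first fixing a decomposition of $\cH^{q_2}$ induced by the relevant algebras, or by working on $\cH^q\otimes\cH^{q_2}$ and tracking how an induced algebra on that joint space restricts to $\cH^q$. One clean route: since $p_1$ keeps $\cH^q_1\oplus\cH^q_2$ invariant and this decomposition is the one induced by (the center of) $\cA^{\cH^q}_{p_1}$, and since $p_1'$ commutes with $p_1$, one argues that $p_1'$ commutes with the central projections $\Pi_1,\Pi_2$ of $\cA^{\cH^q}_{p_1}$ — here using that these projections, as operators on $\cH^q$ alone, lie in $\cA^{\cH^q}_{p_1}$, and that commutation of $p_1,p_1'$ transfers to commutation of their induced algebras even with a two-qudit overlap (this transfer is the technical heart and may require a short lemma of its own, or a direct computation writing $p_1'=\sum h^{\cH^q}_{ij}\otimes g_{ij}$ and using linear independence of the $g_{ij}$ on the complement).

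Once $p_1'$ (and symmetrically $p_2'$) is shown to commute with $\Pi_1,\Pi_2$, it keeps $\cH^q_1\oplus\cH^q_2$ invariant, and we would be done: $p_1,p_2,p_1'$ all keep this non-trivial decomposition invariant, so $q$ is semi-separable (with $p_2'$ playing the role of $h_0$), contradiction. For the $(1,2)\times(1,2)$-case the same argument gives the conclusion even more directly, since now each of $p_1',p_2'$ comes with its own $(1,2)$-decomposition, and I would want to show these two decompositions of $\cH^q$ (the one from $\{p_1,p_2\}$ and the one from $\{p_1',p_2'\}$) can be reconciled — either they coincide, or their common refinement / one of them is kept invariant by three of the four terms. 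A cleaner framing for both cases: the $1$-dimensional summand in a $(1,2)$-decomposition is canonically attached to the pair, and I can test whether the ``big'' $2$-dimensional blocks from the diagonal pair and from the anti-diagonal pair are compatible; incompatibility would force a finer common decomposition of $\cH^q$ into three $1$-dimensional pieces, again invariant under all four terms and hence making $q$ separable, which is even stronger. \textbf{The main obstacle} I anticipate is precisely the two-qudit overlap: the Structure Lemma and Corollary \ref{cor:structure} as stated are built for terms sharing a single qudit, and making the induced-algebra commutation argument rigorous when $p_1$ and $p_1'$ share the pair $\{q,q_2\}$ will require either an auxiliary lemma about induced algebras under multi-qudit overlap, or an explicit decomposition-and-linear-independence computation to show $[p_1,p_1']=0$ still forces $[\,\cA^{\cH^q}_{p_1},\,\cA^{\cH^q}_{p_1'}\,]=0$.
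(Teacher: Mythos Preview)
Your plan for the $(1,2)\times(3)$ case is fine and is exactly what the paper does: in the $(3)$-way one of $p_1',p_2'$ acts trivially on $\cH^q$, so it automatically preserves the $(1,2)$ decomposition from $\{p_1,p_2\}$, and $q$ is semi-separable with the remaining primed term as the exception.

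The genuine gap is in the $(1,2)\times(1,2)$ case. Your central technical hope --- that $[p_1,p_1']=0$ forces $[\cA^{\cH^q}_{p_1},\cA^{\cH^q}_{p_1'}]=0$ despite the two-qudit overlap --- is simply false. A two-qubit counterexample: $p_1=X\otimes X$ and $p_1'=Z\otimes Z$ commute, yet their induced algebras on the first qubit are generated by $X$ and by $Z$ respectively, which do not commute. So there is no way to conclude that $p_1'$ commutes with the central projections of $\cA^{\cH^q}_{p_1}$, and no reason in general that either $p_1'$ or $p_2'$ preserves the $(1,2)$ decomposition coming from $\{p_1,p_2\}$. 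The ``auxiliary lemma about induced algebras under multi-qudit overlap'' you allude to does not exist in the needed form.

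The paper's argument for this case is quite different and does \emph{not} always produce a contradiction at $q$. It uses that in a $(1,2)$ decomposition one of the two terms acts trivially on the $2$-dimensional block; hence one may write $p_2=|\psi\rangle\langle\psi|^q\otimes h + I_q\otimes\hat h$ and $p_2'=|\psi'\rangle\langle\psi'|^q\otimes h' + I_q\otimes\hat h'$, where $h,\hat h$ live on $q_1,q_4,q_5$ and $h',\hat h'$ on $q_1,q_2,q_3$. Expanding $[p_2,p_2']=0$ and testing it against vectors in $\cH^q$ orthogonal to $|\psi\rangle$ and/or $|\psi'\rangle$ gives: if $|\psi\rangle\parallel|\psi'\rangle$ or $|\psi\rangle\perp|\psi'\rangle$ then $q$ is (semi-)separable; but in the generic case $0<|\langle\psi|\psi'\rangle|<1$ one deduces $hh'=h'h=0$ with $h,h'\neq 0$, and then Lemma~\ref{lem:nontridec} applied on the \emph{neighbouring} qudit $q_1$ shows that $q_1$ is semi-separable. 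So the contradiction may land on a neighbour of $q$, a possibility your plan does not account for.
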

\begin{proof}
To ease notations, in this proof we use $\cH$ to denote the Hilbert space of $q$, instead of using $\cH^q$. 
	With contradiction,  assume that there is a qudit $q$ with $dim(\cH^q)=3$ such that
	\begin{itemize}
		\item $p_1,p_2$ commute  via $(1,2)$-way on $\cH^q$.  The decomposition w.r.t $p_1,p_2$ is $\cH=\cH_1\bigoplus \cH_2$, where $dim(\cH_1)=1$, $\cH_1=span(\ket{\psi})$\footnote{One may wonder how could $\cH_1$ written as $\cH_1^1\otimes \cH_1^2$ in the Structure Lemma. Conceptually one can interpret $\cH_1^1=span\{\ket{\psi}\}$, and $\cH_1^2$ as a one-dimensional space as scalars $\{c\}_c$.} for some $\ket{\psi}\in\cH^q$, and $dim(\cH_2)=2$, $\cH_2=\cH_2^{1}\otimes \cH_2^{2}$. Since $2$ is a prime and $2=2\times 1$, by definition the decomposition $\cH=\cH_1\bigoplus \cH_2$ is  induced by $p_1$ or $p_2$, and $[p_1,p_2]=0$, by Corollary \ref{cor:structure},
  one can check that one of $p_1,p_2$ must acts trivially on $\cH_2$.		W.l.o.g  assume that $dim(\cH_2^{1})=2, dim(\cH_2^{2})=1$, and $p_2$ acts trivially on $\cH_2$.
		\item  For $p_1',p_2'$, consider the following cases:
			\begin{itemize}
				\item[(a)] $p_1',p_2'$ commute via $(3)$-way. In this case one of $p_1',p_2'$ must act trivially on $q$. W.l.o.g assume that $p_1'$ is  the term.
				\item[(b)] $p_1',p_2'$ commute via $(1,2)$-way. Similarly as above notaions for $p_1,p_2$ we have $\cH=\cH'_1\bigoplus \cH'_2$, and 	define $\ket{\psi'}$, $\cH_i'$ similarly. And similarly assume that $p_2'$ acts trivially on $\cH'_2$.
    \end{itemize}
	\end{itemize}

\noindent\textbf{$\square$ Case (a):} We will prove $q$ is semi-separable, which leads to a contradiction. 
Consider the decomposition from  $(1,2)$-way for $p_1,p_2$, that is $\cH=\cH_1\bigoplus \cH_2$. 
Since $p_1'$ acts trivially on $q$, it keeps those subspaces invariant as well.  In summary, all terms but $p_2'$ keep the decomposition invariant.
Since the decomposition $\cH=\cH_1\bigoplus \cH_2$ is non-trivial, by definition $q$ is semi-separable.

\vspace{0.5em}

\noindent\textbf{$\square$ Case (b):}
Consider term $p_2$, by definition, $p_2$ is Hermitian, keeps $\cH_1, \cH_2$ invariant and acts trivially on $\cH_2$. We can write 
\begin{align}
	&p_2 = \ket{\psi}\bra{\psi}\otimes A + (I_q-\ket{\psi}\bra{\psi})\otimes B.
\end{align}

Here $I_q$ is the identity on $\cH^q$. And $A,B$ (might be 0)  act non-trivially  at most on the remaining three qudits $q_1,q_4,q_5$, as in Fig.~\ref{fig:p1234}. Rewriting $h:=A-B, \hat{h}=B$, we have 
\begin{align}
	&p_2 = \ket{\psi}\bra{\psi}\otimes h + I_q\otimes \hat{h}.
\end{align}
Since $p_2$ is Hermitian, we know $h,\hat{h}$ are Hermitian.
Similarly, we can write
\begin{align}
	&p_2' = \ket{\psi'}\bra{\psi'}\otimes h' + I_q\otimes \hat{h}'.
\end{align}

where $h',\hat{h}'$ are Hermitian and  act non-trivially at most on $q_1,q_2,q_3$.  If $h=0$ then $p_2$ acts trivially on $q$. Using similar argument as case (a) we conclude $q$ is semi-separable. The case for $h'=0$ is similar. So w.l.o.g, we assume that
\begin{align}
h\neq 0,h'\neq0,\label{eq:hhn0}	
\end{align}

In the following, we are going to prove that either $q$  or $q_1$ is semi-separable, which will lead to a contradiction. W.l.o.g assume that both $\ket{\psi},\ket{\psi'}$ are unit vectors.
\begin{itemize}
	\item If $\ket{\psi}=e^{i\theta}\ket{\psi'}$ for some $\theta$: then by definition, $p_1,p_2,p_1',p_2'$ keep $\cH_1,\cH_2$ invariant. Thus $q$ is separable.
	\item If $\ket{\psi}\perp \ket{\psi'}$: then one can verify that both $p_2,p_2'$ keeps $\cH_1,\cH_2$ invariant, since $p_1$ also keeps $\cH_1,\cH_2$ invariant. By definition we know $q$ is semi-separable. 
	\item If $|\langle \psi|\psi'\rangle| \neq 0,\neq 1$: notice that	 	
		\begin{align}
			&p_2p_2' = \langle \psi|\psi'\rangle \ket{\psi}\bra{\psi'} \otimes hh'+ \ket{\psi}\bra{\psi} \otimes h\hat{h}'
			+ \ket{\psi'}\bra{\psi'} \otimes \hat{h}h'
			+ I \otimes \hat{h}\hat{h}',\\
			&p_2'p_2 = \langle \psi'|\psi\rangle \ket{\psi'}\bra{\psi} \otimes h'h+ \ket{\psi'}\bra{\psi'} \otimes h'\hat{h}
			+ \ket{\psi}\bra{\psi} \otimes \hat{h}'h
			+ I \otimes \hat{h}'\hat{h}.
		\end{align}
		
		Since $dim(q)=3$, there exists $\ket{\zeta}\neq 0\in \cH^q$ s.t $\ket{\zeta}\perp \ket{\psi},\ket{\zeta}\perp \ket{\psi'}$. By $p_2p_2'=p_2'p_2$ we have  $\langle \zeta|p_2p_2'|\zeta\rangle= \langle \zeta|p_2'p_2|\zeta\rangle$, thus 	
		\begin{align}
			\hat{h}\hat{h}'=\hat{h}'\hat{h}.\label{eq:p21}	
		\end{align}
		
		\item Since $|\langle\psi|\psi'\rangle|\neq 1$, there exists $\ket{\phi}$ such that $\ket{\phi}\perp \ket{\psi}, \ket{\phi}\not\perp \ket{\psi'}$. By Eq.~(\ref{eq:p21}) and $\langle \phi|p_2p_2'|\phi\rangle= \langle \phi|p_2'p_2|\phi\rangle$ we have
			\begin{align}
\hat{h}h'=h'\hat{h}.\label{eq:p22}	
			\end{align}
		\item Similarly to the above point, we have		
			\begin{align}
h\hat{h}'=\hat{h}'h.\label{eq:p23}	
			\end{align}
		\item Finally by Eqs.~(\ref{eq:p21})~(\ref{eq:p22})~(\ref{eq:p23}) and $p_2p_2'=p_2'p_2$ we have 
			\begin{align}
				\langle \psi|\psi'\rangle \ket{\psi}\bra{\psi'} \otimes hh' = \langle \psi'|\psi\rangle \ket{\psi'}\bra{\psi} \otimes h'h\label{eq:31}
			\end{align}	
		Left multiplying $\bra{\phi}$ to both sides of Eq.~(\ref{eq:31}), and use the fact that $\langle \psi'|\psi\rangle \neq 0, \langle \phi|\psi\rangle = 0, \langle \phi|\psi'\rangle \neq 0$, we conclude that
			\begin{align}
				h'h=0.
			\end{align}
		Similarly, we would get 
			\begin{align}
				hh'=0.
			\end{align}
		\end{itemize}	
		In summary, we showed that
		\begin{itemize}
			\item[(i)]  $h,\hat{h},h',\hat{h}'$ are Hermitian.
			\item[(ii)] $\{h,\hat{h}\}$ commute with $\{h',\hat{h}'\}$,.
			\item[(iii)] $hh'=h'h=0$.
		\end{itemize}
		To ease the notations, we abbreviate the induced algebra of Hermitian term $p$ on $q$, i.e. $\cA_p^{\cH^q}$, as   $\cA_p^q$.
		Intuitively the above $(i)(ii)(iii)$ say that  $\cA_{p_2}^{q_1}$,$\cA_{p'_2}^{q_1}$ should commute with each other, and some elements are  orthogonal to each other. 
		 We will show that this implies $q_1$ is semi-separable. In the following we write down a  careful proof,	
		 especially because $h,\hat{h}$ are  operators which might act non-trivially on three qudits $q_1,q_4,q_5$ instead of only on $q_1$. 

     Let $\{|i\rangle \langle j|^{q_4,q_5}\}_{ij}$ be the computational basis for $q_4,q_5$,   $\{|k\rangle \langle l|^{q_2,q_3}\}_{kl}$ for  $q_2,q_3$. 
		Consider the  decomposition of $h,\hat{h}$, we rewrite $p_2$ as
		$$
		\begin{aligned}
			p_2 = \ket{\psi}\bra{\psi}\otimes \sum_{ij} h^{q_1}_{ij}\otimes |i\rangle \langle j|^{q_4,q_5} +I_q\otimes \sum_{ij} \hat{h}^{q_1}_{ij}\otimes |i\rangle \langle j|^{q_4,q_5}
		\end{aligned}
		$$
		where $\{|i\rangle \langle j|^{q_4,q_5}\}_{ij}$ are  linearly independent. Since $\{\ket{\psi}\bra{\psi},I_q\}$ are linearly independent, we know $\{\ket{\psi}\bra{\psi}\otimes |i\rangle \langle j|^{q_4,q_5}\}_{ij}\cup  \{ I\otimes |i\rangle \langle j|^{q_4,q_5}\}_{ij}$ are linearly independent. By lemma \ref{lem:same}, $\cA_{p_2}^{q_1}$ is generated by $\{h^{q_1}_{ij}\}_{ij}\cup \{\hat{h}^{q_1}_{ij}\}_{ij}$.  Define similar notations for $p_2'$, we will have $\cA_{p_2'}^{q_1}$ is generated by $\{h'^{q_1}_{kl}\}_{kl}\cup \{\hat{h}'^{q_1}_{kl}\}_{kl}$.
		
		Note that $\{q_4,q_5\}$ and $\{q_2,q_3\}$ are disjoint,
		 the fact that $[h,h']=0$ implies the two sets $\{h_{ij}^{q_1}\}_{ij}$, $\{h_{kl}^{'q_1}\}_{kl}$ commute. Similarly,
		 (ii) implies the generators of $\cA_{p_2}^{q_1}$ and $\cA_{p_2'}^{q_1}$ commute, thus $\cA_{p_2}^{q_1}$ and $\cA_{p_2'}^{q_1}$ commute. If
		we name the $4$ terms on $q_1$ as $p_2',p_2,p_3,p_4'$ as in Fig.\ref{fig:p1234}.
		Let $\tilde{\cH}$ be the Hilbert space of $q_1$, consider the decomposition $\tilde{\cH}=\bigoplus_i \tilde{\cH}_i$ induced by the induced algebra of $p_2$ on $q_1$, i.e. $\cA_{p_2}^{q_1}$. By Corollary \ref{cor:structure} we know 
		all $\cA^{q_1}_p$, $p\neq p_4'$ keeps the decomposition invariant. Thus all terms expect that $p_4'$ keeps the decomposition $\tilde{\cH}=\bigoplus_i \tilde{\cH}_i$ invariant.
		
		Furthermore, $(iii)$ implies $h^{q_1}_{ij}h'^{q_1}_{kl}=0$, $\forall i,j,k,l$. By Eq.~(\ref{eq:hhn0}) we assume that $h\neq0,h'\neq 0$, thus $\exists h^{q_1}_{ij}\neq 0,h'^{q_1}_{kl}\neq 0$. Consider Lemma \ref{lem:nontridec}, let $\cA:=\cA_{p_2}^{q_1}$, $\cA':=\cA_{p'_2}^{q_1}$, we know the previous decomposition  $\tilde{\cH}=\bigoplus_i \tilde{\cH}_i$ induced by $\cA=\cA_{p_2}^{q_1}$
		 is non-trivial.

		 Combining the implications of $(i)(ii)(iii)$, we conclude $q_1$ is semi-separable, which leads to a contradiction.	
		 \end{proof}

\subsection{Schuch's method and its extensions}\label{sec:schuch}

Schuch~\cite{schuch2011complexity} proved that the qubit-CLHP-2D-projection is in $\NP$.  In this section, we illustrate that his idea can be generalized to prove that a subclass of qudit-CLHP-2D-projection is in $\NP$, see Theorem \ref{thm:remove}. In the next section, i.e. Sec.~\ref{sec:final}, we will show that  the qutrit-CLHP-2D-projection without semi-separable qudits  falls into this subclass.

The proof for  Theorem \ref{thm:remove} is similar to \cite{schuch2011complexity}. The main difference is that we generalize the definitions of removable qudits from Lemma \ref{lem:removable_2} to Lemma \ref{lem:removable_prime}. This generalization brings some subtlety so we write the proof in detail even though the proof proceeds in a similar way as \cite{schuch2011complexity}.

\begin{definition}[Removable qudit]\label{def:remove}
For qudit-CLHP-2D-projection, consider a qudit $q$ and terms $p_1,p_2,p_1',p_2'$ acting on it, as shown in Fig.~\ref{fig:p1234}. Denote $\cH^q$ as $\cH$. Suppose there exists a decomposition $\cH=\bigoplus_i \cH_i$ such that both $p_1,p_2$ keep the decomposition invariant. Let $P_i$ be the projection onto $\cH_i$. Similarly define the decomposition $\cH=\bigoplus_j \cH'_j$ and $P'_j$ for
 for $p_1',p_2'$.
    We say a qudit $q$ is removable if one of the following holds:
    \begin{itemize}
        \item[(i)]    $\forall i,j$, at most one of $p_1,p_2$ acts non-trivially on $\cH_i$, at most one of $p'_1,p'_2$ acts non-trivially on $\cH_j'$, and  $rank(P_i P^{'}_j)\leq 1$\footnote{The rank is w.r.t viewing $P_i,P_j'$ as local operators in $\cL(\cH^q)$.}.
        \item[(ii)]  $p_1',p_2'$ act trivially on $\cH$. Besides, $\cH$ has a tensor-product structure as $\cH=\hat{\cH_1}\otimes \hat{\cH_2}$, such that $p_1\in \cL(\hat{\cH_1})\otimes \cI_{\hat{\cH_2}}$,   $p_2\in \cI_{\hat{\cH_1}} \otimes \cL(\hat{\cH_2})$. Or similar conditions hold when exchanging the name of $p_1,p_2$ with $p_1',p_2'$.
    \end{itemize}
\end{definition}
We give some examples of removable qudits.

\begin{lemma}\label{lem:removable_prime}
 For qudit-CLHP-2D-projection, for any qudit $q$, if  $p_1,p_2$ commute in $(1,1,...,1)$-way on $\cH^q$, $p_1',p_2'$ commute in $(d_1',...,d_t')$-way on $\cH^q$ where $d_i'$ is a prime, $\forall i$. Then $q$ is removable.  
\end{lemma}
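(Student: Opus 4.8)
The plan is to verify directly that alternative (i) of Definition~\ref{def:remove} holds for $q$; the tensor-product alternative (ii) will not be needed. Write $\cH:=\cH^q$. Since $p_1,p_2$ commute in $(1,1,\dots,1)$-way on $\cH$, Definition~\ref{def:comway_basic} supplies a decomposition $\cH=\bigoplus_i\cH_i$, induced by one of $\cA^q_{p_1},\cA^q_{p_2}$, with $\dim\cH_i=1$ for every $i$. Because $p_1$ and $p_2$ act non-trivially only on the single common qudit $q$, the relation $[p_1,p_2]=0$ forces $\cA^q_{p_1}$ and $\cA^q_{p_2}$ to commute, so by Corollary~\ref{cor:structure} \emph{both} $p_1$ and $p_2$ keep this decomposition invariant; this is the decomposition with projections $P_i$ required by Definition~\ref{def:remove}, and each $P_i$ is rank one. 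Moreover, once a block $\cH_i$ is one-dimensional, the restriction of $p_1$ (resp.\ $p_2$) to $\cH_i\otimes(\text{the other qudits it touches})$ is automatically of the form $I_{\cH_i}\otimes M$ for some operator $M$, i.e.\ it acts trivially on $\cH_i$; hence ``at most one of $p_1,p_2$ acts non-trivially on $\cH_i$'' holds (in fact neither does).

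Next I would treat the anti-diagonal pair. By hypothesis $p_1',p_2'$ commute in $(d_1',\dots,d_t')$-way with every $d_j'$ prime, so there is a decomposition $\cH=\bigoplus_j\cH_j'$ with projections $P_j'$ that both $p_1'$ and $p_2'$ keep invariant (again using $[p_1',p_2']=0$, the single overlap qudit $q$, and Corollary~\ref{cor:structure}). By the Structure Lemma~\ref{lem:C} each block factorizes as a tensor product of two Hilbert spaces whose dimensions multiply to $d_j'$, with one of $\cA^q_{p_1'},\cA^q_{p_2'}$ acting as the full algebra on the first factor and trivially on the second, and the other acting (at most) as the full algebra on the second and trivially on the first. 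Since $d_j'$ is prime, one of the two factors is one-dimensional; hence on block $j$ one of $\cA^q_{p_1'},\cA^q_{p_2'}$ consists only of scalars, i.e.\ the corresponding term among $p_1',p_2'$ acts trivially on $\cH_j'$. Thus ``at most one of $p_1',p_2'$ acts non-trivially on $\cH_j'$'' holds for every $j$. (This is exactly where primality enters: for composite $d_j'$ one could instead have a genuine $2\times2$ splitting on which both terms act non-trivially.)

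Finally the rank condition is immediate: for all $i,j$ we have $rank(P_iP_j')\le rank(P_i)=1$ because $P_i$ is a rank-one projection. Combining the three sub-conditions, alternative (i) of Definition~\ref{def:remove} is satisfied for the pair of decompositions just exhibited, so $q$ is removable. I would also note that the few degenerate situations --- some of $p_1,p_2,p_1',p_2'$ being the identity, or $\dim\cH^q\in\{1,2\}$ --- need no separate treatment: an identity term keeps every decomposition invariant, and a trivial induced algebra just yields a trivial (coarser) decomposition, so the argument above applies verbatim.

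There is no genuinely hard step here; the entire content is unwinding the definitions of ``commute in an $(\cdot)$-way'', ``acts (non-)trivially'', and ``removable''. The only two points deserving care are (a) that an operator supported on a one-dimensional block (or one-dimensional tensor factor) really does count as acting trivially in the sense of the earlier definition, and (b) that the decomposition produced by Definition~\ref{def:comway_basic} is kept invariant by \emph{both} plaquette terms of the relevant pair, not merely by the one inducing it --- which is where Corollary~\ref{cor:structure} is invoked, using that $p_1,p_2$ (and $p_1',p_2'$) overlap in only the single qudit $q$.
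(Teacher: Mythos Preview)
Your proof is correct and follows essentially the same approach as the paper's: verify condition (i) of Definition~\ref{def:remove} by using Corollary~\ref{cor:structure} on each pair, exploit primality of the block dimensions to force one factor to be one-dimensional (hence one term trivial on that block), and bound $rank(P_iP_j')$ by $rank(P_i)=1$. Your explanation of the $(1,1,\dots,1)$ case is in fact cleaner than the paper's ``since $1$ is a prime'' phrasing, since you directly observe that any operator on a one-dimensional block is a scalar.
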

\begin{proof} 
Denote $\cH:=\cH^q$.
Let $\cH=\bigoplus_i \cH_i$ be the decomposition w.r.t to $p_1,p_2$ and $(1,1,...,1)$-way.  Let $\cH'=\bigoplus_i \cH'_i$ be the decomposition w.r.t to $p'_1,p'_2$ and $(d_1',...,d_t')$-way. $P_i$,$P_j'$ are the projections onto $\cH_i,\cH_j'$.
     By Definition \ref{def:comway_basic}, the way of commuting is obtained by the Structure Lemma, by Corollary \ref{cor:structure},
     we know $\cH_j'$ has a tensor-product structure as $\cH_j'=\cH_j'^1\otimes \cH_j'^2$, where 
     \begin{align}
         p_1' &= \bigoplus_i \cL(\cH^{'1}_j)\otimes {I_{\cH^{'2}_j}}\\
         p_2' &\subseteq \bigoplus_i {I_{\cH^{'1}_j}}\otimes \cL(\cH^{'2}_j)
     \end{align}
      Since $d_j'$ is a prime,  we know that at most one of $p_1',p_2'$ acts non-trivially on $\cH_j'$.  Similarly, since $1$ is a prime, at most one of $p_1,p_2$ acts non-trivially on $\cH_i$.
     Furthermore, note that $rank(P_i)=1$, $\forall i$. Thus $rank(P_iP_j')\leq \min\{rank(P_i),rank(P_j')\}=1$.
\end{proof}

\begin{lemma}
    \label{lem:removable_2}
 For the qubit-CLHP-2D-projection, for any qubit $q$, if one  of 
  $\{p_1,p_2\}$ or $\{p_1',p_2'\}$ commute in $(1,1)$-way on $\cH^q$, then $q$ is removable.   
\end{lemma}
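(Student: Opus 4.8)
The plan is to obtain Lemma~\ref{lem:removable_2} as an essentially immediate consequence of Lemma~\ref{lem:removable_prime} together with the qubit classification in Lemma~\ref{lem:howcom2}, with one degenerate case dispatched by hand. First I would invoke the symmetry of Definition~\ref{def:remove}: conditions (i) and (ii) are both invariant under exchanging the pair $(p_1,p_2,\{\cH_i\},\{P_i\})$ with $(p_1',p_2',\{\cH_j'\},\{P_j'\})$ --- for (i) because $rank(P_iP_j')=rank(P_j'P_i)$ for Hermitian projections, and for (ii) because the statement already contains the ``exchanging the names'' clause. Hence removability is symmetric in the two pairs, and I may assume without loss of generality that it is $\{p_1,p_2\}$ that commute in $(1,1)$-way on $\cH:=\cH^q$. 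Note this $(1,1)$-way is exactly the $(1,1,\dots,1)$-way of Lemma~\ref{lem:removable_prime} with two blocks, each of dimension $1$ (hence prime in the paper's convention).

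Next I would split on whether $\{p_1',p_2'\}$ act non-trivially on $\cH$. Suppose at least one of $p_1',p_2'$ acts non-trivially on $\cH$. Since all plaquette terms pairwise commute we have $[p_1',p_2']=0$, and since $p_1'$ and $p_2'$ overlap only on $q$ (the non-$q$ supports of $p_1'$ and $p_2'$ are disjoint regions of the lattice in Fig.~\ref{fig:p1234}), the hypotheses of Lemma~\ref{lem:howcom2} hold, so $p_1',p_2'$ commute on $\cH$ either in $(1,1)$-way or in $(2)$-way. In both cases the block dimensions lie in $\{1,2\}$ and are therefore prime. Lemma~\ref{lem:removable_prime} then applies verbatim, with $(p_1,p_2)$ in $(1,\dots,1)$-way and $(p_1',p_2')$ in prime-block-way, yielding that $q$ is removable.

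The remaining case is the degenerate ``boundary'' one where both $p_1'$ and $p_2'$ act trivially on $\cH$; here Lemma~\ref{lem:removable_prime} does not literally apply, since the ``way of commuting'' of $\{p_1',p_2'\}$ on $\cH$ is undefined (Definition~\ref{def:comway_basic} requires one of the terms to act non-trivially). I would instead verify condition (i) of Definition~\ref{def:remove} directly: take $\cH=\cH_1\oplus\cH_2$ to be the $(1,1)$-decomposition for $p_1,p_2$ (so each $\cH_i$ is $1$-dimensional and $rank(P_i)=1$), and take the trivial decomposition $\cH=\cH_1'$ with $P_1'=I_\cH$ for $p_1',p_2'$ (both keep it invariant vacuously). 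Then every operator acts trivially on a $1$-dimensional subspace, so at most one of $p_1,p_2$ --- indeed neither --- acts non-trivially on any $\cH_i$; by hypothesis neither of $p_1',p_2'$ acts non-trivially on $\cH_1'=\cH$; and $rank(P_iP_1')=rank(P_i)=1\le 1$. Thus condition (i) holds and $q$ is removable.

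I do not anticipate a genuine obstacle, since the substance already lives in Lemma~\ref{lem:removable_prime}. The only care required is bookkeeping: checking the disjointness of the non-$q$ supports of $p_1'$ and $p_2'$ so that Lemma~\ref{lem:howcom2} is applicable, and not overlooking the degenerate case in which both anti-diagonal terms act trivially on $q$ --- this case escapes the ``way of commuting'' terminology and must be settled by unwinding Definition~\ref{def:remove} as above.
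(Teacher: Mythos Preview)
Your proposal is correct and follows essentially the same approach as the paper: reduce by symmetry to the case that $\{p_1,p_2\}$ commute in $(1,1)$-way, then split on whether at least one of $p_1',p_2'$ acts non-trivially on $\cH^q$, invoking Lemma~\ref{lem:howcom2} and Lemma~\ref{lem:removable_prime} in the non-trivial case and verifying Definition~\ref{def:remove}(i) directly in the degenerate case. The only cosmetic difference is that in the degenerate case the paper reuses the same $(1,1)$-decomposition for $\{p_1',p_2'\}$ rather than the trivial decomposition $\cH=\cH_1'$ you chose; both choices satisfy condition~(i).
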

\begin{proof}
Denote $\cH:=\cH^q$.
    W.l.o.g assume $p_1,p_2$ commute in $(1,1)$-way w.r.t $\cH=\bigoplus_i \cH_i$. If both $p_1',p_2'$ acts trivially on $q$, then $p_1',p_2'$ also keep $\cH=\bigoplus_i \cH_i$ invariant. One can check that Definition.~\ref{def:remove} (i) holds. If at least one of $p'_1,p_2'$ acts non-trivially on $\cH$, the by Lemma \ref{lem:howcom2} $p'_1,p_2'$ must commute on $\cH$ via $(1,1)$ or $(2)$-way. By Lemma \ref{lem:removable_prime}, $q$ is removable.
\end{proof}

We name those qudits as removable since we will  ``remove" them in the proof of Theorem \ref{thm:remove}. Before proving Theorem \ref{thm:remove}, we summarize \cite{schuch2011complexity}'s result as below. Although written in terms of qubit,   \cite{schuch2011complexity}'s proof directly works for the following lemma for qudits.

\begin{lemma}[One-dimensional structure~\cite{schuch2011complexity}\label{lem:1D_struc}] Consider a qudit-CLHP-2D-projection Hamiltonian $H=\sum_p p$ on $n$ qudits. Let $S$ be the set of qudits where $\forall q\in S$, there exist  $p\in\{p_1,p_2\}$ and  $p'\in\{p_1',p_2'\}$, such that both the induced algebra of $p,p'$ on $\cH^q$ are the full algebra $\cL(\cH^q)$.  Then for any quantum channels $\{\cN_p^q:\cL(\cH^q)\rightarrow \bC\}_{p,q}$, the product $\prod_p \left(\otimes_{q\in S^C}\cN_p^q\right)[I-p]$ has a one-dimensional structure thus
 $tr\left(\prod_p \left(\otimes_{q\in S^C}\cN_p^q\right)[I-p]\right)$  can be computed in classical polynomial time.
\end{lemma}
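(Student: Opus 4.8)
The plan is to carry out Schuch's argument~\cite{schuch2011complexity} essentially verbatim, after isolating the single fact about one qudit that it uses and checking that this fact is dimension independent. Fix channels $\{\cN_p^q\}$ as in the statement and write $\hat p:=\bigl(\otimes_{q\in S^c}\cN_p^q\bigr)[I-p]$. Since each $\cN_p^q:\cL(\cH^q)\to\bC$ collapses the algebra of qudit $q$ to scalars, $\hat p$ acts only on those corners of $p$ that lie in $S$, and the quantity to be evaluated is $tr\bigl(\prod_p\hat p\bigr)$ on $\otimes_{q\in S}\cH^q$, with $\prod_p$ taken in the given order. Next, read off the local structure at a qudit $q\in S$: by definition some diagonal plaquette $p_1$ and some anti-diagonal plaquette $p_1'$ satisfy $\cA_{p_1}^q=\cA_{p_1'}^q=\cL(\cH^q)$; the other diagonal plaquette $p_2$ overlaps $p_1$ only in $q$ and commutes with it, so Corollary~\ref{cor:structure} forces $\cA_{p_2}^q$ into the commutant of $\cL(\cH^q)$, i.e.\ $\cA_{p_2}^q=\{cI\}$ and $p_2$ acts trivially on $q$; likewise $p_2'$ acts trivially on $q$. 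Thus at every $q\in S$ exactly two plaquettes act nontrivially---one diagonal and one anti-diagonal---and each induces the full algebra on $q$. This is exactly the dichotomy Schuch derives at non-removable qubits from Fact~\ref{fact1} (and that Lemma~\ref{lem:removable_prime} reflects for qudits), and it is oblivious to $\dim\cH^q$.

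With this in hand, define the graph $G$ whose vertices are the plaquettes and which has one edge per $q\in S$, joining the two plaquettes that act nontrivially on $q$; each $\hat p$ then acts nontrivially exactly on the corners of $p$ that carry a $G$-edge at $p$, and on each such corner $p$ has full induced algebra. The core of the proof is the combinatorial claim that no plaquette can be full on three or more of its corners lying in $S$, so that $\deg_G(p)\le 2$ for all $p$ and $G$ is a disjoint union of simple paths and cycles. The verification is a finite case analysis on the constant-size patch of the square lattice around a plaquette $p$: assuming $p$ is full on three corners, one uses the dichotomy of the previous paragraph at $p$'s shared-edge neighbours to propagate ``full/trivial'' labels around the patch and reach a contradiction (one of the three corners ends up not satisfying the definition of $S$). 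This analysis uses only planar combinatorics together with the dimension-independent dichotomy, so Schuch's proof transfers without change.

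Finally, $tr\bigl(\prod_p\hat p\bigr)$ factorizes as a product over the connected components of $G$ (operators belonging to different components have disjoint supports, hence commute, so the given order may be regrouped component by component). On a path or a cycle, consecutive operators $\hat p$ share exactly one qudit, of dimension at most $d:=\max_q\dim\cH^q$; regarding each $\hat p$ as a matrix on the tensor product of its (at most two) active qudits, the contribution of each component is an ordinary matrix-product trace, evaluable with $\mathrm{poly}(n)$ multiplications of matrices of size at most $d^2\times d^2$. Since there are $O(\mathrm{poly}(n))$ plaquettes, and hence components, the whole value is computed in classical polynomial time.

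The step I expect to be the real obstacle is the maximum-degree-$2$ bound of the second paragraph: one must be certain that Schuch's geometric case analysis, written for qubits, survives the move to higher-dimensional qudits. It does---as isolated above, the only way the local dimension enters is through the statement ``at a site in $S$ exactly one diagonal and one anti-diagonal plaquette act nontrivially,'' which is a direct consequence of the Structure Lemma (Lemma~\ref{lem:C}) for every dimension.
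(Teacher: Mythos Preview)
Your overall framework is the right one and is exactly what the paper intends: the paper does not prove this lemma but simply cites Schuch and asserts that his argument is dimension-independent, so your first paragraph (isolating the single-qudit dichotomy and observing it follows from Corollary~\ref{cor:structure} for any $d$) is precisely the content of the paper's one-line justification.

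However, your combinatorial claim in the second paragraph is wrong as stated, and this is a genuine gap. You assert that ``no plaquette can be full on three or more of its corners lying in $S$,'' hence $\deg_G(p)\le 2$. But the paper's own Figure~\ref{fig:introfig}(c) is a counterexample: there $p_b$ acts nontrivially on all four of its corners $q_1,q_2,q_3,q_4\in S$, so $\deg_G(p_b)=4$ in your multigraph. The contradiction you describe (``one of the three corners ends up not satisfying the definition of $S$'') simply does not arise.

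What is true, and what makes the contraction one-dimensional, is that each plaquette has at most two \emph{distinct} neighbours in $G$ (so multi-edges are allowed, raising the bond dimension to at most $d^2$, but the underlying simple graph is a union of paths and cycles). The missing step is a propagation argument across a shared \emph{edge}, not a shared corner: if $p$ is full at two adjacent corners $q_a,q_b$ and the edge-neighbour $p'$ sharing that edge is trivial at $q_a$, then $p'=I_{q_a}\otimes p''$ and $p,p''$ now overlap only on $q_b$; Corollary~\ref{cor:structure} then forces $p''$ (hence $p'$) to be trivial at $q_b$ as well. Thus each edge-neighbour of $p$ is either full at both shared corners or trivial at both. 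Running this around the four edges of $p$ together with the per-corner dichotomy shows that exactly two opposite edge-neighbours are attached to $p$. This is the step Schuch actually uses, and it is indeed dimension-independent---but it is not the statement you wrote, and your sketched case analysis would not produce it.
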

Here $\left(\otimes_{q\in S^C}\cN_p^q\right)[I-p]$ means applying 
$\otimes_{q\in S^C}\cN_p^q$ on $(I-p)$, which can be interpreted as tracing out qudits in $S^c$. 

\begin{theorem}\label{thm:remove}
    Consider a qudit-CLHP-2D-projection instance, if for each qudit $q$, either (a) $q$ is removable, or (b) there exists $p\in\{p_1,p_2\}$ and $p'\in\{p_1',p_2'\}$ such that both the induced algebra of $p,p'$  on $\cH^q$ are the full algebra $\cL(\cH^q)$, then the corresponding qudit-CLHP-2D-projection instance is in $\NP$. 
\end{theorem}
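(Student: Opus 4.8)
The plan is to follow Schuch's strategy: first sequentially ``trace out'' all removable qudits, and then show the remaining Hamiltonian has a one-dimensional (chain/cycle) structure so that its frustration-free question is decidable in classical polynomial time. Since we have reduced to the projection case, the goal is to decide whether $\operatorname{tr}\!\left[\prod_p (I-p)\right] > 0$. The $\NP$ prover's witness will consist of, for each removable qudit $q$, the decomposition $\cH^q = \bigoplus_i \cH_i$ (and $\bigoplus_j \cH'_j$) witnessing removability together with the index of the block that is claimed to contain a ground state; the verifier checks removability (a local, polynomial-time check on constant-dimension matrices) and then runs the classical contraction on the residual 1D structure.

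The first main step handles a single removable qudit $q$. Suppose $q$ satisfies case (i) of Definition \ref{def:remove}. Insert $\sum_i P_i = I$ on $\cH^q$ into the trace and use that $p_1, p_2$ keep $\{\cH_i\}$ invariant while $p_1', p_2'$ keep $\{\cH'_j\}$ invariant: this lets us write
\begin{align}
\operatorname{tr}\!\left[\prod_p (I-p)\right] = \sum_{i,j} \operatorname{tr}\!\left[ P_i P'_j \big(\textstyle\prod_{p \in \{p_1,p_2\}}(I-p)\big) P'_j \big(\prod_{p \in \{p_1',p_2'\}}(I-p)\big) P_i \prod_{p \notin \{p_1,p_2,p_1',p_2'\}}(I-p) \right].
\end{align}
Because at most one of $p_1, p_2$ acts nontrivially on $\cH_i$ and at most one of $p_1', p_2'$ acts nontrivially on $\cH'_j$, and because $\operatorname{rank}(P_i P'_j) \le 1$, each summand is (up to tracing out $q$ against a rank-$\le 1$ operator) a trace of products of PSD operators on the remaining qudits; crucially, tracing $q$ out turns the terms touching $q$ into operators on $q$'s other neighbors, and the rank-$\le 1$ condition ensures these remain genuine terms of a CLHP on the smaller system (one applies the quantum channel $\cN^q(\cdot) = \langle v_{ij}|\cdot|v_{ij}\rangle$ where $P_iP'_j = |v_{ij}\rangle\langle w_{ij}|$, suitably normalized). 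The prover specifies the pair $(i,j)$, and by positivity $\operatorname{tr}[\prod_p(I-p)]>0$ iff the chosen summand is positive. Case (ii) of Definition \ref{def:remove} is simpler: here $q$ genuinely factors as $\hat\cH_1 \otimes \hat\cH_2$ with $p_1, p_2$ acting on separate tensor factors and $p_1', p_2'$ trivial, so $q$ can be split and absorbed into $p_1$ and $p_2$ independently without any loss. I would unify both cases by noting they produce, after removal, a new qudit-CLHP-2D-projection instance on fewer qudits (or qudits of smaller total dimension), still satisfying the hypothesis of the theorem for the unremoved qudits, so the removal can be iterated $\operatorname{poly}(n)$ times.

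After all removable qudits are removed, every remaining qudit $q$ falls into case (b): there is $p \in \{p_1,p_2\}$ and $p' \in \{p_1',p_2'\}$ whose induced algebras on $\cH^q$ are the full algebra $\cL(\cH^q)$. This is exactly the hypothesis of Lemma \ref{lem:1D_struc} (with $S$ the set of surviving qudits and $S^c$ the removed ones, whose removal is implemented by the channels $\cN_p^q$). Invoking that lemma, $\operatorname{tr}\!\left[\prod_p \big(\otimes_{q \in S^c}\cN_p^q\big)[I-p]\right]$ has a one-dimensional tensor-network structure — the plaquette-adjacency graph on surviving qudits is a disjoint union of chains and cycles — and so is computable in classical polynomial time by contracting the tensors. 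The verifier accepts iff this value is positive. I expect the main obstacle to be the bookkeeping in the single-removal step: verifying carefully that after tracing out a removable qudit the $\operatorname{rank}(P_iP'_j)\le 1$ condition really does keep every residual operator a legitimate (bounded-dimension) term of a commuting local Hamiltonian, and that the ``at most one of $p_1,p_2$ acts nontrivially'' condition is what prevents the new instance from gaining a high-degree vertex — this is the subtlety the authors flag as arising from generalizing Lemma \ref{lem:removable_2} to Lemma \ref{lem:removable_prime}, and it is where the proof genuinely differs from \cite{schuch2011complexity}.
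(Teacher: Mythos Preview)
Your overall strategy—remove the removable qudits, then invoke Lemma~\ref{lem:1D_struc} on the survivors—is correct, but the execution has a real gap at the removal step. You claim that after tracing out a single removable qudit $q$ you obtain ``a new qudit-CLHP-2D-projection instance on fewer qudits \ldots\ so the removal can be iterated.'' This is not true: the residual operators $\langle\alpha|(I-p_1)|\alpha\rangle$, $\operatorname{tr}_q(P_i(I-p_2)P_i)/\operatorname{tr}_q(P_i)$, etc., are PSD but are in general neither projections nor pairwise commuting. So after one removal you no longer have a CLHP, and you cannot run the same positivity-splitting argument on the next removable qudit. The paper's footnote explicitly warns that reordering the factors (which a naive iteration would force) can produce summands that are \emph{negative}, so the ``pick the positive branch'' step breaks.

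The paper's fix is to remove all removable qudits \emph{simultaneously}, not sequentially. It two-colors the plaquettes as black and white and orders the full product as $\prod_{p\in\cP_B}(I-p)\prod_{p'\in\cP_W}(I-p')$. Because for every removable qudit the pair $\{p_1,p_2\}$ is one color and $\{p'_1,p'_2\}$ the other, inserting $\prod_q P^q_{i_q}$ between the black block and itself, and $\prod_q P'^{q}_{j_q}$ analogously, yields a global decomposition $\operatorname{tr}[\prod_p(I-p)]=\sum_{\{i_q,j_q\}}\operatorname{tr}[A\cdot B]$ where $A$ and $B$ are each products of \emph{commuting} projections, hence PSD, so every summand is nonnegative in one shot. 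Only then is each summand rewritten via the rank-$1$ trick $P_iP'_j=|\alpha\rangle\langle\beta|$ to project out all removable qudits at once, and Lemma~\ref{lem:1D_struc} applied to the residual product. Your self-reduction picture should be replaced by this global two-coloring argument; the subtlety you anticipated is indeed in the bookkeeping, but it is about maintaining positivity across all removals at once, not about preserving the CLHP structure after each step.
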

\begin{proof}
    The proof follows the idea in ~\cite{schuch2011complexity}. The main difference is that we generalize the definitions of removable qudits from Lemma \ref{lem:removable_2} to Lemma \ref{lem:removable_prime}. This generalization brings some subtlety so we write the proof in detail even though the proof proceeds in a similar way as Schuch's. 
    
    Imagine the 2D grid as a chess board and  color the plaquettes as black and white. Denote the set of the black plaquettes as $\cP_B$, the white plaquettes as $\cP_W$. 
    Use same notations as Definition \ref{def:remove}, for any removable qudit $q$, w.l.o.g assume that $p_1,p_2$ are black, $p_1',p_2'$ are white. If $q$ satisfies Definition \ref{def:remove} (i), 
     one can notice that
    \begin{align}
        &tr\left((I-p_1)(I-p_2)(I-p'_1)(I-p'_2)\right)\label{eq:main}\\ 
        &=  tr\left( \left[\sum_i P_i(I-p_1)(I-p_2)P_i\right] \left[\sum_j P^{'}_j (I-p'_1)(I-p'_2) P^{'}_j\right]
        \right)\\
        &= \sum_{i,j} tr\left( [ P_i(I-p_1)(I-p_2)P_i]  [P^{'}_j (I-p'_1)(I-p'_2) P^{'}_j]
        \right)\label{eq:subterm}
    \end{align}
    Note that by definition of removable qudit (i), at most one of $p_1,p_2$ acts non-trivially on $\cH_i$, w.o.l.g assume that it is $p_1$. This means 
    $P_i(I-p_2)P_i$ is $P_i$ tensor some operator. Formally,
        \begin{align}
        P_i (I-p_2) P_i &= tr_q(P_i (I-p_2) P_i)/tr_q(P_i) \otimes P_i\\
         &=tr_q(P_i (I-p_2) P_i)/tr_q(P_i) \otimes I_q \cdot P_i
    \end{align}
 
    Similarly, we assume that $p'_1$ acts non-trivially on $\cH'_j$.  We have
     \begin{align}
    &tr\left( P_i(I-p_1)(I-p_2)P_i P^{'}_j (I-p'_1)(I-p'_2) P^{'}_j\right)\label{eq:38} \\
    &= tr\left( P_i(I-p_1)P_i(I-p_2)P_i P^{'}_j (I-p'_1)P_j(I-p'_2) P^{'}_j\right) \\
    &= 
        tr\left( P_i(I-p_1) \cdot tr_q(P_i(I-p_2) P_i)/tr_q(P_i)\otimes I_q
        \cdot P_i \cdot P^{'}_j (I-p'_1) \cdot tr_q(P'_j (I-p'_2) P'_j)/tr_q(P'_j)\otimes I_q \cdot P^{'}_j   \right)
    \end{align}
Note that $Tr(MN)=Tr(NM)$ for any $M,N$. Further, by definition of (i), there exist un-normalized vectors $\ket{\alpha}_q,\ket{\beta}_q$ on $q$ such that $P_iP_j'=\ket{\alpha}_q\bra{\beta}_q$.  Then: 
    \begin{align}
    &= tr\left(\ket{\beta}_q \bra{\alpha }_q(I-p_1)\cdot tr_q(P_i (I-p_2) P_i)/tr_q(P_i)\otimes I_q \cdot\ket{\alpha}_q\bra{\beta}_q (I-p'_1) \cdot tr_q(P'_j (I-p'_2)P'_j)/tr_q(P'_j)\otimes I_q  \right)\\
     &= tr\left( \bra{\alpha }_q(I-p_1) \ket{\alpha}_q \cdot tr_q(P_i (I-p_2) P_i)/tr_q(P_i)\cdot \bra{\beta}_q (I-p'_1)\ket{\beta}_q  \cdot tr_q(P'_j (I-p'_2)P'_j)/tr_q(P'_j)  \right)\label{eq:proj}
    \end{align}  
     Recall that $\{p_1,p_2,p_1',p_2'\}$ are commuting projections.
    In summary, the above calculations show two things:
    \begin{itemize}
        \item In Eq.~(\ref{eq:subterm}), each quantity, i.e. Eq.~(\ref{eq:38}), is the trace of the product of two positive semi-definite Hermitian matrices, 
         thus each quantity is non-negative. Proving $tr((I-p_1)(I-p_2)(I-p_1')(I-p_2'))>0$ is equivalent to show that one of the quantitys is $>0$.
        \item In Eq.~(\ref{eq:proj}), we somehow ``project out qudit $q$" for all terms.  Then calculating Eq.~(\ref{eq:38}) is equivalent to computing the trace of a term without qudit $q$, i.e. Eq.~(\ref{eq:proj}). 
        This is why  $q$ is named removable. 
        \item Also note that in Eqs.~(\ref{eq:38})-(\ref{eq:proj}), we do not change the relative order of $(I-p_i)$ or $(I-p_i')$. This is important when considering multiple removable qudits at the same time.
    \end{itemize}
If $q$ satisfies  Definition \ref{def:remove}  (ii), we can also ``tracing out $q$''. Denote $d_q$ as the dimension of $q$. Interpret $q$ as two qudits $q_1,q_2$ w.r.t $\hat{\cH_1},\hat{\cH_2}$, we have
\begin{align}
    &tr\left((I-p_1)(I-p_2)(I-p'_1)(I-p'_2)\right)\\ &= tr\left( tr_q ((I-p_1))/{d_{q_2}} \cdot tr_q(I-p_2)/{d_{q_1}} \cdot  tr_q ((I-p'_1))/{d_q} \cdot tr_q ((I-p'_2))/{d_q} \right)\label{eq:end}
\end{align}

Eqs.~(\ref{eq:main})-(\ref{eq:end}) illustrate how to project out a single removable qudit.
Similarly, when calculating $tr(\prod_p (I-p))$ we can project out all the removable qudits. Specifically, we first perform Eq.~(\ref{eq:main})-Eq.~(\ref{eq:subterm}) simultaneously for all removable qudits, and then perform Eq.~(\ref{eq:38})-(\ref{eq:end}) to project out all removable qudits.  It is worth noting that we should be careful about the relative orders of each $(I-p)$ --- To perform the calculations in  Eq.~(\ref{eq:main})-Eq.~(\ref{eq:subterm}), one requires that for each $q$, terms $p_1,p_2$ are put in the left, and $p_1',p_2'$ in the right.\footnote{If we begin with $tr\left(P_{i_1}(I-p_1)P_{i_1} P'_{j_1}(I-p_1')P'_{j_1} P_{i_2}(I-p_2)P_{i_2} P'_{j_2}(I-p_1')P'_{j_2}\right)$ the above method doesn't work since this quantity might be negative.} To perform such decomposition for all removable qudits simultaneously, it suffices to put all the black terms on the left and all the white terms on the right. Denote the set of removable qudits as $R$, we have
    \begin{align}
        tr\left( \prod_p (I-p)\right) &= tr\left( \prod_{p\in\cP_B} (I-p) \prod_{p'\in\cP_W} (I-p')\right) \\
        & = \sum_{ i_q,j_q; q\in R} tr(\prod_{q \in R} P^q_{i_q}  
    \left(  \prod_{p\in\cP_B} (I-p) \right) \prod_{q \in R} P^q_{i_q} \prod_{q \in R} P^{'q}_{j_q}  \left( \prod_{p\in\cP_W} (I-p)\right)  \prod_{q \in R} P^{'q}_{j_q}) \label{eq:final_1D}
    \end{align}
    Then for each removable qudit $q$, we perform similar operations as Eq.~(\ref{eq:38})-(\ref{eq:end}) to project out $q$ . 
    
Finally for the quantity in Eq.~(\ref{eq:final_1D}), we project out all removable qudits for every $(I-p)$. All the remaining qudits originally correspond to type (b) in the Theorem description. By Lemma \ref{lem:1D_struc} we know  the quantity in Eq.~(\ref{eq:final_1D}) can be computed in polynomial time. 

 In summary,  proving $tr(\prod_p(I-p))>0$ is equivalent to proving that one of the quantity in Eq.~(\ref{eq:final_1D}) is $>0$, where the quantity is tracing a product which has a one-dimensional structure, thus can be computed in classical polynomial time. Thus we conclude the qudit-CLHP-2D-projection which satisfies  Theorem \ref{thm:remove}'s conditions is in $\NP$.
\end{proof}

\subsection{Qutrit-CLHP-2D is in $\NP$}\label{sec:final}

Finally,  to prove that  the qutrit-CLHP-2D is in $\NP$, it suffices to notice that the qutrit-CLHP-2D-projection without semi-separable qudit satisfying conditions in Theorem \ref{thm:remove}.

\begin{lemma}\label{lem:qutritcon}
For any  qutrit-CLHP-2D-projection instance without semi-separable qudit, every qudit satisfies one of the two conditions in Theorem \ref{thm:remove}.
\end{lemma}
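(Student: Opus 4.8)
The plan is to walk through the possible dimensions of $q$ and, in the qutrit case, through the ``ways of commuting'' that survive Lemma~\ref{lem:comway}, checking in each case that $q$ is either removable (Definition~\ref{def:remove}) or satisfies condition~(b) of Theorem~\ref{thm:remove}. The unifying observation is that on a Hilbert space of dimension at most $3$ no block of a Structure-Lemma decomposition can have both terms of a commuting pair acting non-trivially on it (that would force the block to have dimension at least $4$); so for any commuting pair of terms acting on $q$, Corollary~\ref{cor:structure} always supplies a decomposition of $\cH^q$, kept invariant by both terms, in which at most one of the two terms acts non-trivially on each block.

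First I would dispose of the small cases. If $\dim(\cH^q)=1$, every term is trivial on $q$ and $q$ is removable by Definition~\ref{def:remove}(i) with the one-block decomposition. If $\dim(\cH^q)=2$, then whenever a pair among $\{p_1,p_2\}$, $\{p_1',p_2'\}$ has a non-trivial term, Lemma~\ref{lem:howcom2} says it commutes on $\cH^q$ in $(1,1)$-way or $(2)$-way: if either pair is in $(1,1)$-way then $q$ is removable by Lemma~\ref{lem:removable_2}, and if both are in $(2)$-way then each pair contains a term whose induced algebra on $\cH^q$ equals $\cL(\cH^q)$, so choosing those two terms verifies condition~(b). If instead one pair is entirely trivial on $\cH^q$, that pair keeps the decomposition of $\cH^q$ into two one-dimensional subspaces invariant, the other pair has an invariant decomposition with at most one term non-trivial per block by the observation above, and the corresponding rank-one projectors make Definition~\ref{def:remove}(i) hold.

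The main case is $\dim(\cH^q)=3$. Suppose first that each of $\{p_1,p_2\}$ and $\{p_1',p_2'\}$ has a non-trivial term. Then by Lemma~\ref{lem:howcom} each pair commutes on $\cH^q$ in $(1,1,1)$-, $(1,2)$- or $(3)$-way, and since there is no semi-separable qudit, Lemma~\ref{lem:comway} forbids the pairings $(1,2)\times(1,2)$ and $(1,2)\times(3)$. Hence either at least one of the two pairs commutes in $(1,1,1)$-way, or both commute in $(3)$-way. In the first situation the blocks of the other pair's decomposition all have dimension $1$, $2$ or $3$, each a prime, so $q$ is removable by Lemma~\ref{lem:removable_prime} applied to whichever pair is in $(1,1,1)$-way (this is legitimate with the diagonal and anti-diagonal pairs interchanged, since Definition~\ref{def:remove}(i) is symmetric in $\{p_1,p_2\}$ and $\{p_1',p_2'\}$). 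In the second situation, by the description of $(3)$-way in Lemma~\ref{lem:howcom}, each pair contains a term with induced algebra $\cL(\cH^q)$ on $\cH^q$, and choosing those two terms as $p,p'$ verifies condition~(b). Finally, if one of the two pairs is entirely trivial on $\cH^q$, the same argument as in the qubit case applies: that pair keeps a decomposition of $\cH^q$ into one-dimensional subspaces invariant, the other pair gets an invariant decomposition with at most one term non-trivial per block from the observation, and Definition~\ref{def:remove}(i) holds.

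I expect the bulk of the work to be routine dimension and primality bookkeeping; the only genuine subtleties are (a) confirming that the degenerate situations---a one-dimensional qudit, or a pair of plaquette terms one or both of which act trivially on $q$---are each accounted for, since the cited lemmas are phrased for bona fide ``ways of commuting'', and (b) checking that Lemma~\ref{lem:removable_prime} may be invoked with the roles of the two pairs swapped, which holds because condition~(i) of Definition~\ref{def:remove} is symmetric in the two pairs. Neither of these should be hard, but they are the places where the argument could quietly go wrong.
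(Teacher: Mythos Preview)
Your argument is correct and follows the same case analysis as the paper's proof. The only real difference is in the degenerate sub-case where one pair (say $p_1',p_2'$) acts entirely trivially on $q$: the paper uses the no-semi-separable hypothesis to force the remaining pair into $(3)$- or $(2)$-way and then applies Definition~\ref{def:remove}(ii) (the tensor-product clause), whereas you equip the trivial pair with a decomposition of $\cH^q$ into one-dimensional pieces and verify Definition~\ref{def:remove}(i) directly via the rank bound. Your route is a touch more uniform (only clause~(i) is ever invoked) and, pleasantly, does not need the no-semi-separable assumption for that particular sub-case. One cosmetic slip: $1$ is not prime, so ``dimensions $1,2,3$, each a prime'' is not literally true---though the substance, namely that on a block of dimension at most $3$ at most one of two commuting terms can act non-trivially, is of course correct; the paper commits the same abuse in its proof of Lemma~\ref{lem:removable_prime}.
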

\begin{proof}
	Consider any qudit $q$. If $dim(q)=1$, it is  removable due to Definition \ref{def:remove} (i). W.l.o.g. assume $dim(q)\in\{2,3\}$.
 If $p_1',p_2'$ act trivially on $\cH^q$,  since $q$ is not semi-separable, we know $p_1,p_2$ must commute via 
	$(3)$-way or $(2)$-way.	Furthermore,
	by the Structure Lemma, i.e. Corollary 
\ref{cor:structure}, we know there is a tensor product structure of $\cH^q=\cH^{q,1}\otimes \cH^{q,2}$ such  that
$p_1\in \cL(\cH^{q,1})\otimes \cI_{\cH^{q,2}}$,   $p_2\in \cI_{\cH^{q,1}} \otimes \cL(\cH^{q,2})$
, thus $q$ is removable due to Definition \ref{def:remove} (ii). A similar argument works when $p_1,p_2$ act trivially on $\cH^q$. Thus w.l.o.g, we assume that at least one of $p_1,p_2$, and one of $p_1',p_2'$ act   non-trivially on $q$. Then
	\begin{itemize} 	
		\item When $dim(q)=2$, by Lemma \ref{lem:howcom2} we know either  one of $\{p_1,p_2\}$ or $\{p_1',p_2'\}$ commute in $(1,1)$-way, or  both of them commute in $(2)$-way. For the first case,
 $q$ is  removable due to Lemma \ref{lem:removable_2}. For the second case,  $q$ satisfies 
		Theorem \ref{thm:remove} 
		condition (b).
		\item When $dim(q)=3$, since there is no semi-separable qudit, for any qudit $q$, 
  by Lemma \ref{lem:comway} and Lemma \ref{lem:howcom}, we know either  one of $\{p_1,p_2\}$ or $\{p_1',p_2'\}$ commute in $(1,1,1)$-way, then $q$ is removable by Lemma \ref{lem:removable_prime}. Or
 both of them commute in $(3)$-way, then $q$ satisfies Theorem \ref{thm:remove} condition (b).
	\end{itemize}
	\end{proof}
		
	Combined with Corollary \ref{cor:proj_redu}, Corollary \ref{cor:nosemi},  Lemma \ref{lem:qutritcon} and Theorem \ref{thm:remove}, we finally conclude that
\begin{corollary}
	The qutrit-CLHP-2D problem is in $\NP$.
\end{corollary}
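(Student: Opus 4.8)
The plan is to assemble the machinery of Sections \ref{sec:key}--\ref{sec:schuch} into a verifier that certifies a ``yes'' instance from a polynomial-size classical witness. First I would reduce to the projection form: by the argument recalled in Appendix \ref{appendix:equi_proj} (the prover supplies the eigenvalue vector $(\lambda_i)_i$ with $\sum_i\lambda_i\le a$, and the verifier replaces each $h_i$ by $I-\Pi_{i,\lambda_i}$), it suffices to show that qutrit-CLHP-2D-projection is in $\NP$, i.e.\ to decide whether the commuting projections $\{p\}_p$ have a common $0$-eigenvector, with promise gap $1$.

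Second, I would strip out all semi-separable qutrits. By Lemma \ref{lem:decrease}, whenever a semi-separable qudit $q$ exists with non-trivial decomposition $\cH^q=\bigoplus_{j=1}^m\cH^q_j$, the instance has a frustration-free ground state iff some reduced instance $H^{(j)}$ does; and by Lemma \ref{lem:local} and Lemma \ref{lem:reduce} each $H^{(j)}$ is again a 2D commuting-projection instance, now on a strictly smaller space for $q$. Since each qudit has dimension at most $3$, the prover can name, qudit by qudit, the sequence of chosen indices $j$; after $\mathrm{poly}(n)$ such reductions we reach an equivalent instance with \emph{no} semi-separable qutrits (Corollary \ref{cor:nosemi}). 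At each step the verifier checks in polynomial time that the named qudit is semi-separable and recomputes the reduced terms (including the ``project and round'' operation defining $h_0^{(j)}$).

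Third, I would invoke the structural dichotomy. Lemma \ref{lem:qutritcon} shows that, once there are no semi-separable qutrits, every qudit is either \emph{removable} in the sense of Definition \ref{def:remove} or satisfies condition (b) of Theorem \ref{thm:remove} (one of $\{p_1,p_2\}$ and one of $\{p_1',p_2'\}$ induce the full algebra $\cL(\cH^q)$). This rests on the elementary classification of commuting patterns on a prime-dimensional space (Lemma \ref{lem:howcom}), the removability criterion Lemma \ref{lem:removable_prime}, and crucially Lemma \ref{lem:comway}, which rules out the $(1,2)\times(3)$ and $(1,2)\times(1,2)$ patterns. Then Theorem \ref{thm:remove} finishes the argument: coloring the plaquettes like a chessboard and keeping all black terms on the left and all white terms on the right, the prover supplies the subspace indices $P^q_{i_q},P^{'q}_{j_q}$ for every removable qudit; the verifier traces those qudits out of all terms, each resulting summand being a trace of a product of two PSD operators and hence non-negative; the surviving qudits are exactly those of type (b), so the remaining Hamiltonian decomposes into disjoint chains and cycles of tensors, and by Lemma \ref{lem:1D_struc} its $0$-energy trace is computed by polynomial-time tensor contraction. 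The combined witness --- the eigenvalue vector, the sequence of semi-separable reductions, and the removable-qudit subspaces --- is polynomial size, so $tr\!\big(\prod_p(I-p)\big)>0$ can be verified, placing qutrit-CLHP-2D in $\NP$ and proving Theorem \ref{thm:qutrit_NP}.

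The main obstacle is Lemma \ref{lem:comway}. The delicate case is a dimension-$3$ qutrit $q$ where $p_1,p_2$ commute in $(1,2)$-way with distinguished vector $\ket{\psi}$ and $p_1',p_2'$ commute in $(1,2)$-way with distinguished vector $\ket{\psi'}$, and $0<|\langle\psi|\psi'\rangle|<1$. Writing $p_2=\ketbra{\psi}{\psi}\otimes h+I_q\otimes\hat h$ and $p_2'=\ketbra{\psi'}{\psi'}\otimes h'+I_q\otimes\hat h'$, one must extract from $[p_2,p_2']=0$, by evaluating the commutator on vectors chosen orthogonal to $\ket{\psi}$, to $\ket{\psi'}$, or to neither, the relations $[\hat h,\hat h']=[\hat h,h']=[h,\hat h']=0$ and finally $hh'=h'h=0$; then induced-algebra bookkeeping on the neighboring qutrit $q_1$ (via Lemma \ref{lem:same} and Lemma \ref{lem:nontridec}) shows $q_1$ is semi-separable, a contradiction. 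The subtlety is that $h,\hat h,h',\hat h'$ may act non-trivially on three qudits rather than one, so every ``commute/orthogonal'' claim must be argued at the level of the matrix blocks $h^{q_1}_{ij}$, not at the level of single-qudit operators; managing this carefully, together with the easier sub-cases $\ket{\psi}\parallel\ket{\psi'}$, $\ket{\psi}\perp\ket{\psi'}$, and the $(1,2)\times(3)$ pattern, is the real content of the proof.
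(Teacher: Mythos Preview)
Your proposal is correct and follows exactly the paper's route: reduce to the projection case (Corollary~\ref{cor:proj_redu}), eliminate semi-separable qutrits via the decrease-dimension-and-rounding self-reduction (Corollary~\ref{cor:nosemi}), then invoke the structural dichotomy Lemma~\ref{lem:qutritcon} together with Theorem~\ref{thm:remove}. Your identification of Lemma~\ref{lem:comway} as the crux, and your sketch of its $(1,2)\times(1,2)$ sub-case, match the paper's argument in detail.
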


\section{Factorized commuting local Hamiltonian on 2D}\label{sec:qudit-2D-factorized}

In this section, we  give a constructive proof to show that qudit-CLHP-2D-factorized is in $\NP$, by proving that qudit-CLHP-2D-factorized is equivalent to a direct sum of qubit stabilizer Hamiltonian (see Theorem~\ref{thm:stronger}).
  Note that in this section we do \textbf{not} assume that $\{p\}_p$ are projections.
 The reason for this is that if we start with an arbitrary qudit-CLHP-2D-factorized Hamiltonian, such as the Toric code, and replace each term with a projection
 that preserve the kernel (as in Lemma \ref{lem:qudit_proj_redu}), then
 the new Hamiltonian is not guaranteed to be a factorized projection. For example,
 if we take a Toric code term such as $h = XXXX$ and replace $h$ with $(IIII - XXXX)/2$,
 the resulting term is no longer factorized.
 By contrast, for  the qutrit-CLHP-2D, where we do not require that the terms be factorized, the assumption that the
 terms are projections does not weaken the results due to Corollary \ref{cor:proj_redu}.

  The structure of this section is as follows.
  In Sec.~\ref{sec:nota} we give notations and definitions. In Sec.~\ref{sec:weak} we prove that if there are no separable qudits, then the Hamiltonian is equivalent to a direct sum of qubit stabilizer Hamiltonian. Finally in Sec.~\ref{sec:strong} we prove   Theorem \ref{thm:stronger}.

\subsection{Notations, definitions and  lemmas}\label{sec:nota}

 \noindent\textbf{Notation.} 
Let $h$ be a Hermitian operator on Hilbert space $\cH$. Let $\cH'$ be a subspace of $\cH$ and suppose $h$ keeps $\cH'$ invariant. We define
  $\ker_\cH(h):=\{\ket{\psi}\in \cH \,|\, h\ket{\psi}=0\}$ and $\ker_{\cH'}(h):=\{\ket{\psi}\in \cH' \,|\, h\ket{\psi}=0\}$. For two orthogonal subspaces $V,W\subseteq \cH$, their direct sum are defined as $V\bigoplus W=\{v+w\,|\, v\in V,w\in W\}$. 
  For two Hermitian operators $h,h'\in \cL(\cH)$, we write $hh'=\pm h'h$ if either $hh'=h'h$ or $hh'=-h'h$. We say an $n$-qudit Hermitian term $h$ is factorized if
 $h=\otimes_q h^q$ where $h^q$ is Hermitian, $\forall q$. When
a factorized Hermitian 
 $h=0$, we always rewrite $h$ to be  tensor of zeros, i.e. $h^q=0,\forall q$.

We say a Hilbert space $\cH$ is equivalent to $m$-qubit space, denoted as $\cH\sim (\bC^2)^{\otimes m}$, if there exists tensor-product structure $\cH=\cH_1\otimes...\otimes \cH_m$ where $dim(\cH_i)=2$.  When considering Hilbert space $\cH\sim(\bC^{2})^{\otimes n},$ we define Pauli groups as the group of operators generated by $\{I,X,Y,Z\}^{\otimes n}$, with respect the basis which makes $\cH$ factorized as $(\bC^2)^{\otimes n}$. We denote elements in the Pauli group as Pauli operators. 
First, we  formally define what it means for a Hamiltonian to be equivalent to a qubit stabilizer Hamiltonian or a direct sum of qubit stabilizer Hamiltonians.

\begin{definition}[Equivalence to qubit stabilizer Hamiltonian]
\label{def:equi_stab}
 Consider a commuting (but not necessarily local) Hamiltonian $H=\sum_i h_i$ acting on  space $\cH_*:=\otimes_q \cH^q_*$. We say $H$ is equivalent to a qubit stabilizer Hamiltonian on $\cH_*$, if (1) For all $q$, $\cH^q_*\sim (\bC^2)^{\otimes m^q}$ for some integer $m^q$.  (2) Each $h_i$  acts as a Pauli operator up to phases on $\cH_*$, with respect to the basis which makes $\cH_*\sim (\bC^2)^{\otimes(\sum_q m^q)}$. 
\end{definition}

In the above definition, we allow $m_q=0$, where $dim(\cH^q_*)=1$, and all $h_i$ acts as $I$ up to phases on $\cH^q_*$.

\begin{definition}[Simple subspace]
	Consider an n-qudit space $\cH=\otimes_q \cH^q$. We say a subspace $\cH_*$ of $\cH$ is simple, if $\cH_*$ is a tensor product of subspace of each qudit, i.e. $\cH_*=\otimes_q \cH^q_*$ where $\cH^q_*$ is a subspace of $\cH^q$. \end{definition}
 
\begin{definition}[Equivalence to direct sum of qubit stabilizer Hamiltonian]\label{def:equi_stab_gs}
Given a commuting Hamiltonian $H=\sum_i h_i$ acting on space $\cH=\otimes_q \cH^q$. We say $H$ is equivalent to a direct sum of qubit stabilizer Hamiltonians, if there exists a set of simple subspace $\{\cH_*:=\otimes_q \cH^q_*\}_{*\in P}$ such that
  (1) $\{\cH_*\}_{*\in P}$ are pairwise orthogonal, and $\cH=\bigoplus_{*\in P} \cH_*$; (2) $\forall *\in P$, $H$ keeps $\cH_*$ invariant, $\{h_i\}_i$  keeps $\cH_*$ invariant, and $H$ is equivalent to qubit stabilizer Hamiltonian when restricted to $\cH_*.$ 
\end{definition}

\begin{remark}[Terminology of ``Equivalent to qubit stabilizer state" used in Theorem \ref{thm:fa}]\label{remark:equ_state}
	Use notations in Definition \ref{def:equi_stab_gs}, if an $n$-qudit Hamiltonian $H=\sum_i h_i$ is equivalent to a direct sum of qubit stabilizer Hamiltonians, there exists a simple subspace $\cH_*=\otimes_q\cH^q_*$ which contains a ground state of $H$, denoted as $\ket{\psi_*}$. Since $H$ is equivalent to qubit stabilizer Hamiltonian on $\cH_*$, we know $\ket{\psi_*}$ can be chosen to be a qubit stabilizer state w.r.t to the basis which makes $\cH^q_*\sim (\bC^2)^{\otimes m_q}$ in Definition \ref{def:equi_stab}. In this sense we say $H$ has a ground state which is equivalent to qubit stabilizer state. The notion of ``equivalent to qubit stabilizer state" is only used for intuitively explaining how we prove that qudit-CLHP-2D-factorized is in $\NP$ in Theorem \ref{thm:fa}.
	To avoid ambiguity, we will \textbf{not} use this notion further in the following context. \end{remark}

We now give the definitions and lemmas for commuting in a singular/regular way. For technical reasons, our definition is slightly different from \cite{bravyi2003commutative}\footnote{\cite{bravyi2003commutative} defines the case where $h\hat{h}=0$ and $\forall q, h^q\hat{h}^q=\pm \hat{h}^qh^q$ as commuting in a singular way. We define this case as commuting in a regular way.}. 

\begin{definition}\label{def:singular}[Singular/regular way]
    Consider two  factorized Hermitian terms $h,\hat{h}$ acting on  qudits, with $[h,\hat{h}]=0$. 
    \begin{itemize}
        \item   We say $h,\hat{h}$ commute in a regular way, if $\forall q$, $h^q\hat{h}^q=\pm \hat{h}^qh^q$.
        \item  We say $h,\hat{h}$ commute in a  singular way if  there $\exists$ qudit $q$, $h^q\hat{h}^q\neq\pm \hat{h}^qh^q$.
    \end{itemize} 
\end{definition}
 Recall that when
 $h=0$, we always rewrite $h$ to be  tensor of zeros. 
Thus if $h,\hat{h}$ commute in a singular way, then $h\neq 0,\hat{h}\neq 0$. We say a set of factorized Hermitian terms $\{h_i\}_i $ commute in a regular way if $\forall i,j$, the $h_i,h_j$ commute in a regular way. 
In the following, we introduce a lemma which states how factorized terms can commute with each other.  
\begin{lemma}[Rephrase of Lemma 9 in \cite{bravyi2003commutative}]\label{lem:restrictions}
    Consider two factorized Hermitian terms $h,\hat{h}$ acting on $n$ qudits, with $[h,\hat{h}]=0$. If they only share one qudit $q$, then $[h^q,\hat{h}^q]=0$. If they share two qudits $q_1,q_2$, then one of the following holds:
    (1)  $h\hat{h}\neq 0$. In this case $h^{q_1}\hat{h}^{q_1}=\pm\hat{h}^{q_1}h^{q_1}$ and $h^{q_2}\hat{h}^{q_2}=\pm\hat{h}^{q_2}h^{q_2}$.
      (2)  $h\hat{h}=0$. In this case one of $h^{q_1}\hat{h}^{q_1}$, or $h^{q_2}\hat{h}^{q_2}$ equals to 0.
\end{lemma}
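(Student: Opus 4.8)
The plan is to analyze the two cases — sharing one qudit versus sharing two qudits — separately, in both cases expanding the commutator $[h,\hat h]=0$ in terms of the tensor factors and using linear independence of operators on the unshared qudits.

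First, suppose $h$ and $\hat h$ share only one qudit $q$. Write $h = h^q \otimes A$ and $\hat h = \hat h^q \otimes B$, where $A$ acts on the qudits in $\mathrm{supp}(h)\setminus\{q\}$ and $B$ acts on the (disjoint) set of qudits in $\mathrm{supp}(\hat h)\setminus\{q\}$. Since the supports of $A$ and $B$ are disjoint and also disjoint from $q$, we have $[h,\hat h] = [h^q,\hat h^q]\otimes (A\cdot B)$ after reorganizing factors (here $A$ and $B$ commute because they act on disjoint qudits, and $A\otimes B\neq 0$ since $h,\hat h\neq 0$). Hence $[h,\hat h]=0$ forces $[h^q,\hat h^q]=0$.

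Now suppose $h$ and $\hat h$ share two qudits $q_1,q_2$. Write $h = h^{q_1}\otimes h^{q_2}\otimes A$ and $\hat h = \hat h^{q_1}\otimes \hat h^{q_2}\otimes B$ with $A,B$ acting on disjoint qudit sets, both disjoint from $\{q_1,q_2\}$, and $A\otimes B\neq 0$. If $h\hat h = 0$, then since $h\hat h = (h^{q_1}\hat h^{q_1})\otimes(h^{q_2}\hat h^{q_2})\otimes(AB)$ and $AB\neq 0$ (as $A,B$ have disjoint support and are nonzero), we must have $h^{q_1}\hat h^{q_1}=0$ or $h^{q_2}\hat h^{q_2}=0$, which is case (2). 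If $h\hat h\neq 0$, then I want to show $h^{q_j}\hat h^{q_j} = \pm \hat h^{q_j} h^{q_j}$ for $j=1,2$. Since $A,B$ act on disjoint supports, $[h,\hat h]=0$ reduces to $(h^{q_1}\hat h^{q_1})\otimes(h^{q_2}\hat h^{q_2})\otimes(AB) = (\hat h^{q_1}h^{q_1})\otimes(\hat h^{q_2}h^{q_2})\otimes(BA)$, and since $AB = BA \neq 0$ we get $(h^{q_1}\hat h^{q_1})\otimes(h^{q_2}\hat h^{q_2}) = (\hat h^{q_1}h^{q_1})\otimes(\hat h^{q_2}h^{q_2})$ as operators on $\cH^{q_1}\otimes\cH^{q_2}$. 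Now $h^{q_1}\hat h^{q_1}\neq 0$ and $h^{q_2}\hat h^{q_2}\neq 0$ (else $h\hat h=0$), so by the uniqueness of tensor decompositions of a nonzero product operator up to scalar, there is a nonzero scalar $c$ with $\hat h^{q_1}h^{q_1} = c\, h^{q_1}\hat h^{q_1}$ and $\hat h^{q_2}h^{q_2} = c^{-1} h^{q_2}\hat h^{q_2}$. Taking adjoints of $\hat h^{q_1}h^{q_1} = c\, h^{q_1}\hat h^{q_1}$ (all factors Hermitian) gives $h^{q_1}\hat h^{q_1} = \bar c\, \hat h^{q_1}h^{q_1} = |c|^2 h^{q_1}\hat h^{q_1}$, so $|c|=1$; combined with a second relation obtained by applying the same argument to $(h^{q_1}\hat h^{q_1})^\dagger$ type manipulations, one pins down $c=\pm 1$, and similarly $c^{-1}=\pm 1$, giving the claim. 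The cleanest route to $c=\pm1$: from $\hat h^{q_1} h^{q_1} = c\, h^{q_1}\hat h^{q_1}$ and its adjoint $h^{q_1}\hat h^{q_1} = \bar c\, \hat h^{q_1} h^{q_1}$, substitute to get $h^{q_1}\hat h^{q_1} = |c|^2 h^{q_1}\hat h^{q_1}$, hence $|c|=1$; then the Hermitian operator $M := h^{q_1}\hat h^{q_1} + \hat h^{q_1}h^{q_1} = (1+c)h^{q_1}\hat h^{q_1}$ must be Hermitian, and $\bar{(1+c)}\,\overline{h^{q_1}\hat h^{q_1}}$-type bookkeeping forces $c\in\mathbb{R}$, so $c=\pm1$.

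The main obstacle is the last step: deducing from the single tensor-equality $(h^{q_1}\hat h^{q_1})\otimes(h^{q_2}\hat h^{q_2}) = (\hat h^{q_1}h^{q_1})\otimes(\hat h^{q_2}h^{q_2})$ (with all four one-qudit factors Hermitian) that each qudit factor individually satisfies $AB = \pm BA$. This requires the rigidity fact that if $P_1\otimes P_2 = Q_1\otimes Q_2 \neq 0$ then $Q_j = c_j P_j$ with $c_1 c_2 = 1$, followed by the Hermiticity argument that forces $c_j=\pm1$. I would state and prove this small linear-algebra rigidity lemma cleanly first, then apply it. The one-qudit case and the $h\hat h = 0$ subcase are routine given the disjoint-support factorization.
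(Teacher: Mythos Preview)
The paper does not give its own proof of this lemma: it is stated as a rephrasing of Lemma~9 in \cite{bravyi2003commutative} with no argument supplied. Your approach is the standard one and is essentially correct, but the last step --- deducing $c=\pm 1$ from $\hat h^{q_1}h^{q_1}=c\,h^{q_1}\hat h^{q_1}$ with both factors Hermitian and the product nonzero --- has a real gap. Your argument via $M:=h^{q_1}\hat h^{q_1}+\hat h^{q_1}h^{q_1}$ being Hermitian only recovers $|c|=1$: writing $M=(1+c)\,h^{q_1}\hat h^{q_1}$ and imposing $M^\dagger=M$ gives $(1+c)=(1+\bar c)c=c+|c|^2$, which is a tautology once $|c|=1$. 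More generally, the relation $P^\dagger=cP$ with $|c|=1$ is satisfied by $P=e^{-i\theta/2}Q$ for any Hermitian $Q$ and any phase $c=e^{i\theta}$, so Hermiticity of $P+P^\dagger$ alone cannot force $c\in\mathbb{R}$.

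What is missing is the extra structure that $P:=h^{q_1}\hat h^{q_1}$ is a \emph{product} of two Hermitians. One clean completion: since $P\neq 0$, we have $0<\|P\|_F^2=\operatorname{tr}(P^\dagger P)=\operatorname{tr}(cP\cdot P)=c\,\operatorname{tr}(P^2)$; on the other hand $\operatorname{tr}(P^2)=\operatorname{tr}(h^{q_1}\hat h^{q_1}h^{q_1}\hat h^{q_1})$ is real, because its complex conjugate equals $\operatorname{tr}((P^2)^\dagger)=\operatorname{tr}(\hat h^{q_1}h^{q_1}\hat h^{q_1}h^{q_1})=\operatorname{tr}(P^2)$ by cyclicity of the trace. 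Hence $c\,\operatorname{tr}(P^2)$ real and positive with $\operatorname{tr}(P^2)\in\mathbb{R}$ forces $\operatorname{tr}(P^2)\neq 0$ and $c\in\mathbb{R}$, so $c=\pm1$. With this fix (and the routine rigidity fact $P_1\otimes P_2=Q_1\otimes Q_2\neq 0\Rightarrow Q_j=c_jP_j$ with $c_1c_2=1$, which you correctly identify), your proof goes through. The one-qudit case and the $h\hat h=0$ subcase are handled correctly.
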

\begin{corollary}\label{cor:pairox}
	Consider two factorized Hermitian terms $h,\hat{h}$ acting on $n$ qudits, with $[h,\hat{h}]=0$. If $h,\hat{h}$ share two qudits $q_1,q_2$, and commute in a singular way, then  one of $h^{q_1}\hat{h}^{q_1}$, or $h^{q_2}\hat{h}^{q_2}$ equals to 0. For the other one qudit, denoted as $q$, we have $h^q\hat{h}^q\neq\pm \hat{h}^qh^q$.
\end{corollary}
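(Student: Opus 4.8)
The plan is to derive this corollary directly from Lemma \ref{lem:restrictions}, together with the elementary observation that a qudit witnessing ``singular'' behaviour must be one of the two shared qudits. First I would record the trivial fact that if at least one of $h^q,\hat h^q$ is a scalar multiple of the identity, then $h^q\hat h^q=\hat h^q h^q$; this is the case on every qudit outside the common support of $h$ and $\hat h$, since ``$h,\hat h$ share two qudits $q_1,q_2$'' means $\{q_1,q_2\}$ is exactly the set of qudits on which both act non-trivially. Consequently, since $h,\hat h$ commute in a singular way, the qudit $q_0$ with $h^{q_0}\hat h^{q_0}\neq\pm\hat h^{q_0}h^{q_0}$ must satisfy $q_0\in\{q_1,q_2\}$.

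Next I would apply Lemma \ref{lem:restrictions} to the pair $h,\hat h$ sharing the two qudits $q_1,q_2$. If alternative (1) held, i.e.\ $h\hat h\neq 0$, then $h^{q_1}\hat h^{q_1}=\pm\hat h^{q_1}h^{q_1}$ and $h^{q_2}\hat h^{q_2}=\pm\hat h^{q_2}h^{q_2}$; combined with the scalar observation on all remaining qudits, every factor pair would be $\pm$-commuting, so by Definition \ref{def:singular} the terms $h,\hat h$ would commute in a regular way, contradicting the hypothesis. Hence alternative (2) holds: $h\hat h=0$ and at least one of $h^{q_1}\hat h^{q_1},\,h^{q_2}\hat h^{q_2}$ equals $0$. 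This already gives the first assertion.

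For the second assertion I would use Hermiticity of the factors: if $h^{q_i}\hat h^{q_i}=0$ then $\hat h^{q_i}h^{q_i}=(h^{q_i}\hat h^{q_i})^{\dagger}=0$ as well, so on that qudit $h^{q_i}\hat h^{q_i}=\pm\hat h^{q_i}h^{q_i}$ holds trivially. In particular the witness $q_0$ has $h^{q_0}\hat h^{q_0}\neq 0$, so $q_0$ is not a qudit on which the factor product vanishes; since $q_0\in\{q_1,q_2\}$, exactly one of $q_1,q_2$ carries the vanishing factor product, and $q_0$ is the other one. Denoting that qudit $q:=q_0$, we conclude $h^q\hat h^q\neq\pm\hat h^q h^q$, as required.

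The whole argument is essentially bookkeeping once Lemma \ref{lem:restrictions} is available; the only point requiring a little care is the elimination of alternative (1), where one must invoke the precise meaning of ``share two qudits'' to ensure that on every qudit other than $q_1,q_2$ a factor is scalar, so that alternative (1) genuinely forces the regular-way condition everywhere. I do not anticipate any further obstacle.
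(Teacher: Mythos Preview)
Your proposal is correct and is exactly the argument the paper has in mind: the corollary is stated in the paper immediately after Lemma~\ref{lem:restrictions} with no separate proof, so the intended justification is precisely the case analysis you wrote out---rule out alternative~(1) because it would force regularity on every qudit, then use Hermiticity of the factors to see that the qudit with vanishing product cannot be the singular witness. There is nothing to add.
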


We also prove  some useful lemmas.

\begin{lemma}\label{lem:alpha0}
    Consider two Hermitian terms $h,\hat{h}$ acting on a Hilbert space $\cH$. If $h\hat{h}=\alpha \hat{h}h$ for some $\alpha\in \bR$, then $h$ keeps $\ker_{\cH}(\hat{h})$ invariant.
\end{lemma}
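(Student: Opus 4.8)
The plan is to show directly that $h$ maps $\ker_\cH(\hat{h})$ into itself, i.e. that for every $\ket{\psi}$ with $\hat{h}\ket{\psi}=0$ we also have $\hat{h}\bigl(h\ket{\psi}\bigr)=0$. I would split into two cases according to whether the scalar $\alpha$ vanishes, since the two cases use slightly different ingredients.

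First suppose $\alpha\neq 0$. Then the hypothesis $h\hat{h}=\alpha\,\hat{h}h$ rearranges to $\hat{h}h=\alpha^{-1}h\hat{h}$, so for any $\ket{\psi}\in\ker_\cH(\hat{h})$ we get $\hat{h}h\ket{\psi}=\alpha^{-1}h\hat{h}\ket{\psi}=\alpha^{-1}h\cdot 0=0$, as desired. Note this step does not even use Hermiticity.

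The remaining case is $\alpha=0$, i.e. $h\hat{h}=0$ as an operator. Here I would invoke that $h$ and $\hat{h}$ are Hermitian: taking adjoints of $h\hat{h}=0$ gives $\hat{h}^\dagger h^\dagger=\hat{h}h=0$, hence $\hat{h}h=0$, and in particular $\hat{h}h\ket{\psi}=0$ for every $\ket{\psi}$, so trivially $h\bigl(\ker_\cH(\hat{h})\bigr)\subseteq\ker_\cH(\hat{h})$. Combining the two cases gives the claim.

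There is essentially no obstacle here; it is elementary linear algebra. The only point worth flagging is that the $\alpha=0$ case genuinely needs the Hermiticity hypothesis — without it, $h\hat{h}=0$ does not force $\hat{h}h=0$ — whereas for $\alpha\neq 0$ the conclusion holds for arbitrary operators satisfying the commutation-type relation.
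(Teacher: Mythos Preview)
Your proof is correct and follows essentially the same direct verification as the paper's: apply $\hat{h}$ to $h\ket{\psi}$ and use the commutation-type relation to pull $\hat{h}$ past $h$ onto $\ket{\psi}$. You are in fact slightly more careful than the paper, which writes the one-line computation without separating out the $\alpha=0$ case (where Hermiticity is genuinely needed).
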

\begin{proof}
    It suffices to notice that $\forall \ket{\psi}\in \ker_{\cH}(\hat{h})$, we have $\hat{h}h(\ket{\psi})=\alpha h \hat{h}\ket{\psi}=0$. This implies that $h\ket{\psi}\in \ker_{\cH}(\hat{h})$. 
\end{proof}

\begin{lemma}\label{lem:factorize_separable}
Consider a qudit $q$ and Hermitian terms $\hat{h}^q\neq 0$ and  $\{h_i^q\}_i$ acting on $\cH^q$. Suppose that  $\forall i$, $\hat{h}^qh_i^q=\pm h_i^q\hat{h}^q$, and furthermore  there exists $i_0$  such that $h_{i_0}^q\neq 0$, and $\hat{h}^qh_{i_0}^q=0.$ Then $q$ is separable with respect to $\{\hat{h}^q\}\cup \{h_i^q\}_i$.
\end{lemma}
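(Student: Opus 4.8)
The plan is to construct the separating decomposition of $\cH^q$ directly from the spectral decomposition of $\hat{h}^q$. Since $\hat{h}^q$ is Hermitian, write $\cH^q=\bigoplus_\mu V_\mu$, where $\mu$ ranges over the distinct eigenvalues of $\hat{h}^q$ and $V_\mu$ is the associated eigenspace. The main mechanism is the dichotomy built into the hypothesis: each $h_i^q$ either commutes or anticommutes with $\hat{h}^q$. If $\hat{h}^q h_i^q = h_i^q\hat{h}^q$, then $h_i^q$ preserves every $V_\mu$. If instead $\hat{h}^q h_i^q = -h_i^q\hat{h}^q$, then for $v\in V_\mu$ we have $\hat{h}^q(h_i^q v) = -h_i^q(\hat{h}^q v) = -\mu\, h_i^q v$, so $h_i^q(V_\mu)\subseteq V_{-\mu}$. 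Hence, although the individual $h_i^q$ need not preserve the eigenspaces of $\hat{h}^q$, they all preserve the coarser decomposition obtained by pairing $V_\mu$ with $V_{-\mu}$.

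Concretely, for each value $\nu\ge 0$ that occurs as $|\mu|$ for some eigenvalue $\mu$ of $\hat{h}^q$, set $W_\nu := \bigoplus_{|\mu|=\nu} V_\mu$; thus $W_0 = \ker_{\cH^q}(\hat{h}^q)$, and $W_\nu = V_\nu\oplus V_{-\nu}$ for $\nu>0$, and $\cH^q = \bigoplus_\nu W_\nu$. The first step of the proof is routine: $\hat{h}^q$ keeps each $V_\mu$, hence each $W_\nu$, invariant; and by the dichotomy above each $h_i^q$ either fixes every $V_\mu$ or swaps $V_\mu$ with $V_{-\mu}$, so in either case $h_i^q(W_\nu)\subseteq W_\nu$. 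Therefore every operator in $\{\hat{h}^q\}\cup\{h_i^q\}_i$ keeps the decomposition $\cH^q=\bigoplus_\nu W_\nu$ invariant.

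The only substantive step is to verify that this decomposition is nontrivial, i.e.\ that it has at least two nonzero blocks, and this is exactly where the two extra hypotheses enter. Since $\hat{h}^q h_{i_0}^q = 0$, the range of $h_{i_0}^q$ is contained in $\ker_{\cH^q}(\hat{h}^q) = W_0$; as $h_{i_0}^q\ne 0$, this forces $W_0\ne\{0\}$. On the other hand, $\hat{h}^q\ne 0$ has some nonzero eigenvalue $\mu$, so $W_{|\mu|}\supseteq V_\mu\ne\{0\}$ is a second, distinct nonzero block. Hence $\cH^q=\bigoplus_\nu W_\nu$ is a nontrivial invariant decomposition, which is precisely the definition of $q$ being separable with respect to $\{\hat{h}^q\}\cup\{h_i^q\}_i$. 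I do not expect a genuine obstacle: the only care needed is the bookkeeping of the $\pm$-pairing — confirming that an anticommuting $h_i^q$ maps each block $W_\nu$ back into itself, not merely into a sum of two eigenspaces — together with the observation that the block $W_0$ produced by $h_{i_0}^q$ and a block $W_{|\mu|}$ produced by $\hat{h}^q\ne 0$ are genuinely different, so at least two blocks are present.
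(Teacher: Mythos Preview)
Your proof is correct. Both your argument and the paper's hinge on the same observation that $W_0=\ker(\hat{h}^q)$ is a nonzero proper invariant subspace, but the paper uses the coarser two-block decomposition $\cH^q=\ker(\hat{h}^q)\oplus\ker(\hat{h}^q)^\perp$: invariance of the kernel under each $h_i^q$ follows from Lemma~\ref{lem:alpha0} (applied with $\alpha=\pm 1$), and invariance of its orthogonal complement then comes for free by Hermiticity. Your approach instead works with the full spectral decomposition of $\hat{h}^q$ and pairs eigenspaces by absolute value, $W_\nu=V_\nu\oplus V_{-\nu}$, which yields a potentially finer invariant decomposition and makes the $\pm$-commutation structure more explicit, at the cost of a little extra bookkeeping. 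For the purposes of this lemma only two blocks are needed, so the paper's route is slightly quicker; your finer decomposition is not required here but is a perfectly valid alternative and does not rely on the auxiliary Lemma~\ref{lem:alpha0}.
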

\begin{proof}
    Define $\cH_1:=\ker(\hat{h})$. Since both $\hat{h},h_{i_0}$ are non-zero and $\hat{h}h_{i_0}=0$, we know $\cH_1\neq \{0\}$ and $\cH_1\neq \cH^q.$  This implies the decomposition $\cH^q =\cH_1\bigoplus \cH_1^\perp$ is non-trivial.
    By lemma \ref{lem:alpha0}, we know that all operators in $\{\hat{h}\}\cup\{h_i\}_i$ keep $\cH_1$ invariant. Since 
    they are Hermitian,  they also keep $\cH_1^\perp$ invariant. 
    Thus $q$ is separable.
\end{proof}

\subsection{A weaker version without separable qudits}\label{sec:weak}

In this section we  prove in Theorem \ref{thm:weaker} that  if there are no separable qudits, then qudit-CLHP-2D-factorized is equivalent to qubit stabilizer Hamiltonian. In particular, we prove that in this situation, all the terms must commute in a regular way. 

\begin{lemma}\label{lem:all_regular}
    Consider qudit-CLHP-2D-factorized Hamiltonian $H=\sum_p p$ acting on $n$ qudits. If there are no separable qudits, then all the terms $\{p\}_p$ commute in a regular way. Moreover, for any  qudit $q$ and any term  $p,\hat{p}\in\{p_1,p_2,p_1',p_2'\}$ acting on $q$, as shown in Fig.~\ref{fig:pattern} (a), we have (1) $(p^q)^2=c_{pq} I_q$ for some constant $c_{pq}$. (2) $p^q\hat{p}^q=\pm \hat{p}^qp^q$. (3) The $C^*$-algebra generated by $\{p^q\}_{p\in\{p_1,p_2,p_1',p_2'\}}$ is the whole algebra $\cL(\cH^q)$.
\end{lemma}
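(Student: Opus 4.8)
The plan is to prove the three numbered claims by first reducing (1) and (3) to short $C^*$-algebra arguments, and then spending all the effort on (2), which is equivalent to the assertion that no two plaquette terms commute in a singular way.

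\textbf{Step 1 (the easy reductions).} Observe that ``all terms commute in a regular way'' is equivalent to (2) holding at every qudit: a singular pair is, by Corollary~\ref{cor:pairox}, ``bad'' at one of its two shared qudits $q^\ast$, and then the two terms are among the (at most four) plaquette terms touching $q^\ast$, so (2) fails there; conversely (2) failing at $q$ produces a singular pair. For part (3), fix $q$ and let $\cA$ be the $C^*$-algebra generated by $\{p^q : p\in\{p_1,p_2,p_1',p_2'\}\text{ touches }q\}$ together with $I$; these are exactly the $q$-factors of all terms acting non-trivially on $q$. By the Structure Lemma (Lemma~\ref{lem:C}), $\cH^q=\bigoplus_i\cH_i$ with $\cH_i=\cH_i^1\otimes\cH_i^2$ and $\cA=\bigoplus_i\cL(\cH_i^1)\otimes\cI(\cH_i^2)$, and $\cZ(\cA)$ is spanned by the $\Pi_i$. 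If there were two or more blocks, $\{\cH_i\}_i$ would be a non-trivial decomposition kept invariant by every generator of $\cA$, hence (each term being factorized) by every term of $H$, so $q$ would be separable — contradiction. So there is a single block $\cH^q=\cH^1\otimes\cH^2$ with $\cA=\cL(\cH^1)\otimes\cI(\cH^2)$; if $\dim\cH^2\ge 2$, splitting $\cH^2$ along an orthonormal basis again gives a non-trivial decomposition kept invariant by all terms, again making $q$ separable. Hence $\dim\cH^2=1$ and $\cA=\cL(\cH^q)$, which is (3). For part (1), granting (2): for any plaquette term $\hat p$ at $q$, applying $p^q\hat p^q=\pm\hat p^q p^q$ twice gives $(p^q)^2\hat p^q=\hat p^q(p^q)^2$, so $(p^q)^2$ commutes with every generator of $\cL(\cH^q)$ and is therefore a scalar $c_{pq}I_q$. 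Finally, granting (2), two terms sharing qudits $q,q'$ satisfy the $\pm$ relation at $q$ and at $q'$ by (2), and a pair sharing at most one qudit is automatically regular by Lemma~\ref{lem:restrictions}; so all terms commute in a regular way.

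\textbf{Step 2 (the single descent step).} It remains to prove (2). Suppose for contradiction that some plaquette pair commutes in a singular way. By Lemma~\ref{lem:restrictions}, such a pair must share two qudits — a shared edge $\{q_1,q_2\}$ — since a pair sharing at most one qudit has vanishing commutator there. Fix a singular pair $(p,\hat p)$ with shared edge $\{q_1,q_2\}$; by Corollary~\ref{cor:pairox} we may assume $p^{q_1}\hat p^{q_1}=0$ (hence $p\hat p=0$) while $p^{q_2}\hat p^{q_2}\ne\pm\hat p^{q_2}p^{q_2}$. We try to locate a separable qudit via Lemma~\ref{lem:factorize_separable}. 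Apply it at $q_1$ with distinguished operator $p^{q_1}\ne 0$ and witness $\hat p^{q_1}$ (since $p^{q_1}\hat p^{q_1}=0$): it suffices that $p^{q_1}t^{q_1}=\pm t^{q_1}p^{q_1}$ for every plaquette term $t$ touching $q_1$. This is automatic for $t=\hat p$ (the product is $0$) and for the plaquette meeting $p$ only at $q_1$ (Lemma~\ref{lem:restrictions} gives commutation). The only remaining term is $P$, the other edge-neighbour of $p$ at $q_1$, sharing with $p$ an edge $\{q_1,q_3\}$ ($q_3$ a corner of $p$). If $p^{q_1}P^{q_1}=\pm P^{q_1}p^{q_1}$, Lemma~\ref{lem:factorize_separable} makes $q_1$ separable — contradiction. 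Otherwise $(p,P)$ is again singular, and Corollary~\ref{cor:pairox} (the bad product is at $q_1$) forces $p^{q_3}P^{q_3}=0$: we are back in exactly the same configuration, now with $(p,\hat p,q_1,q_2)$ replaced by $(p,P,q_3,q_1)$. If $q_1$ happens to be a boundary vertex at which the would-be obstructing plaquette $P$ is simply absent, Lemma~\ref{lem:factorize_separable} applies with no obstruction and $q_1$ is separable outright.

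\textbf{Step 3 (iteration and termination).} Carrying the recursion ``around $p$'' cycles through the edge-neighbours of $p$ and, after at most four steps, returns to $(p,\hat p)$ with obstruction precisely the bad product $p^{q_2}\hat p^{q_2}$ — so iterating with $p$ fixed is self-consistent and gives no contradiction by itself. To break out, at each vertex one also has the freedom to run Lemma~\ref{lem:factorize_separable} with the distinguished operator taken to be the other member of the current pair, or one of the neighbouring factors just shown to have zero product with a factor of $p$; each choice either exhibits a separable qudit or produces a new singular pair whose ``pivot plaquette'' has moved to a neighbour of $p$ in the lattice. Because the lattice is finite and boundary plaquettes have strictly fewer neighbours, following the pivot plaquette must eventually reach a vertex where the obstructing plaquette is absent, and Step~2's boundary case closes the argument. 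Hence there are no singular pairs, which is (2), and (1), (3) and regular commutation follow from Step~1.

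\textbf{Main obstacle.} The genuine difficulty is Step~3 — organizing the recursion so that termination is guaranteed. One must fix a well-founded order on the relevant configurations (e.g. a potential measuring distance of the current pivot plaquette/vertex to the boundary of the finite grid) that strictly decreases, and carefully check all branches of Lemma~\ref{lem:restrictions} and Corollary~\ref{cor:pairox} at each step, including the degenerate boundary cases where a vertex lies on fewer than four plaquettes or a factor of some term is trivial. Everything else — the $C^*$-algebra reductions of Step~1 and the single descent step of Step~2 — is routine given the Structure Lemma, Lemma~\ref{lem:restrictions}, Corollary~\ref{cor:pairox} and Lemma~\ref{lem:factorize_separable}.
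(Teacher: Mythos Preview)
Your Step~1 is correct and matches the paper's argument for (1) and (3) essentially verbatim: the Structure Lemma forces a single block with trivial second factor (else $q$ is separable), giving (3); and once (2) holds, $(p^q)^2$ lies in the centre of $\cL(\cH^q)$, giving (1).

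The genuine gap is Step~3, and you have correctly identified it yourself. Your recursion around the corners of a fixed plaquette $p$ is self-consistent and does not terminate; your proposal to ``break out'' by switching the pivot plaquette is not a proof --- you give no rule for when to switch, no well-founded quantity that decreases, and no reason the pivot should drift toward the boundary rather than wander or cycle. On a large grid the configuration can in principle be translation-invariant away from the boundary, so a purely local chase of this type has no evident potential function.

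The paper sidesteps the recursion entirely with a short \emph{global counting} argument. For each qudit $q$ and each pair of plaquettes at $q$, mark the pair ``$\underline{0}$'' if the $q$-factors multiply to zero and the pair is singular, ``$\times$'' if the $q$-factors fail the $\pm$ relation and the pair is singular, and ``$-$'' otherwise. By Corollary~\ref{cor:pairox}, every singular pair contributes exactly one ``$\underline{0}$'' (at one shared vertex) and exactly one ``$\times$'' (at the other), so $\sum_q \#_q\underline{0}=\sum_q\#_q\times$. If any singular pair exists, a pigeonhole/averaging step produces a single qudit $q$ with $\#_q\underline{0}\ge\#_q\times\ge 1$. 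At that one qudit a short case analysis (five patterns up to symmetry) shows $q$ is separable: in four of the patterns Lemma~\ref{lem:factorize_separable} applies directly with a suitable choice of distinguished factor, and in the remaining pattern (two ``$\underline{0}$'' edges and two ``$\times$'' edges arranged so that no single factor works) one instead uses Lemma~\ref{lem:nontridec} on the two $C^*$-algebras generated by the two pairs of factors. No recursion, no boundary, no termination issue.

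So the missing idea is precisely this counting trick: convert the global ``some pair is singular'' into the local ``some qudit has at least as many $\underline{0}$'s as $\times$'s'', and then do a finite case check there.
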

\begin{proof}
  We first define some notations to help the illustration.  To match the notations in Definition \ref{def:equi_stab}, we denote the Hilbert space of qudit $q$ as $\cH_*^q$, the $n$-qudit space as $\cH_*:=\otimes_q \cH_*^q$.  As shown in Fig.~\ref{fig:pattern} (a), Consider a qudit $q$,  we denote the terms acting on $q$ as $p_1,p_2,p_1',p_2'$.   Recall that all the terms are factorized, thus we can use $p_1^q$ to represent the factor of $p_1$ on $q$. Use similar notations for other terms.   For any two terms $h,\hat{h}\in\{p_1,p_2,p'_1,p'_2\}$,   we use symbols 
$-,\underline{0},\times$ to represent the relationship between $h^q,\hat{h}^q$. 
\begin{itemize}
    \item ``$-$": If $h^q\hat{h}^q=\pm \hat{h}^qh^q$, and  $h,\hat{h}$ commute in a regular way.
    \item ``$\underline{0}$": If $h^q\hat{h}^q=0$, and  $h,\hat{h}$ commute in a  singular way.
    \item ``$\times$": If $h^q\hat{h}^q\neq\pm\hat{h}^qh^q$,  and $h,\hat{h}$ commute in a  singular way.
\end{itemize}

Using the above symbols, we will draw a graph to represent the relationship of $\{p_1,p_2,p'_1,p'_2\}$ on qudit $q$.
For example, the graph in Fig.~\ref{fig:pattern} (f) means: The relationship between $p_1^q$ and $p_2^q$ is ``$-$". That is
$p_1^qp_2^q=\pm p_2^qp_1^q$, and  $p_1,p_2$ commute in a regular way. Similar   for  $p^{'q}_1$ and $p^{'q}_2$. The relationship between $p^q_1$ and $p^{'q}_1$ is 
``$\underline{0}$". That is
$p_1^qp_1^{'q}=0$, $p_1,p_1'$ commute in a  singular way. 
Similar   for  $p^{'q}_2$ and $p^{q}_2$. The relationship for $p_1^q,p_2^{'q}$ is
``$\times$". That is
$p_1^qp_2^{'q}\neq\pm p_2^{'q}p_1^q$, and  $p_1,p_2'$ commute in a  singular way. Similar for $p_1^{'q}$ and $p_2^q$.
     We use $\#_q \underline{0}$ to represent the number of $\underline{0}$ in the graph for $q$, and similar for  symbol ``$\times$". For example, in Fig.~\ref{fig:pattern} (f), we have $\#_q \underline{0}=\#_q \times =2.$
     
       For each $q$, we draw such a graph and assign $-,\underline{0},\times$  to each pair $h,\hat{h}\in\{p_1^q,p_2^q,p_1^{'q},p_2^{'q}\}$. 
  For $q$ in the boundary\footnote{Here we mean the physical boundary of the lattice, not the (virtual) boundary defined in \cite{aharonov2018complexity}. } of the lattice, some terms might be missing. We  draw the graph for the existing terms similarly.
We use $\# \underline{0}$ to represent the total number of $\underline{0}$ when considering all qudits, that is $\# \underline{0}=\sum_q \#_q \underline{0}$. Similarly for $\# \times$. Note that by Corollary \ref{cor:pairox},
we  have $\# \underline{0}=\#\times$.
Now we are prepared to prove  Lemma \ref{lem:all_regular}. 
We first prove:

\vspace{0.3em}

\noindent\textbf{Claim 1: There is no qudit $q$ such that two of the terms acting on $q$, i.e. two of $p_1,p_2,p_1',p_2'$ in Fig.~\ref{fig:pattern}(a), can 
 commute in a  singular way.} With contraction suppose there are such qudits, then $\# \underline{0}=\#\times\geq 1$.
Then there must be a qudit $q$ such that  of $\#_q \underline{0}\geq \#_q\times\geq 1$. We will prove that $q$ is separable thus leading to a contradiction.
Here we suppose $q$ is in not on the boundary, thus  there are four terms $p_1,p_2,p_1',p_2'$ acting on $q$. The case where $q$ is on the boundary and some terms are missing can be analyzed similarly.

 Note that we always have $p_1^qp_2^q=p_2^qp_1^q$ since $p_1,p_2$ only shares one qudit. Similar for $p_1^{'q}$ and $p_2^{'q}$.
One can check that up to rotating the lattice, the  relationships of terms on $q$ must be related to one of the graphs in Fig.~\ref{fig:pattern} (b-f).
Please read the captions of Fig.~\ref{fig:pattern} for a precise description of the classification.
 Note that if two terms $h,\hat{h}$ on $q$ satisfies $\underline{0}$, 
by definition of commuting in a singular way, we have $p\neq 0,\hat{p}\neq 0$, thus $p^q\neq 0,\hat{p}^q\neq 0$.

For Fig.~\ref{fig:pattern} (b-e), 
let $\hat{h}^q:=p_1^q$, $\{h_i^q\}_i:=\{p_2^q,p_1^{'q},p_2^{'q}\}$. By Lemma \ref{lem:factorize_separable}, we know $q$ is separable. For Fig.~\ref{fig:pattern} (f), let $\cA$ to be the $C^*$-algebra generated by $p_1^q,p_2^{'q}$, $\cA'$ to be the $C^*$-algebra generated by $p_1^{'q},p_2^{q}$. By Lemma \ref{lem:nontridec} we know $q$ is separable.

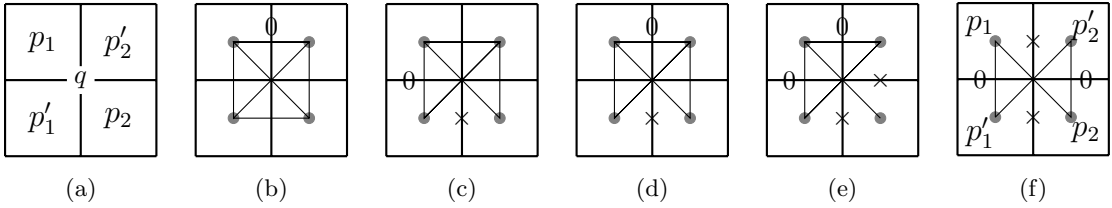
\begin{figure}[htb]
    \centering
    \begin{subfigure}[b]{0.16\textwidth}
       \begin{tikzpicture}
         \draw[step=1cm,black,thick] (0,0) grid (2,2);
            \filldraw [white] 
            (1,1) circle [radius=5pt];
            \draw(1,1) node{$q$};
            \draw(0.5,0.5) node{\large $p_1'$};
            \draw(0.5,1.5) node{\large $p_1$};
            \draw(1.5,0.5) node{\large $p_2$};
            \draw(1.5,1.5) node{\large $p_2'$};
        \end{tikzpicture}
        \centering\caption{}
    \end{subfigure}\hfill
    \begin{subfigure}[b]{0.16\textwidth}
    \begin{tikzpicture}
            \draw[step=1cm,black,thick] (0,0) grid (2,2);
            \filldraw [gray] 
            (0.5,0.5) circle [radius=2pt]
            (0.5,1.5) circle [radius=2pt]
            (1.5,0.5) circle [radius=2pt]
            (1.5,1.5) circle [radius=2pt];
            \draw (0.5,0.5) -- (1.5,0.5) -- (1.5,1.5) -- (0.5,1.5) -- (0.5,0.5) -- (1.5,1.5) -- (0.5,1.5) -- (1.5,0.5);
            \draw(1,1.7) node{$0$};
        \end{tikzpicture}
        \centering\caption{ }
    \end{subfigure}\hfill
    \begin{subfigure}[b]{0.16\textwidth}
       \begin{tikzpicture}
            \draw[step=1cm,black,thick] (0,0) grid (2,2);
            \filldraw [gray] 
            (0.5,0.5) circle [radius=2pt]
            (0.5,1.5) circle [radius=2pt]
            (1.5,0.5) circle [radius=2pt]
            (1.5,1.5) circle [radius=2pt];
            \draw (0.5,0.5) -- (1.5,1.5) -- (0.5,1.5) -- (0.5,0.5) -- (1.5,1.5) -- (0.5,1.5) -- (1.5,0.5) -- (1.5,1.5); 
            \draw(0.3,1) node{$0$};
            \draw(1,0.5) node{$\times$};
        \end{tikzpicture}     \centering\caption{}
    \end{subfigure}
    \begin{subfigure}[b]{0.16\textwidth}
       \begin{tikzpicture}
            \draw[step=1cm,black,thick] (0,0) grid (2,2);
            \filldraw [gray] 
            (0.5,0.5) circle [radius=2pt]
            (0.5,1.5) circle [radius=2pt]
            (1.5,0.5) circle [radius=2pt]
            (1.5,1.5) circle [radius=2pt];
            \draw (0.5,0.5) -- (1.5,1.5) -- (0.5,1.5) -- (0.5,0.5) -- (1.5,1.5) -- (0.5,1.5) -- (1.5,0.5) -- (1.5,1.5); 
            \draw(1,1.7) node{$0$};
            \draw(1,0.5) node{$\times$};
        \end{tikzpicture}     \centering\caption{}
    \end{subfigure}\hfill
        \begin{subfigure}[b]{0.16\textwidth}
       \begin{tikzpicture}
            \draw[step=1cm,black,thick] (0,0) grid (2,2);
            \filldraw [gray] 
            (0.5,0.5) circle [radius=2pt]
            (0.5,1.5) circle [radius=2pt]
            (1.5,0.5) circle [radius=2pt]
            (1.5,1.5) circle [radius=2pt];
            \draw (0.5,0.5) -- (1.5,1.5) -- (0.5,1.5) -- (0.5,0.5) -- (1.5,1.5) -- (0.5,1.5) -- (1.5,0.5); 
            \draw(1,1.7) node{$0$};
            \draw(0.3,1) node{$0$};    
            \draw(1,0.5) node{$\times$};
            \draw(1.5,1) node{$\times$};
        \end{tikzpicture}     \centering\caption{}
        \end{subfigure}\hfill
                \begin{subfigure}[b]{0.16\textwidth}
       \begin{tikzpicture}
            \draw[step=1cm,black,thick] (0,0) grid (2,2);
            \filldraw [gray] 
            (0.5,0.5) circle [radius=2pt]
            (0.5,1.5) circle [radius=2pt]
            (1.5,0.5) circle [radius=2pt]
            (1.5,1.5) circle [radius=2pt];
            \draw (0.5,0.5) -- (1.5,1.5) -- (1.5,0.5) -- (0.5,1.5) -- (0.5,0.5); 
            \draw(0.3,1) node{$0$};
            \draw(1.7,1) node{$0$};    
            \draw(1,0.5) node{$\times$};
            \draw(1,1.5) node{$\times$};

            \draw(0.3,0.3) node{\large $p_1'$};
            \draw(0.3,1.7) node{\large $p_1$};
            \draw(1.7,0.3) node{\large $p_2$};
            \draw(1.7,1.7) node{\large $p_2'$};
        \end{tikzpicture}     \centering\caption{}
        \end{subfigure}
        	\caption{Relationships of factors on $q$. The classification is organized in the increasing order of $\#_q \times$. (b) Case where $\#_q\times=0,\#_q \underline{0}=1$. The cases where $\#_q\times=0,\#_q \underline{0}\geq 1$ can be handled in the same way.  (c)(d) Case when $\#_q\times=1,\#_q \underline{0}= 1.$ Cases when $\#_q\times=1,\#_q \underline{0}\geq 1$ can be handled in the same way. (e)(f) Cases when  $\#_q \times=2,\#_q \underline{0}=2.$  }\label{fig:pattern}
\end{figure}

\noindent\textbf{Claim 2: The conditions (1)(2)(3) in  Lemma \ref{lem:all_regular} hold.}
Consider arbitrary qudit $q$,
let $\cA$ be the $C^*$-algebra generated by $\{p^q\}_{p\in\{p_1,p_2,p_1',p_2'\}}$. 
Since $q$ is not separable,
by lemma \ref{lem:C}, we know the
the center of $\cA$, i.e.$\cZ(\cA)$, must be trivial. Besides,  lemma \ref{lem:C} also implies $\cH^q=\cH_1\otimes \cH_2$, $\cA=\cL(\cH_1)\otimes \cI_{\cH_2}$. Since $q$ is not separable, we must further have $\cH_2$ is of dimension 
$1$, thus $\cA=\cL(\cH^q)$ (condition (3)).
Otherwise $q$ is again separable by considering $\cH^q = \bigoplus_i \cH_1\otimes \ket{\phi_i}\bra{\phi_i}$ where $\{\phi_i\}$ is a basis of $\cH_2$. 

Since Claim 1 is true, all terms are commuting in a regular way. This means all factors are either anti-commuting or commuting (condition (2)), thus $\forall$ term $p$, $(p^q)^2$ commute with every factor, i.e. $(p^q)^2\in \cZ(\cA)$. Since we already argued that $\cZ(\cA)$ must be trivial, we have $(p^q)^2=c_{pq}I$ for some constant $c_{pq}$ (condition (1)).
\end{proof}

\cite{bravyi2003commutative} showed 
that for qudit-factorized-CHP\footnote{Recall that CHP represents commuting Hamiltonian problem where the Hamiltonian might not be local.}, if all terms commute in a regular way, then one can transform the Hamiltonian into a qubit stabilizer Hamiltonian. Especially, they proved the following lemma.
\begin{lemma}[Lemma 12 of~\cite{bravyi2003commutative}]\label{lem:stabilizer}
    Let $\cH^q_*$ be a Hilbert space, $G_1,...,G_r\in\cL(\cH^q_*)$ be Hermitian operators such that 
    $$
    G_a^2=I,G_aG_b=\pm G_bG_a, \forall a,b\in\{1,...,r\}.
    $$
    and such that the algebra generated by $G_1,...,G_r$ coincides with $\cL(\cH^q_*)$. Then there exists an integer $n$, a tensor product structure $\cH^q_*=(\bC^2)^{\otimes n}$ and a unitary operator $U^q\in \cL(\cH^q_*)$ such that $U^qG_aU^{q\dagger}$ is a tensor of Pauli operators and the identity (up to sign) for all $a\in\{1,...,r\}$. Here $n$ may be equal to $0$.
\end{lemma}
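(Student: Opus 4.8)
The plan is to translate the hypotheses into the linear algebra of an alternating bilinear form over $\mathbb{F}_2$, and then to recognize the algebra generated by $G_1,\dots,G_r$ as a tensor power of $2\times 2$ matrix algebras. First I would form the group $\cG\subseteq\cL(\cH^q_*)$ generated by $iI$ together with $G_1,\dots,G_r$. Since each $G_a$ is a Hermitian involution and distinct $G_a,G_b$ commute or anticommute, every element of $\cG$ has the form $i^k M$ with $M$ a product of the $G_a$; such an $M$ is unitary with $M^\dagger=\pm M$, hence $M^2=\pm I$, so $\cG$ is a finite $2$-group. The quotient $\overline{\cG}:=\cG/(\cG\cap\bC I)$ is elementary abelian, since all commutators in $\cG$ are scalars and every element squares to a scalar; thus $\overline{\cG}\cong\mathbb{F}_2^{\,d}$ for some $d$, and the commutation pattern descends to a well-defined alternating form $\beta$ on $\overline{\cG}$ via $gh=(-1)^{\beta(\bar g,\bar h)}hg$.

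Next I would check that $\beta$ is nondegenerate: a class $\bar g$ in its radical has a representative commuting with all the $G_a$, hence with the whole algebra $\cL(\cH^q_*)$ they generate, hence scalar, so $\bar g=\bar 1$. Thus $d=2n$ is even, and I can pick a symplectic basis $\bar a_1,\bar b_1,\dots,\bar a_n,\bar b_n$ of $\overline{\cG}$ with $\beta(\bar a_i,\bar b_j)=\delta_{ij}$ and all other pairings zero. For each basis vector I would lift it to $\cG$ and multiply by the appropriate fourth root of unity to obtain a \emph{Hermitian} operator squaring to $I$; this is possible because for a product $M$ of the $G_a$ exactly one of $M$, $iM$ (up to sign) is a Hermitian involution. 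Call the results $A_1,B_1,\dots,A_n,B_n$: they are Hermitian involutions with $A_iB_i=-B_iA_i$, all other pairs commuting, and --- since the $\mathbb{F}_2$ change of basis is invertible --- they still generate $\cL(\cH^q_*)$.

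For each $i$, the anticommuting Hermitian involutions $A_i,B_i$ generate a copy of $\cL(\bC^2)$ (with basis $I,A_i,B_i,iA_iB_i$) in which they play the roles of $X$ and $Z$; the $n$ subalgebras so obtained commute pairwise, so the algebra they jointly generate is their tensor product $\cL(\bC^2)^{\otimes n}\cong\cL\big((\bC^2)^{\otimes n}\big)$. Since this equals $\cL(\cH^q_*)$, a dimension count forces $\dim\cH^q_*=2^n$, which gives the tensor product structure $\cH^q_*=(\bC^2)^{\otimes n}$ (the degenerate case $n=0$ being $\dim\cH^q_*=1$). A full matrix algebra has a unique irreducible $*$-representation up to unitary equivalence, so I can choose a unitary $U^q$ identifying $\cH^q_*$ with $(\bC^2)^{\otimes n}$ under which $U^qA_iU^{q\dagger}$ and $U^qB_iU^{q\dagger}$ are the standard Pauli $X$ and $Z$ on the $i$-th tensor factor. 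Finally, the image of $G_a$ in $\overline{\cG}$ is an $\mathbb{F}_2$-combination of the $\bar a_i,\bar b_i$, so $G_a$ is a scalar times a product of the $A_i$'s and $B_i$'s; conjugating by $U^q$ turns this into a phase times a Pauli string, and since $G_a$ and that Pauli string are both Hermitian the phase is $\pm 1$.

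The main obstacle I expect is the bookkeeping in the two lifting steps: choosing, coherently across all $2n$ basis vectors at once, representatives of the symplectic basis that are genuine Hermitian involutions realizing exactly the prescribed (anti)commutation pattern while tracking the $i^k$ phases; and producing the explicit unitary $U^q$ realizing the abstract isomorphism $\cL(\cH^q_*)\cong\cL\big((\bC^2)^{\otimes n}\big)$ in standard Pauli form, which one can do either by invoking uniqueness of the irreducible representation of a full matrix algebra or by an induction on $n$ that peels off one tensor factor at a time. The remaining ingredients --- nondegeneracy of $\beta$, the tensor factorization, the dimension count, and the reality of the final phases --- are routine.
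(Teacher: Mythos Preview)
The paper does not prove this lemma at all; it is quoted verbatim as Lemma~12 of Bravyi--Vyalyi~\cite{bravyi2003commutative} and used as a black box in the proof of Theorem~\ref{thm:weaker}. Your argument is correct and is essentially the standard proof (and, as far as I can tell, the one in the original reference): encode the (anti)commutation pattern as an alternating $\mathbb{F}_2$-form on $\cG/\text{scalars}$, use the full-algebra hypothesis to get nondegeneracy, pick a symplectic basis, lift it to Hermitian involutions realizing $n$ commuting copies of the Pauli algebra, and then identify $\cH^q_*$ with $(\bC^2)^{\otimes n}$ via uniqueness of the irreducible representation of a full matrix algebra. The bookkeeping you flag (choosing Hermitian lifts and tracking the $i^k$ phases) is genuinely the only place one has to be careful, and your observation that exactly one of $M,iM$ is a Hermitian involution handles it cleanly.
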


Finally, we are prepared to prove the main theorem in this section.
\begin{theorem}\label{thm:weaker}
Consider a qudit-CLHP-2D-factorized Hamiltonian $H=\sum_p p$ on $n$-qudit space. If there are no separable qudits, then $H$ is equivalent to qubit stabilizer Hamiltonian on $\cH$.
\end{theorem}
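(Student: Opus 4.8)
The plan is to reduce the statement to the single-qudit stabilizer normal form of Lemma~\ref{lem:stabilizer}, after first extracting the necessary structure from Lemma~\ref{lem:all_regular}. First I would apply Lemma~\ref{lem:all_regular}: since there are no separable qudits, all terms $\{p\}_p$ commute in a regular way, and for every qudit $q$ the factors of the (at most four) terms acting on $q$ satisfy $(p^q)^2 = c_{pq} I_q$, pairwise commute or anticommute, and generate the full algebra $\cL(\cH^q)$. A plaquette term that is identically zero --- equivalently, one whose factor on some qudit is $0$ --- may be dropped at the start, as it changes neither $H$ nor the desired equivalence; for a nonzero factor $p^q$, Hermiticity forces $c_{pq} > 0$.

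Next I would pass to normalized generators. For each qudit $q$ and each term $p$ with $p^q \neq 0$, set $G^q_p := p^q / \sqrt{c_{pq}}$, a Hermitian operator with $(G^q_p)^2 = I_q$; by Lemma~\ref{lem:all_regular}(2) any two such operators commute or anticommute, and by Lemma~\ref{lem:all_regular}(3) they still generate $\cL(\cH^q)$ (adjoining or deleting zero factors does not affect the generated algebra). Lemma~\ref{lem:stabilizer} then supplies an integer $m^q$, a tensor-product decomposition $\cH^q = (\bC^2)^{\otimes m^q}$, and a unitary $U^q \in \cL(\cH^q)$ such that each $U^q G^q_p U^{q\dagger}$ is, up to a sign, a tensor product of single-qubit Pauli operators and identities; the case $m^q = 0$ (a one-dimensional $\cH^q$) is permitted and poses no problem.

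Finally I would globalize. Put $U := \bigotimes_q U^q$ and write each plaquette term as $p = \bigotimes_q p^q$, with $p^q = I_q$ off the support of $p$ and $p^q = \sqrt{c_{pq}}\, G^q_p$ on it. Then $U p U^\dagger = \bigl(\pm \prod_{q \in \mathrm{supp}(p)} \sqrt{c_{pq}}\bigr)\, \bigotimes_q \bigl(U^q G^q_p U^{q\dagger}\bigr)$ is a (nonzero real) scalar times a tensor product of Paulis with respect to the qubit structures $\cH^q = (\bC^2)^{\otimes m^q}$. Thus, after the change of basis $U$, every term of $H$ acts as a Pauli operator up to a phase, which is precisely the assertion that $H$ is equivalent to a qubit stabilizer Hamiltonian on $\cH$ in the sense of Definition~\ref{def:equi_stab} (taking $\cH^q_* = \cH^q$).

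I expect no serious obstacle here: once Lemma~\ref{lem:all_regular} is available, the argument is essentially bookkeeping, and the only points needing care are the positivity $c_{pq} > 0$ (so that the square roots make sense), the innocuous handling of zero terms and trivial factors when invoking Lemma~\ref{lem:stabilizer}, and checking that the residual scalar $\pm\prod_q \sqrt{c_{pq}}$ is admissible under the ``up to phases'' clause --- it is a real scalar, forced by the Hermiticity of $p$ and of the resulting tensor of Paulis. The genuinely hard work, namely excluding the singular commuting patterns of Fig.~\ref{fig:pattern} and establishing properties (1)--(3), has already been discharged in Lemma~\ref{lem:all_regular}.
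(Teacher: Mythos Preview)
Your proposal is correct and follows essentially the same route as the paper: invoke Lemma~\ref{lem:all_regular}, drop zero terms, normalize each nonzero factor so that it squares to the identity, and then apply Lemma~\ref{lem:stabilizer} qudit by qudit. Your write-up is in fact slightly more careful than the paper's (you use the correct normalization $p^q/\sqrt{c_{pq}}$ and justify $c_{pq}>0$), but the argument is the same.
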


\begin{proof}
 To match the notations in Definition \ref{def:equi_stab}, we denote the Hilbert space of qudit $q$ as $\cH_*^q$, the $n$-qudit space as $\cH_*:=\otimes_q \cH_*^q$. 
Consider a qudit-CLHP-2D-factorized $H=\sum_p p$. W.o.l.g assume that all $p$ are non-zero.
Note that since $p$ is Hermitian, if $p^2=0$, then $p=0$. 
Consider arbitrary qudit $q$,
use notations in Lemma \ref{lem:all_regular} we know $c_{pq}\neq 0, \forall p,q$.
 For every $p$, define $\tilde{p}$ be a normalized version of $p$, that is $\tilde{p}=\otimes_q \tilde{p}^q$ where $\tilde{p}^q =p^q/c_{pq}$. For any qudit $q$, view $...,\tilde{p}^q,...$ as $...,G_a,...$. By Lemma \ref{lem:all_regular}, one can check that $\{\tilde{p}^q\}_p$ satisfies the condition of Lemma \ref{lem:stabilizer}. Thus by choosing appropriate basis of each qudit, $\cH^q_*$ will have a factorized structure as $({\bC^2})^{\otimes m^q}$ for integer $m^q$, and $\{\tilde{p}^q\}$ are tensor of Pauli operators up to sign. 
Thus $p^q$ acts as a Pauli operator on this basis, up to phases. Thus $H$ is equivalent to qubit stabilizer Hamiltonian $H$ by definition.

\end{proof}

\subsection{The full version}\label{sec:strong}

Finally, we remove the constraints of no separable qudits in Theorem \ref{thm:weaker} and prove the following.   
\begin{theorem}\label{thm:stronger}
Any qudit-CLHP-2D-factorized Hamiltonian $H=\sum_p p$ is equivalent to a direct sum of qubit stabilizer Hamiltonian.\end{theorem}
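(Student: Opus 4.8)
The plan is to reduce the general case to Theorem \ref{thm:weaker} by a recursive partitioning of the $n$-qudit space into simple subspaces, each of which carries a restricted Hamiltonian with no separable qudits. First I would observe that if $H=\sum_p p$ has \emph{no} separable qudits, then Theorem \ref{thm:weaker} already gives the conclusion, since ``equivalent to a qubit stabilizer Hamiltonian'' is the single-block special case of ``equivalent to a direct sum of qubit stabilizer Hamiltonians'' (take $P=\{*\}$ and $\cH_*=\cH$). So assume there is a separable qudit $q$, with a non-trivial decomposition $\cH^q=\bigoplus_{j=1}^m \cH^q_j$ ($m\ge 2$) kept invariant by every term $p$. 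I would then partition $\cH=\otimes_{q'}\cH^{q'}$ into the simple subspaces $\cH_j := \cH^q_j \otimes \bigl(\otimes_{q'\ne q}\cH^{q'}\bigr)$ for $j=1,\dots,m$; these are pairwise orthogonal, their direct sum is $\cH$, and since each $p$ keeps $\cH^q_j$ invariant (hence keeps $\cH_j$ invariant), the restricted Hamiltonian $H|_{\cH_j}=\sum_p (p|_{\cH_j})$ is again a commuting factorized Hamiltonian on the 2D lattice, now with qudit $q$ replaced by a particle of strictly smaller dimension $\dim(\cH^q_j)<\dim(\cH^q)$. Note the restricted terms $p|_{\cH_j}=\Pi_j p \Pi_j$ are still factorized because $\Pi_j$ acts only on qudit $q$, so $\Pi_j p \Pi_j = p^{q_1}\otimes\cdots\otimes(\Pi_j p^q \Pi_j)\otimes\cdots$ remains a tensor product of single-qudit Hermitian operators; commutativity is inherited since $[\Pi_j p \Pi_j, \Pi_j \hat p \Pi_j] = \Pi_j [p,\hat p]\Pi_j = 0$ using $[\Pi_j,p]=[\Pi_j,\hat p]=0$.

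Next I would set up the recursion. Define a potential function, e.g. $\Phi(H) := \sum_{q'} \bigl(\dim(\cH^{q'})-1\bigr)$ over all qudits carrying a non-trivial Hilbert space; each partitioning step strictly decreases $\Phi$ for each of the $m$ child instances (the dimension of $q$ strictly drops and no other dimension increases), and $\Phi$ is a non-negative integer, so the recursion terminates after finitely many steps. Recursively partition each $\cH_j$ whose restricted Hamiltonian still has a separable qudit; when the recursion bottoms out we obtain a finite collection of pairwise-orthogonal simple subspaces $\{\cH_*\}_{*\in P}$ with $\cH=\bigoplus_{*\in P}\cH_*$ (a tree of nested simple-subspace decompositions; leaves partition the root), each $p$ keeping every $\cH_*$ invariant, and the restriction $H|_{\cH_*}$ having no separable qudit. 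Here one should be a little careful: ``separable'' for the restricted instance is defined relative to the restricted terms $\{p|_{\cH_*}\}_p$, which is exactly what the recursion tests, so the leaf instances genuinely satisfy the hypothesis of Theorem \ref{thm:weaker}. I would also remark that a qudit whose restricted space has dimension $1$ trivially cannot be separable (no non-trivial decomposition exists), so such qudits cause no issue.

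Finally I would apply Theorem \ref{thm:weaker} to each leaf instance: $H|_{\cH_*}$ is equivalent to a qubit stabilizer Hamiltonian on $\cH_* = \otimes_q \cH^q_*$, meaning each $\cH^q_*\sim(\bC^2)^{\otimes m^q_*}$ in an appropriate basis and each $p|_{\cH_*}$ acts as a Pauli operator up to a phase. Collecting these over all $*\in P$ verifies Definition \ref{def:equi_stab_gs} verbatim, so $H$ is equivalent to a direct sum of qubit stabilizer Hamiltonians. The only real subtlety — and the step I would be most careful about — is the bookkeeping that the restricted terms stay factorized, commuting, and 2D-local (all immediate because $\Pi_j$ is supported on a single qudit, invoking Lemma \ref{lem:local}-style reasoning), together with checking that ``no separable qudit'' is genuinely achieved at the leaves rather than merely ``no separable qudit of full dimension''; the potential-function argument handles termination cleanly. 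No genuinely hard new idea is needed here — Theorem \ref{thm:weaker} and Lemma \ref{lem:all_regular} do the heavy lifting — so the proof is essentially a clean inductive wrapper.
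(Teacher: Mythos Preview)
Your proposal is correct and follows essentially the same approach as the paper: recursively partition along separable qudits to build a tree whose leaves are simple subspaces with no separable qudits, then invoke Theorem \ref{thm:weaker} on each leaf. You supply a bit more detail than the paper does (the potential function for termination, the explicit verification that restricted terms remain factorized and commuting), but the argument is the same.
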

\begin{proof}
Denote the space that $H$ acting on as $\cH=\bigotimes_q\cH^q$.
    Recall that if a qudit $q$ is separable with respect to decomposition $\cH^q=\bigoplus_i \cH_i^q$, for any chosen index $i$ we can restrict all terms on this subspace, and get a new instance of qudit-CLHP-2D-factorized. 
    If we repetitively perform this restriction whenever this is a separable qudit, after polytime  we will reach the case with no separable qudits. 
   
    To prove theorem \ref{thm:stronger}, it suffices to imagine the
restricting process as a decision tree. Specifically, we write down a root node and define the space of the root node  as $\cH$, and repeat the following process: Transverse all the  leaf nodes. Denote the leaf node considered currently as $*$, and its space as $\cH_*$. If $H$ restricting on $\cH_*$ has separable qudits, choose an arbitrary such separable qudit. Denote this qudit as $q$ and the corresponding decomposition as  $\cH^q=\bigoplus_i\cH^q_i$.
For every $i$, we build a child node $w_i$ to $*$, and define the space of $w_i$ as restricting $\cH^q$ to $\cH_i^q$ in $\cH_*$. We repeat this process until for every leaf node, there are no separable qudits.  
In the final tree, every leaf node $*$ corresponds to a simple subspace $\cH_*$. By the definition of the tree, we know $\{\cH_*\}_*$ are orthogonal to each other and $\cH=\bigoplus_* \cH_*$.
By Theorem \ref{thm:weaker},  $\cH$ is equivalent to qubit stabilizer Hamiltonian on $\cH_*$. Thus we prove that  $H$ is equivalent to a direct sum of qubit stabilizer Hamiltonian.
\end{proof}

\begin{corollary}
	Qudit-CLHP-2D-factorized is in $\NP$.
\end{corollary}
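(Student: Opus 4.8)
The plan is to read off $\NP$-membership directly from Theorem~\ref{thm:stronger} by exhibiting a polynomial-size witness and an efficient verifier. Fix an instance $(H,a,b)$ of qudit-CLHP-2D-factorized with $H=\sum_p p$ acting on $\cH=\bigotimes_q\cH^q$ and $b-a\geq 1/poly(n)$; note the per-qudit dimensions, though allowed to be arbitrary finite, are bounded by the input size. By Theorem~\ref{thm:stronger} there are mutually orthogonal simple subspaces $\{\cH_*=\bigotimes_q\cH^q_*\}_{*\in P}$ with $\cH=\bigoplus_{*\in P}\cH_*$, each kept invariant by every term $p$, such that $H$ restricted to $\cH_*$ is equivalent to a qubit stabilizer Hamiltonian in the sense of Definition~\ref{def:equi_stab}. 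Since the $\cH_*$ are orthogonal and invariant, $\lambda(H)=\min_{*\in P}\lambda(H|_{\cH_*})$; so in a ``yes'' instance some $*$ has $\lambda(H|_{\cH_*})\leq a$, and in a ``no'' instance every $*$ has $\lambda(H|_{\cH_*})\geq b$. The prover will point the verifier to one such $\cH_*$ and supply enough structure to certify $\lambda(H|_{\cH_*})\leq a$.

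Recall that the subspaces $\{\cH_*\}_*$ in the proof of Theorem~\ref{thm:stronger} are the leaves of a decision tree, each internal node of which restricts one qudit's Hilbert space to one block of a separable-qudit decomposition. The first part of the witness is a single root-to-leaf path: a list of triples $(q,\{\Pi^q_i\}_i,i_0)$, where $\{\Pi^q_i\}_i$ are the projectors of a decomposition $\cH^q=\bigoplus_i\cH^q_i$ and $i_0$ is the chosen block. Each step strictly decreases one qudit dimension, so the path has length at most $\sum_q(\dim\cH^q-1)$, polynomial in the input size. The verifier replays the path: at each step it checks that every (factorized) term keeps the claimed decomposition invariant, i.e.\ that the relevant single-qudit factor $h^q$ satisfies $(I-\Pi^q_{i})h^q\Pi^q_{i}=0$ for all $i$ -- a polynomial-size linear-algebra check -- and then replaces each term by its compression $\Pi^q_{i_0}h^q\Pi^q_{i_0}$ to the chosen block. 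This is again a factorized, commuting, 2D-local term, so after the path is consumed the verifier holds the restricted instance $H|_{\cH_*}=\sum_p p|_{\cH_*}$ on $\cH_*=\bigotimes_q\cH^q_*$, with commuting factorized terms.

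The second part of the witness certifies that $H|_{\cH_*}$ behaves like a qubit stabilizer Hamiltonian with energy at most $a$. For each qudit $q$ the prover supplies a unitary $U^q$ on $\cH^q_*$; by Theorem~\ref{thm:weaker} (which applies at the leaf, since it has no separable qudits) and Lemma~\ref{lem:stabilizer} an appropriate $U^q$ exists under which $\cH^q_*\cong(\bC^2)^{\otimes m^q}$ and every compressed factor becomes a Pauli string times a scalar. The verifier conjugates each compressed factor by $U^q$ and checks that the outcome is a scalar multiple of a Pauli string; if so, $p|_{\cH_*}$ becomes $\alpha_p Q_p$ on $(\bC^2)^{\otimes M}$ with $M=\sum_q m^q$, where $Q_p$ is a Hermitian Pauli and $\alpha_p\in\bR$, both read off in polynomial time. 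Finally the prover supplies a stabilizer state $\ket{\phi}$ on $(\bC^2)^{\otimes M}$ (via a generating set of its stabilizer group), the verifier computes $\sum_p\alpha_p\bra{\phi}Q_p\ket{\phi}$ -- efficient, since $\bra{\phi}Q_p\ket{\phi}\in\{+1,-1,0\}$ according to whether $Q_p$, $-Q_p$, or neither lies in the stabilizer group -- and accepts iff this value is at most $a$. Completeness: taking $*$ with $\lambda(H|_{\cH_*})=\lambda(H)$, the minimum of the commuting Pauli sum $\sum_p\alpha_p Q_p$ is attained on a joint eigenspace of the $Q_p$, which is a nonempty stabilizer code and hence contains a stabilizer state $\ket{\phi}$ of energy $\lambda(H)\leq a$. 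Soundness: on a ``no'' instance, for every leaf $*$ and every $\ket{\phi}$ on $\cH_*$ the energy is $\geq\lambda(H|_{\cH_*})\geq b>a$, so no witness is accepted.

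I expect essentially all the difficulty to sit upstream, in Theorem~\ref{thm:stronger} itself; granting that theorem, the only points needing care are bookkeeping ones -- that the decision-tree path stays polynomial, that compressing a tensor-product operator to a simple subspace keeps it a tensor product (hence still factorized, 2D-local, and commuting on the restriction), and that each verification step (invariance checks, Pauli-ness checks, stabilizer-state expectations) is a polynomial-time linear-algebra computation. None of these requires an idea beyond the stabilizer formalism and the structure already established, so this suffices to conclude the Corollary.
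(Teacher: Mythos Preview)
Your proof is correct and follows essentially the same approach as the paper: invoke Theorem~\ref{thm:stronger} to locate a simple invariant subspace $\cH_*$ on which $H$ is a qubit stabilizer Hamiltonian, have the prover supply the subspace data and the per-qudit unitaries $U^q$, and certify the ground energy via the stabilizer formalism. The only cosmetic differences are that the paper has the prover give $\{\cH^q_*\}_q$ directly (rather than a decision-tree path to it) and specifies the ground state by the eigenvalues $\{\lambda_h\}_h$ of each Pauli term (rather than a stabilizer generating set), but these are equivalent ways of packaging the same witness.
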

\begin{proof}
	Consider a qudit-CLHP-2D-factorized problem with Hamiltonian $H=\sum_{p} p$ and parameters $a,b$. By Theorem \ref{thm:stronger} we know there exists a $*$, where $\cH_*=\bigotimes_q \cH^q_*$ is a simple subspace, such that there is a ground state lies in $\cH_*$, and $H$ is equivalent to qubit stabilizer Hamiltonian on $\cH_*$. The $\NP$ prover is supposed to provide the subspace $\{\cH^q_*\}_q$, and provide the qudit unitary $U^q$ in Theorem \ref{lem:stabilizer} for each $q$. Using that information, the verifier firstly checks that all terms $\{p\}_p$ keep the subspaces $\{\cH^q_*\}_q$ invariant. Then the verifier
	uses polynomial time to  transform $\cH^q_*$ to be tensor of qubit space, i.e.$(\bC^2)^{\otimes m_q}$, and transform $H$ on $\cH_*$ to be a summation of Pauli operators up to phases, denoted as $H|_*=\sum_h a_hh$. Here $h$ is a Pauli operator.
	
	Then the verifier is going to verify $\lambda(H|_*)\leq a$.
	Since $\{a_hh\}_h$ are commuting, there is a ground state which is the common eigenstate of every $h$, denote the corresponding eigenvalue as $\lambda_{h}$. The prover is supposed to provide such $\{\lambda_h\}_h$. The verifier verifies that $\sum_h a_h\lambda_h\leq a$, and verifies there is a state which is the common 1-eigenstate of commuting Pauli operators $\{h/\lambda_h\}$. The common 1-eigenstate verification can be done in polynomial time by standard stabilizer formalism. Note that although we describe the prover in an interactive way, he can in fact send all the witnesses at the same time. 
	\end{proof}
\appendix

\section{Acknowledgement}
Thanks Thomas Vidick and Daniel Ranard for helpful discussion.

\section{Relationship between general case and projection case}\label{appendix:equi_proj}

\begin{lemma}\label{lem:qudit_proj_redu}
    If qudit-CLHP-2D-projection is in $\NP$, then qudit-CLHP-2D is in $\NP$.
\end{lemma}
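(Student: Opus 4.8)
The plan is to show that an $\NP$ oracle for qudit-CLHP-2D-projection can decide an arbitrary qudit-CLHP-2D instance $(H = \sum_p p, a, b)$ with $b - a \geq 1/\mathrm{poly}(n)$. The idea is exactly the reduction sketched in the introduction of the paper: the prover first supplies a classical description of the per-term eigenvalue data, and the verifier uses it to build a projection instance whose frustration-freeness is equivalent to the original having ground energy at most $a$.

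More concretely, first I would have the $\NP$ prover send, for each plaquette term $p$, a real eigenvalue $\lambda_p$ of $p$ together with enough data to identify the projector $\Pi_{p,\lambda_p}$ onto the corresponding eigenspace (e.g. the eigenvalue, since the verifier can diagonalize the constant-size Hermitian matrix $p$ itself in polynomial time). The verifier checks that each $\lambda_p$ is actually an eigenvalue of $p$ and that $\sum_p \lambda_p \leq a$. Then the verifier forms the new terms $\hat p := I - \Pi_{p,\lambda_p}$. Since the original $\{p\}_p$ pairwise commute, and $\Pi_{p,\lambda_p}$ is a polynomial in $p$ (spectral projector), the $\{\hat p\}_p$ are again pairwise commuting projections; moreover each $\hat p$ still acts only on the four qudits of plaquette $p$, so $\hat H := \sum_p \hat p$ is a legitimate qudit-CLHP-2D-projection instance. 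The key observation is that a state $\ket\psi$ satisfies $\hat p \ket\psi = 0$ for all $p$ if and only if $p\ket\psi = \lambda_p \ket\psi$ for all $p$. Since the $\hat p$ are commuting projectors, $\lambda(\hat H)$ is a non-negative integer, so the promise gap becomes $1$: $\lambda(\hat H) = 0$ or $\lambda(\hat H) \geq 1$. Hence the verifier invokes the qudit-CLHP-2D-projection oracle on $(\hat H, 0, 1)$.

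For correctness: if $\lambda(H) \leq a$, then $H$ has an eigenstate $\ket\psi$ with $\sum_p \langle\psi|p|\psi\rangle \leq a$; since all $p$ commute, there is a simultaneous eigenbasis, so we may take $\ket\psi$ to be a common eigenvector with eigenvalues $\lambda_p$ satisfying $\sum_p \lambda_p \leq a$ — the honest prover sends exactly these, and then $\lambda(\hat H) = 0$, so the oracle accepts. Conversely, suppose the oracle accepts, i.e. for the claimed $\{\lambda_p\}_p$ with $\sum_p \lambda_p \leq a$ we have $\lambda(\hat H) = 0$; then there is $\ket\psi$ with $p\ket\psi = \lambda_p\ket\psi$ for all $p$, hence $\langle\psi|H|\psi\rangle = \sum_p \lambda_p \leq a$, so $\lambda(H) \leq a < b$, and by the promise $\lambda(H) \leq a$. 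Thus the whole procedure is sound and complete, and it runs in polynomial time with a polynomial-size witness (the $\{\lambda_p\}_p$, plus the witness forwarded to the projection oracle).

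The only mild subtlety — and the step I'd be most careful about — is verifying that $\{\hat p\}_p$ genuinely remain commuting projections with the same locality, since this is what makes $(\hat H, 0, 1)$ a valid input to the qudit-CLHP-2D-projection oracle. This follows because $\Pi_{p,\lambda_p} = g_p(p)$ for the interpolating polynomial $g_p$ that is $1$ on $\lambda_p$ and $0$ on the other eigenvalues of $p$; any operator commuting with $p$ commutes with $g_p(p)$, giving $[\hat p, \hat p'] = 0$, and $g_p(p)$ acts nontrivially only where $p$ does. There is no analogue of the factorized-case obstruction here, because we are not requiring the $\hat p$ to be factorized — only commuting projectors — so the reduction goes through cleanly. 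I would remark that this is precisely why the same argument does not establish a projection reduction for the factorized variant.
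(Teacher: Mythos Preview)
Your proof is correct and follows essentially the same approach as the paper's own proof: the prover supplies per-term eigenvalues $\{\lambda_p\}_p$ with $\sum_p \lambda_p \leq a$, the verifier builds the projection terms $\hat p = I - \Pi_{p,\lambda_p}$, and then invokes the projection-case oracle. Your write-up is in fact more thorough than the paper's, explicitly justifying commutativity of the $\hat p$ via functional calculus, preservation of locality, and both directions of correctness.
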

\begin{proof}
	Consider a qudit-CLHP-2D$(H,a,b)$, where $H=\sum_p p$, $a,b\in\bR$ and $b-a\geq 1/poly(n)$. Denote the ground energy $\lambda:=\lambda(H)$.	 Since $\{p\}_p$ are commuting with each other, there exists a ground state $\ket{\psi}$ and $\{\lambda_p\in\bR\}_p$ such that $p\ket{\psi}=\lambda_p \ket{\psi}$,$\forall p$ and $\sum_p \lambda_p=\lambda$. Let $\Pi_p$ be the projection onto the $\lambda_p$-eigenspace of $p$. Let $\hat{p}=I-\Pi_p$,
	 $\hat{H}=\sum_p \hat{p}$. Since $\{p\}_p$ are commuting, we know that $\{\hat{p}\}_p$ are also commuting.

	 The $\NP$ prover is supposed to  list such $\{\lambda_p\}_p$, then the verifier can check 
$\sum_p \lambda_p<a$ and 	 
	 compute $\hat{p}$ and $\hat{H}$. Then the prover is supposed to prove	  qudit-CLHP-2D-projection$(\hat{H})$ is a Yes instance --- that is proving
	 there  exists $\ket{\psi}$ such that $\hat{p} \ket{\psi}=0,\forall \hat{p}$. Thus if qudit-CLHP-2D-projection is in $\NP$, then qudit-CLHP-2D is in $\NP$.
\end{proof}

\begin{corollary}\label{cor:proj_redu}
	  If  the qutrit-CLHP-2D-projection is in $\NP$, then  the qutrit-CLHP-2D is in $\NP$.
\end{corollary}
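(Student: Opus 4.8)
The plan is to obtain Corollary \ref{cor:proj_redu} as an immediate specialization of Lemma \ref{lem:qudit_proj_redu}, after observing that the reduction used in the proof of that lemma keeps every particle a qutrit. Recall that a qutrit-CLHP-2D instance is nothing but a qudit-CLHP-2D instance in which every $\cH^q$ has dimension $1$, $2$, or $3$, and similarly for the projection variants. So it suffices to check that the map sending a qudit-CLHP-2D instance to the qudit-CLHP-2D-projection instance constructed in Lemma \ref{lem:qudit_proj_redu} sends qutrit instances to qutrit instances.

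First I would unwind that reduction. Given a qutrit-CLHP-2D instance $(H=\sum_p p, a, b)$ with the $\{p\}_p$ pairwise commuting, the $\NP$ prover supplies a vector of eigenvalues $\{\lambda_p\}_p$ of the individual plaquette terms with $\sum_p \lambda_p \le a$; the verifier forms $\Pi_p$, the projector onto the $\lambda_p$-eigenspace of $p$, sets $\hat p = I - \Pi_p$ and $\hat H = \sum_p \hat p$, and the remaining task is the qudit-CLHP-2D-projection question for $\hat H$ (with threshold $0$ versus $1$). The key point is purely bookkeeping: each $\hat p$ acts on exactly the same four-vertex plaquette Hilbert space as $p$, since we never split, shrink, or enlarge any $\cH^q$. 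Hence if every $\cH^q$ had dimension at most $3$ in the input, the same holds in $\hat H$, so a qutrit-CLHP-2D instance is mapped to a qutrit-CLHP-2D-projection instance. The remaining properties needed — that the $\hat p$ are commuting plaquette projections, that the promise gap of $\hat H$ is $1$ because commuting projectors have integer eigenvalues, and that ``yes''/``no'' instances are preserved — are established verbatim inside the proof of Lemma \ref{lem:qudit_proj_redu} and need no modification.

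Putting this together: assuming qutrit-CLHP-2D-projection $\in \NP$, an $\NP$ verifier for qutrit-CLHP-2D first receives $\{\lambda_p\}_p$, checks in polynomial time that $\sum_p \lambda_p \le a$ and that each $\lambda_p$ is an eigenvalue of the corresponding $p$, builds $\hat H$, and then runs the assumed $\NP$ verifier for qutrit-CLHP-2D-projection on $\hat H$ together with the rest of the witness. I do not expect any real obstacle here; the only thing to be careful about is making explicit that the dimension of each qudit is untouched by the reduction (so one stays within the qutrit class) and that the promise gap becomes $1$ after passing to $\hat H$, both of which are already handled by Lemma \ref{lem:qudit_proj_redu}.
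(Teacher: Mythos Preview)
Your proposal is correct and matches the paper's approach: the corollary is stated immediately after Lemma~\ref{lem:qudit_proj_redu} with no separate proof, so it is intended as a direct specialization. Your explicit observation that the reduction leaves each $\cH^q$ untouched (hence qutrit instances map to qutrit instances) is exactly the missing sentence needed to justify the specialization.
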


\section{Qudits on the vertexs or on the edges}\label{appendix:edge_vertex}

In this manuscript, we consider commuting local Hamiltonian on a 2D square lattice, where qudits are  on the vertices and Hermitian terms are on the plaquettes. There is another setting that put qudits on the edges, 
 and Hermitian terms on "plaquettes" and "stars". Specifically, 
   as shown in Fig.~\ref{fig:edge}
 for each plaquette $p$, there is a Hermitian term $B_p$ acting on the qudits on its edges, i.e. $q_1,q_2,q_3,q_4$. For each vertex $v$, consider the star consisting  of $v$ and edges adjacent to $v$, there is a Hermitian term $A_v$ acting on qudits on its edges, i.e. $q_3,q_4,q_5,q_6$.  The Hamiltonian is $H=\sum_p B_p +\sum_v A_v$. We abbreviate this setting as ``qudits on 2D edges" and the setting in Sec.~\ref{sec:prelim} as ``qudits on 2D vertices". In the following we will show that the two settings are equivalent.
 
 \begin{figure}[!ht]
	\centering
 \begin{tikzpicture}
\draw[step=1cm,black,thick] (0,0) grid (2,2);

\filldraw [black] 
(0.5,0) circle [radius=1.5pt]
(1.5,0) circle [radius=1.5pt]
(0,0.5) circle [radius=1.5pt]
(1,0.5) circle [radius=1.5pt]
(2,0.5) circle [radius=1.5pt]
(0.5,1) circle [radius=1.5pt]
(1.5,1) circle [radius=1.5pt]

(0,1.5) circle [radius=1.5pt]
(1,1.5) circle [radius=1.5pt]
(2,1.5) circle [radius=1.5pt]
(0.5,2) circle [radius=1.5pt]
(1.5,2) circle [radius=1.5pt];

\filldraw[white] 
(1,1) circle [radius=5pt];

\filldraw[white] 
(0,0.5) circle [radius=5pt];
\filldraw[white] 
(0.5,0) circle [radius=5pt];
\filldraw[white] 
(1,0.5) circle [radius=5pt];
\filldraw[white] 
(0.5,1) circle [radius=5pt];
\filldraw[white] 
(1.5,1) circle [radius=5pt];
\filldraw[white] 
(1,1.5) circle [radius=5pt];

\draw(0,0.5) node{$q_1$};
\draw(0.5,0) node{$q_2$};
\draw(1,0.5) node{$q_3$};
\draw(0.5,1) node{$q_4$};

\draw((1.5,1) node{$q_5$};
\draw(1,1.5) node{$q_6$};

\draw(0.5,0.5) node{\large $p$};
\draw(1,1) node{\large $v$};

\end{tikzpicture}
	\caption{Qudits on edges to qudits on vertices}\label{fig:edge}
\end{figure}
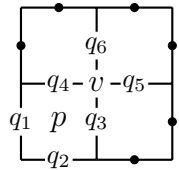

 \noindent\textbf{(1) ``qudits on 2D edges" $\Rightarrow$ ``qudits on 2D vertices".} Begin from ``qudits on 2D edges",  as shown in Fig.~\ref{fig:edge_vertex}, the qudits on the edges can in fact be viewed as qudits on the vertices of another 2D square lattice defined by the dash lines. The terms $B_p$ and $A_s$ will correspond to plaquette terms in the dashed lattice.  Thus our techniques directly apply to the setting for qudits on the edges.

\begin{figure}[!ht]
	\centering
 \begin{tikzpicture}
\draw[step=1cm,black,thick] (0,0) grid (2,2);

\filldraw [black] 
(0.5,0) circle [radius=1.5pt]
(1.5,0) circle [radius=1.5pt]
(0,0.5) circle [radius=1.5pt]
(1,0.5) circle [radius=1.5pt]
(2,0.5) circle [radius=1.5pt]
(0.5,1) circle [radius=1.5pt]
(1.5,1) circle [radius=1.5pt]

(0,1.5) circle [radius=1.5pt]
(1,1.5) circle [radius=1.5pt]
(2,1.5) circle [radius=1.5pt]
(0.5,2) circle [radius=1.5pt]
(1.5,2) circle [radius=1.5pt];

\draw[dashed] (-0.5,1) -- (1,2.5);
\draw[dashed] (0,0.5) -- (1.5,2);
\draw[dashed] (0.5,0) -- (2,1.5);
\draw[dashed] (1,-0.5) -- (2.5,1);

\draw[dashed] (-0.5,1) -- (1,-0.5);
\draw[dashed] (0,1.5) -- (1.5,0);
\draw[dashed] (0.5,2) -- (2,0.5);
\draw[dashed] (1,2.5) -- (2.5,1);
\end{tikzpicture}
	\caption{Qudits on 2D edges to qudits on 2D vertices}\label{fig:edge_vertex}
\end{figure}

\noindent\textbf{(1) ``qudits on 2D vertices" $\Rightarrow$ ``qudits on 2D edges".} Begin from ``qudits on 2D vertices",  as shown in Fig.~\ref{fig:vertex_edge}, 
the qudits on the vertices can in fact be viewed as qudits on the edges of another 2D square lattice defined by the dash lines. The plaquette  terms will correspond to plaquette and star terms in the dashed lattice. 

\begin{figure}[!ht]
	\centering
\scalebox{0.6}{ \begin{tikzpicture}
\draw[step=1cm,black,thick]  (0,0) grid (7,7);
scriptsize
\foreach \i in {2,...,5}
{
\foreach \j in {2,...,5}
{
\filldraw [black] 
(\i,\j) circle [radius=3pt]; }      
     };
     
\foreach \i in {1,3,5}
{
\draw[dashed] (\i,0) -- (7,7-\i);
\draw[dashed] (0,\i) -- (7-\i,7);  
     };

\foreach \i in {2,4,6}
{
\draw[dashed] (0,\i) -- (\i,0); 
\draw[dashed] (7-\i,7) -- (7,7-\i);     
     };

\end{tikzpicture}}
	\caption{Qudits on 2D edges to qudits on 2D vertices}\label{fig:vertex_edge}
\end{figure}



\bibliographystyle{alpha}
\bibliography{ref.bib}

\end{document}